
\documentclass[12pt]{article}
\usepackage[colorlinks, linkcolor=red, citecolor=blue]{hyperref}
\usepackage{amssymb}
\usepackage{amsmath}
\usepackage{amsthm}
\usepackage{cleveref}
\usepackage{enumerate}
\usepackage{cmll}
\usepackage{amsmath}

\allowdisplaybreaks[3]
\usepackage{bm}
\usepackage{natbib}
\usepackage{color}
\usepackage{float}
\usepackage{authblk}
\usepackage{graphicx}
\usepackage{algorithm}
\usepackage{algpseudocode}
\usepackage{epsfig}
\usepackage{dcolumn}
\usepackage{multirow}
\usepackage{cmll}
\usepackage{booktabs}
\usepackage{amsfonts}     
\usepackage{times}
\usepackage{bm}
\usepackage{natbib}
\usepackage{colonequals}
\usepackage{enumerate}
\usepackage{subfigure} 	

\graphicspath{{figures/}} 	

\usepackage{geometry}
\geometry{a4paper,left=3cm, right=3cm, top=3.5cm, bottom=3.5cm}
\setlength{\parindent}{12pt}

\setlength{\baselineskip}{8pt plus2pt minus1pt}

\setlength{\textheight}{22.5 true cm}

\setlength{\textwidth}{6in}

\makeatletter

\newtheorem{theorem}{\bf{Theorem}}

\newtheorem{lemma}{\bf{Lemma}}

\newtheorem{example}{\bf{Example}}
\newtheorem{proposition}{\bf{Proposition}}
\newtheorem{condition}{\bf{Condition}}

\newcommand{\bbR}{\mathbb{R}}

\newcommand{\var}{{\rm var}}

\newcommand{\cM}{\mathcal{M}}

\newcommand{\cL}{\mathcal{L}}

\newcommand{\cS}{\mathcal{S}}

\newcommand{\cI}{\mathcal{I}}

\newcommand{\cK}{\mathcal{K}}

\graphicspath{{figures/}}

\def\T{{ \mathrm{\scriptscriptstyle T} }}

\title{A robust fusion-extraction procedure with summary statistics in the presence of biased sources}
\author{Ruoyu Wang, Qihua Wang\thanks{qhwang@amss.ac.cn}}
\affil{Academy of Mathematics and Systems Science, Chinese Academy of Sciences, Beijing 100190, China, and University of Chinese Academy of Sciences, Beijing 100049, China.}
\author{Wang Miao}
\affil{School of Mathematical Sciences, Peking University, Beijing 100871, China}
\date{}

\begin{document}
	\maketitle
	
	\begin{abstract}
		Information from multiple data sources is increasingly available. However, some data sources may produce biased estimates due to biased sampling, data corruption, or model misspecification. This calls for robust data combination methods with biased sources.  In this paper, a robust data fusion-extraction method is proposed. In contrast to existing methods, the proposed method can be applied to the important case where researchers have no knowledge of which data sources are unbiased. 	
		The proposed estimator is easy to compute and only employs summary statistics, and hence can be applied to many different fields, e.g., meta-analysis, Mendelian randomization, and distributed systems. The proposed estimator is consistent even if many data sources are biased and is asymptotically equivalent to the oracle estimator that only uses unbiased data. Asymptotic normality of the proposed estimator is also established. In contrast to the existing meta-analysis methods, the theoretical properties are guaranteed even if the number of data sources and the dimension of the parameter diverges as the sample size increases. Furthermore, the proposed method provides a consistent selection for unbiased data sources with probability approaching one. Simulation studies demonstrate the efficiency and robustness of the proposed method empirically. The proposed method is applied to a meta-analysis data set to evaluate the surgical treatment for moderate periodontal disease and to a Mendelian randomization data set to study the risk factors of head and neck cancer.
	\end{abstract}
	
	\noindent%
	{\it Keywords:} Data fusion; Inverse variance weighting; Mendelian randomization; Meta-analysis; Robust statistics.
	\vfill

\section{Introduction}
In the big data era, it is common to have different data sources addressing a specific scientific problem of interest. An important question is how to combine these data to draw a final conclusion. In practice, due to privacy concerns or data transmission restrictions, individual-level data from different sources are usually not all available to the researcher. For some data sources, researchers only have access to certain summary statistics. To combine data information efficiently in this scenario, plenty of methods have been developed in the literature of meta-analysis, including the confidence distribution methods \citep{singh2005combining,xie2011confidence, liu2015multivariate}, the generalized method of moments (GMM), empirical likelihood-based methods \citep{sheng2020censored, qin2015using, chatterjee2016constrained,kundu2019generalized,zhang2019mendelian,zhang2020generalized}, and calibration methods \citep{lin2014adjustment,yang2020combining}. In many cases, estimators provided by meta-analysis methods are as efficient as the pooled estimators that use all the individual-level data \citep{olkin1998comparison, mathew1999equivalence, lin2010relative, xie2011confidence, liu2015multivariate}. Moreover, meta-analysis methods have been applied to large-scale data sets to reduce the computation and communication complexity \citep{jordan2013statistics,fan2014challenges,wang2016statistical}, even though all individual-level data are available in this scenario.

Ideally, all data sources can provide valid summary statistics (or consistent local estimates) based on data from them, respectively. Unfortunately, the summary statistics (or local estimates)  from some data sources may be invalid (or inconsistent) due to biased sampling, data corruption, model misspecification or other problems. Typical examples include the Simpson's paradox in meta-analysis \citep{hanley2000simpson}, the invalid instrument problem in Mendelian randomization \citep{qi2019mendelian,burgess2020robust} and Byzantine failure problem in distributed estimation \citep{lamport1982byzantine,yin2018byzantine,tu2021variance}.

\begin{example}[Simpson's paradox in meta-analysis] \label{ex: meta simpson}
	The dataset reported by \cite{hanley2000simpson} consists of five case-control studies that examine the role of high voltage power lines in the etiology of leukemia in children. \cite{hanley2000simpson} point out that different data fusion methods provide opposite conclusions. The reason is that three studies are conducted among the entire population, while two other studies undertake their investigation among the subpopulations living close to the power lines and thus suffer from biased sampling. Thus, two biased studies lead the final meta-analysis estimator to be biased. In this illustrative example, one knows which studies are biased and thus can just remove these studies. However, in practice, we seldom have such knowledge.
\end{example}

\begin{example}[Mendelian randomization with invalid instruments] \label{ex: invalid iv}
	In Mendelian randomization, single nucleotide polymorphisms (SNPs) are used as instrumental variables to evaluate the causal effect of a risk factor on the outcome. A SNP is a valid instrumental variable if (i) it is associated with the risk factor of interest; (ii) there is no confounder of the SNP-outcome association and (iii) it does not affect the outcome directly. See the Mendelian randomization dictionary \citep{lawlor2019mendelian} for more details. Suppose we have access to summary data representing the estimated effect of the $k$th SNP
	on the risk factor ($\tilde{\beta}_{k}$), and on the outcome ($\tilde{\gamma}_{k}$) for $k=1,\dots,K$. If the $k$th SNP  is a valid instrument, then $\tilde{\gamma}_{k}/\tilde{\beta}_{k}$ is a consistent estimator of the true causal effect. As data on several SNPs are available, we can use meta-analysis methods to produce a final estimator. However, in practice, a SNP may be an invalid instrument due to pleiotropy, linkage disequilibrium, and population stratification. In this case, $\tilde{\gamma}_{k}/\tilde{\beta}_{k}$ is no longer consistent and traditional meta-analysis methods can lead to biased estimates. In practice, however, it is hard to know which instrument is valid.	
\end{example}

\begin{example}[Byzantine failures in distributed systems]\label{ex: byzantine failure}
	To reduce the computational burden with large-scale data, the calculation of estimators is often conducted on distributed systems. Summary statistics are calculated on each local machine and transmitted to the central machine. The central machine produces the final estimator by combining these summary statistics. In practice, some local machines may produce wrong results and hence are biased sources due to hardware or software breakdowns, data crashes, or communication failures, which is called the ``Byzantine failures". Usually,  one does not know which machines have Byzantine failures.	Byzantine failures may deteriorate the performance of many divide-and-conquer methods.
\end{example}

As shown in the above three examples, many issues can result in biased estimation from a particular data source. In this paper, we call a data source biased if it produces an estimate that does not converge to the parameter of interest. Most of the aforementioned meta-analysis methods are not robust against the presence of biased sources, with the exception of \cite{singh2005combining,shen2020fusion} and \cite{zhai2022data}. 	Nevertheless, the method proposed in \cite{singh2005combining} is limited to the one-dimensional parameter case. Moreover, to apply the method of \cite{singh2005combining} and \cite{shen2020fusion}, there must be at least one known unbiased data source for reference. While the paper was submitted for review, \cite{zhai2022data} proposed a data fusion method based on the empirical likelihood, which can deal with summary statistics from biased data sources. Their method relies on the parametric conditional density model and requires individual-level data from a data source that is known to be unbiased. However, such an unbiased data source is often unavailable in practice. The main challenge for such a problem is that we do not know which data sources are biased, and hence one cannot remove the biased data sources from these data sources directly.	Clearly,  the use of biased data sources is adverse to defining a consistent estimator for the parameter of interest. This paper proposes a fusion-extraction procedure to define a consistent estimator and an asymptotically normal estimator, respectively, by combining all the summary statistics from	all the data sources in the presence of biased sources. 
In contrast to existing methods, the proposed method is applicable without any knowledge on which sources are unbiased. 

The proposed fusion-extraction procedure uses only summary statistics from different data sources and consists of two stages. In the first stage, we fuse the summary statistics from different data sources and obtain an initial estimator by minimizing the weighted Euclid distance from the estimators of all data sources. In the second stage, with the assistance of the initial estimator and the penalization method, we extract information from the unbiased sources and obtain the final estimator via a convex optimization problem. Both optimization problems in the two stages can be solved efficiently. Biased data sources do not affect the consistency of our method, as long as the proportion of unbiased sources among all sources is not too small. Moreover, our method can be implemented without knowledge on which data sources are unbiased. This makes our method more practical. The theoretical properties of our estimator are investigated under some mild conditions. We first show the consistency of the initial estimator produced by the first stage. Then we show that with this initial estimator, the final estimator produced by the second stage optimization is close in terms of Euclid norm to the oracle estimator that uses only unbiased data sources. Furthermore, it is shown that the extraction procedure in the second stage can consistently select unbiased sources. Based on these ``oracle properties", the asymptotic normality of the proposed estimator is also established. The established theorems are general in the sense that the number of data sources, $K$, and the dimension of parameter, $d$, can diverge as sample size increases. To our knowledge, no existing literature considers the meta-analysis problem in the presence of unknown biased sources when both $K$ and $d$ diverge. 

Our method is robust to biased data sources and computationally simple, and hence can be applied to many different fields, e.g. meta-analysis, Mendelian randomization, and distributed system. Besides its generality, it has some attractive properties across different fields. In contrast to existing works in meta-analysis with heterogeneous data sources \citep{singh2005combining,shen2020fusion,zhai2022data}, the proposed method does not require knowledge on which data sources are unbiased. In the field of Mendelian randomization with invalid instruments \citep{han2008detecting}, our method does not require that at least half of the instruments are valid while allowing for multiple or diverging number of treatments. Furthermore, our method can also be applied to the distribution system with Byzantine failures \citep{lamport1982byzantine,yin2018byzantine}. In contrast to the existing work \citep{tu2021variance}, the asymptotic normality of our method is guaranteed without requiring the proportion of biased sources among all sources to converge to zero. Moreover, our estimator has a faster convergence rate compared to that of \cite{tu2021variance} if the proportion of biased sources does not converge to zero.

Simulations under different scenarios demonstrate the efficiency and robustness of the proposed method. The proposed method is applied to a meta-analysis data set \citep{berkey1998meta} to evaluate the surgical treatment for moderate periodontal disease, and a Mendelian randomization data set \citep{gormley2020multivariable} to study the risk factors of head and neck cancer. The real data analysis results show the robustness of the proposed method empirically. 

The rest of this paper is organized as follows. In Section \ref{sec: estimation}, we suggest a two-stage method to provide an estimator for the parameter of interest in the presence of biased sources. In Section \ref{subsec: c&o}, we investigate the theoretical properties of the proposed estimator under certain general conditions on convergence rate. Under further assumptions, we establish the asymptotic normality of the proposed estimator in Section \ref{subsec: AN}. Simulation studies were conducted to evaluate the finite sample performance of our method in Section \ref{sec: simulation}, followed by two real data examples in Section \ref{sec: real data analysis}. Further simulation studies and all proofs are deferred to the supplementary material due to limited space.

\section{Estimation in the presence of biased sources}\label{sec: estimation}
\subsection{Identification}\label{subsec: identification}
Suppose $\theta_0$ is a $d$ dimensional parameter of interest and $K$ data sources can be used to estimate this parameter.  Each data source provides an estimator for the parameter of interest. Estimators from some data sources may be inconsistent for $\theta_{0}.$
The data sources that provide inconsistent estimators are called biased data sources but we do not know which data sources are biased. For $k=1,\dots,K$, let
$\tilde{\theta}_k$ be the estimate from the $k$th source, $n_k$ the sample size of the $k$th source, $n=\sum_{k=1}^{K}n_k$ and $\tilde{\pi}_{k} = n_{k}/n$. 
Estimates from different sources may be constructed using different methods and let $\theta^{*}_{k}$ be their probability limits respectively, i.e., $\|\tilde{\theta}_{k} - \theta^{*}_{k}\| \to 0 $ in probability for $k=1,\dots,K$, where $\|\cdot\|$ is the Euclid norm. Then a data source is biased if $\theta^{*}_{k} \neq \theta_0$. We assume that some of the sources are unbiased in the sense that $\theta^{*}_{k} = \theta_0$ but we do not know which are unbiased. Let $\cK_0=\{k:\theta^{*}_{k}=\theta_0\}$ be the set of indices of unbiased sources and $b^{*}_{k} = \theta^{*}_{k} - \theta_{0}$ be the bias of source $k$. Throughout this paper, we assume $\|b_{k}^{*}\|$ is bounded away from zero for $k\in \cK_{0}^{c} = \{1,\dots,K\}\setminus \cK_{0}$. 
Since we do not have any knowledge about $\cK_{0}$, we do not know whether $\theta_{k}^{*}$ equals to the parameter of interest or not.
Fortunately, the following proposition shows that $\theta_{0}$ can be identified as a weighted geometric median of $\theta_{k}^{*}$ for $k=1,\dots,K$ if the  proportion of data from unbiased sources among all data is larger than a certain threshold. 
\begin{proposition}\label{prop: identification}
	If 
	\begin{equation}\label{eq: id cond}
		\sum_{k\in \cK_0}\tilde{\pi}_{k} > \Big\|\sum_{k\in \cK_0^{c}} \tilde{\pi}_{k}\frac{b^{*}_{k}}{\|b^{*}_{k}\|}\Big\|,
	\end{equation}	
	then
	\begin{equation}\label{eq: identification}
		\theta_0 = \arg \min_{\theta}\sum_{k=1}^K \tilde{\pi}_{k}\|\theta^{*}_{k} - \theta\|.
	\end{equation}
\end{proposition}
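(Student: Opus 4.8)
The plan is to show that $\theta_0$ is the unique minimizer of the convex function $f(\theta) = \sum_{k=1}^K \tilde\pi_k \|\theta_k^* - \theta\|$ by a first-order (subgradient) argument. Since each term $\|\theta_k^* - \theta\|$ is convex in $\theta$ and $f$ is a nonnegative combination of such terms, $f$ is convex, so it suffices to verify that $\mathbf{0}$ belongs to the subdifferential $\partial f(\theta_0)$, and then argue uniqueness separately. The subdifferential of $\theta \mapsto \|\theta_k^* - \theta\|$ at $\theta = \theta_0$ is the singleton $\{-(\theta_k^* - \theta_0)/\|\theta_k^* - \theta_0\| \} = \{-b_k^*/\|b_k^*\|\}$ whenever $b_k^* \neq \mathbf{0}$, i.e. for $k \in \cK_0^c$ (here we use that $\|b_k^*\|$ is bounded away from zero, so in particular nonzero), and equals the closed unit ball $\{v : \|v\| \le 1\}$ when $k \in \cK_0$.

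**The key computation.**

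Consequently $\partial f(\theta_0) = \big\{ \sum_{k \in \cK_0} \tilde\pi_k v_k - \sum_{k \in \cK_0^c} \tilde\pi_k \frac{b_k^*}{\|b_k^*\|} : \|v_k\| \le 1 \big\}$. So $\mathbf{0} \in \partial f(\theta_0)$ if and only if the vector $\sum_{k \in \cK_0^c} \tilde\pi_k b_k^*/\|b_k^*\|$ can be written as $\sum_{k \in \cK_0} \tilde\pi_k v_k$ with each $\|v_k\| \le 1$. The largest Euclidean norm achievable by such a combination is $\sum_{k \in \cK_0} \tilde\pi_k$ (attained by aligning all $v_k$ with the target direction), so a valid choice of the $v_k$ exists precisely when $\big\| \sum_{k \in \cK_0^c} \tilde\pi_k b_k^*/\|b_k^*\| \big\| \le \sum_{k \in \cK_0} \tilde\pi_k$, which is implied by hypothesis \eqref{eq: id cond}. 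This shows $\theta_0$ is a minimizer.

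**Uniqueness.**

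For uniqueness I would exploit that the inequality in \eqref{eq: id cond} is strict. Suppose $\theta' \neq \theta_0$ were another minimizer; by convexity the whole segment between them minimizes $f$, so $f$ is affine along the direction $u = (\theta' - \theta_0)/\|\theta' - \theta_0\|$. Each summand $\|\theta_k^* - \theta\|$ is affine along a line only if that line passes through $\theta_k^*$ (i.e. the direction points straight at $\theta_k^*$); for $k \in \cK_0$ the summand $\|\theta_0 - \theta\|$ is strictly convex at $\theta_0$ in every direction, so it is not affine along the segment. Thus the directional derivative of $f$ at $\theta_0$ along $u$ is strictly larger than its value along $-u$; quantitatively, $\partial_u f(\theta_0) = \sum_{k \in \cK_0} \tilde\pi_k - \langle u, \sum_{k \in \cK_0^c}\tilde\pi_k b_k^*/\|b_k^*\|\rangle \ge \sum_{k \in \cK_0}\tilde\pi_k - \big\|\sum_{k \in \cK_0^c}\tilde\pi_k b_k^*/\|b_k^*\|\big\| > 0$ by \eqref{eq: id cond}, contradicting that $\theta_0$ and $\theta'$ are both minimizers (a convex function with a strictly positive directional derivative at a point cannot attain its minimum there). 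Hence the minimizer is unique and equals $\theta_0$.

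**Main obstacle.**

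The only delicate point is the uniqueness step: one must be careful that the summand from a biased source $k \in \cK_0^c$ can itself be affine (indeed linear) along some directions, so strict convexity of $f$ cannot be claimed globally; the argument must instead isolate the contribution of the unbiased sources (which is nonempty and contributes strict convexity at $\theta_0$, since \eqref{eq: id cond} forces $\sum_{k \in \cK_0}\tilde\pi_k > 0$) and combine it with the strict inequality in \eqref{eq: id cond}. Handling the directional-derivative bookkeeping cleanly — in particular noting $\partial_u f(\theta_0) + \partial_{-u} f(\theta_0) > 0$ whenever at least one unbiased source is present — is where the care is needed; everything else is a routine subdifferential calculation.
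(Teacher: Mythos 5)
Your proposal is correct and, at its core, identical to the paper's proof: the paper also computes the one-sided directional derivative of $G(\theta)=\sum_k\tilde\pi_k\|\theta_k^*-\theta\|$ at $\theta_0$, lower-bounds the biased-source contribution by $-\|\sum_{k\in\cK_0^c}\tilde\pi_k b_k^*/\|b_k^*\|\|$ via Cauchy–Schwarz, invokes \eqref{eq: id cond} for strict positivity, and concludes by convexity. Your separate subdifferential-membership step for existence of the minimizer is harmless but redundant, since the strictly positive directional derivative in every direction already yields $G(\theta')>G(\theta_0)$ for all $\theta'\neq\theta_0$ by convexity.
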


See the supplementary material for the proof of this proposition.
The proposition shows that $\theta_{0}$ can be uniquely determined by $\theta_{k}^{*}$ for $k = 1,\dots, K$ if \eqref{eq: id cond} holds.
Note that
\begin{equation}\label{eq: proportion version}
	\Big\|\sum_{k\in \cK_{0}^{c}} \tilde{\pi}_{k}\frac{b^{*}_{k}}{\|b^{*}_{k}\|}\Big\| \leq \sum_{k\in \cK_{0}^{c}} \tilde{\pi}_{k}.
\end{equation}
Thus a sufficient condition for \eqref{eq: id cond} is
\[\sum_{k\in \cK_{0}}\tilde{\pi}_{k} > \sum_{k\in \cK_0^{c}} \tilde{\pi}_{k}\]
or equivalently
\begin{equation}\label{eq: majority rule}
	\sum_{k\in \cK_{0}}\tilde{\pi}_{k} > \frac{1}{2}.
\end{equation}
Inequality \eqref{eq: majority rule} requires more than half of the data come from unbiased data sources, which is related to the \emph{majority rule} widely adopted
in the invalid instrument literature \citep{kang2016instrumental,bowden2016consistent,windmeijer2019use}.
The previous analysis implies that \eqref{eq: id cond} is true under \eqref{eq: majority rule}. 
Next, we illustrate that \eqref{eq: id cond} can still hold even though less than a half of the data come from unbiased sources with a toy example. Suppose $d = 3$, $K = 6$, $\theta_{0} = (0, 0, 0)^{\T}$, $\tilde{\pi}_{k} = 1 / 6$, $b^{*}_{1} = b^{*}_{2} = (0, 0, 0)^{\T}$, $b_{3}^{*} = (1, 0, 0)^{\T}$, $b_{4}^{*} = (-2, 0, 0)^{\T}$, $b_{5}^{*} = (0, 1, 0)^{\T}$ and $b_{6}^{*} = (0, 0, 1)^{\T}$. In this case, $\sum_{k\in \mathcal{K}_{0}}\tilde{\pi}_{k} = 1/3 < 1/2$, however, $\|\sum_{k\in \cK_0^{c}} \tilde{\pi}_{k}b^{*}_{k}/\|b^{*}_{k}\|\| = \surd{2}/6 < 1/3$ and hence \eqref{eq: id cond} is satisfied. In Section \ref{sec: simulation}, we provide a further example where only $20\%$ of the data come from unbiased sources and \eqref{eq: id cond} still holds. Theoretically, if the equality in \eqref{eq: proportion version} holds, \eqref{eq: id cond} is equivalent to \eqref{eq: majority rule}; otherwise \eqref{eq: id cond} is weaker than \eqref{eq: majority rule}. The equality in \eqref{eq: proportion version} holds only if all $b^{*}_{k}$'s lie occasionally on the same direction, which is rarely true because the biases are often irregular in practice. 

In Proposition \ref{prop: identification}, the quantity $\tilde{\pi}_{k}$ can be viewed as the weight attached to the $k$-th data source for $k=1,\dots,K$.
It is observed in meta-analysis that small studies tend to be of relatively low methodological quality and are more likely to be affected by publication and selection bias \citep{sterne2000publication}.
Thus we use the weights $\{\tilde{\pi}_{k}\}_{k=1}^{K}$ that attach small weights to data sources with small sample sizes in this paper. 

By replacing $\theta_{k}^{*}$ by $\hat{\theta}_{k}$ in equation \eqref{eq: identification}, we can construct an estimator for $\theta_{0}$.
Further, we use the defined estimator as an initial estimator to obtain a more efficient estimator.

	
\subsection{Estimation}\label{subsec: estimation}
According to Proposition \ref{prop: identification}, we propose the following estimator $\tilde{\theta}$ that minimizes the weighted distance from $\tilde{\theta}_{k}$,
\begin{equation}\label{eq: estimation tilde}
	\tilde{\theta}=\arg\min_\theta \sum_{k=1}^{K}\tilde{\pi}_{k}\|\tilde{\theta}_k - \theta\|.
\end{equation}
This optimization problem is convex and can be solved efficiently by routine algorithms.
We show the consistency of $\tilde{\theta}$ in Section 
\ref{sec: main theory}. However, according to the well-known trade-off between robustness and efficiency \citep{Hampel1986, LindsayRobustness1994}, the robust estimator $\tilde{\theta}$ may not be fully efficient. Also, it may have a large finite sample bias because $\tilde{\theta}$ uses summary statistics from biased sources. Our simulations confirm this. The large finite sample bias implies that $\tilde{\theta}$ may not be $n^{1/2}$-consistent or asymptotically normal. Here we give a concrete example. Suppose $d=1$, $K \to \infty$, $n_{k} = n/K$, $\tilde{\theta}_{k} \sim N(\theta_{k}^{*}, 1/n_{k})$ for $k=1,\dots,K$ and $\tilde{\theta}_{k}$'s are independent of each other. Assume $\theta_{0} = 0$, $\theta_{k}^{*} = \theta_{0} + b_{k}^{*}$, $ b_{k}^{*} = 0$ for $k = 1,\dots, \lfloor(1/2+\tau)K\rfloor$ and $b_{k}^{*} = 1$ for $k = \lfloor(1/2+\tau)K\rfloor + 1, \dots, K$, where $0<\tau<1/2$. In the supplementary material, we show 
\begin{equation}\label{eq: counter ex}
	P\left(\tilde{\theta} - \theta_{0} \geq \frac{K^{1/2}h_{*}}{n^{1/2}}\right) \to 1,
\end{equation}
where $h_{*} = \Phi^{-1}((3/8 + \tau/4)/(1/2 + \tau))$ and $\Phi$ is the cumulative distribution function of standard normal distribution.
Because $h_{*} > 0$ and $K \to \infty$, \eqref{eq: counter ex} implies $\tilde{\theta}$ is not $n^{1/2}$-consistent.

Besides the aforementioned issue, the covariance structure of $\tilde{\theta}_{k}$ for $k=1,2,..., K$ is not considered in the construction of the estimator $\tilde{\theta}$ and this may lead to a loss of efficiency. These facts motivate us to propose an estimator which is not only $n^{1/2}$-asymptotically normal but also more efficient by 
using penalization technique and incorporating covariance structures of
$\tilde{\theta}_k$ for $k=1,2,...,K$.

It is well known that the oracle inverse-variance weighted (IVW) estimator
\begin{equation}\label{eq: oracle IVW}
	\hat{\theta}_{\rm IVW} = \arg\min_{\theta}\sum_{k\in \cK_0}\frac{\tilde{\pi}_{k}}{2}(\tilde{\theta}_{k} - \theta)^\T\tilde{V}_{k}(\tilde{\theta}_{k} - \theta)
\end{equation}
is the most efficient meta-analysis estimator and asymptotically normal if $\cK_{0}$ is known and $n_{k}^{1/2}(\tilde{\theta}_{k} - \theta_{k}^{*}) \to N(0, \Sigma_{k})$ for each $k$ in $\cK_{0}$ \citep{lin2010relative,xie2011confidence,burgess2020robust}, where $\tilde{V}_{k}$ is a consistent estimator of $\Sigma_{k}^{-1}$ for $k\in \cK_{0}$. See \cite{shen2020fusion} for further discussion on the optimality of $\hat{\theta}_{\rm IVW}$. In general, $\tilde{V}_{k}$'s can be any positive definite matrices if an estimator for $\Sigma_{k}^{-1}$ is unavailable \citep{liu2015multivariate} and $\hat{\theta}_{\rm IVW}$ is still $\sqrt{n}$ consistent and asymptotically normal under certain regularity conditions.
However, $\hat{\theta}_{\rm IVW}$ is infeasible if $\cK_{0}$ is unknown. Next, we develop a penalized inverse-variance weighted estimation method, which obviates the need to know $\cK_{0}$ and is asymptotically equivalent to $\hat{\theta}_{\rm IVW}$ under mild conditions.

To obtain a feasible estimator, we first replace $\cK_{0}$ with $\{1,\dots,K\}$ and obtain the objective function
\begin{equation}\label{eq: IVW objective}
	\sum_{k = 1}^{K}\frac{\tilde{\pi}_{k}}{2}(\tilde{\theta}_{k} - \theta)^\T\tilde{V}_{k}(\tilde{\theta}_{k} - \theta).
\end{equation}
Simply minimizing \eqref{eq: IVW objective} with respect to $\theta$ may produce an inconsistent estimator because some of $\tilde{\theta}_{k}$'s may not converge to $\theta_{0}$. Noticing that $\|\tilde{\theta}_{k} - \theta_{0} - b_{k}^{*}\| \to 0$ in probability for $k=1,\dots,K$, we add bias parameters $b_{k}$'s to \eqref{eq: IVW objective} and get
\begin{equation}\label{eq: IVW objective-bias}
	\sum_{k = 1}^{K}\frac{\tilde{\pi}_{k}}{2}(\tilde{\theta}_{k} - \theta - b_{k})^\T\tilde{V}_{k}(\tilde{\theta}_{k} - \theta - b_{k}).
\end{equation}
One may expect to recover $\theta_{0}$ and $b_{1}^{*}, \dots, b_{K}^{*}$  by minimizing \eqref{eq: IVW objective-bias} with respect to $\theta, b_{1}, \dots, b_{K}$. Unfortunately, it is not the case because, for any given $\theta$, \eqref{eq: IVW objective-bias} is minimized as long as $b_{k}$ takes $\tilde{\theta}_{k} - \theta$ for $k = 1,\dots, K$. Hence $\theta, b_{1}, \dots, b_{K}$ that minimize \eqref{eq: IVW objective-bias} are not necessarily close to $\theta_{0}, b_{1}^{*}, \dots, b_{K}^{*}$. To resolve this problem, we leverage the fact that $b_{k}^{*} = 0$ for $k\in \cK_{0}$ and impose penalties on $b_{k}$ to force $b_{k}$ to be zero and leave $b_{k}$ for $k \in \cK_{0}^{c}$ unconstrained. Hence we want to impose a large penalty on $b_{k}$ for $k \in \cK_{0}$ and impose no or a small penalty on $b_{k}$ for $k \in \cK_{0}^{c}$. To this end, we make use of the consistent estimator $\tilde{\theta}$ and define the following estimator    
\begin{equation}\label{eq: opt problem}
	(\hat{\theta}^{\T}, \hat{b}_{1}^{\T}, \dots, \hat{b}_{K}^{\T})^{\T} \in \mathop{\arg\min}_{\theta, b_1, \dots, b_{K}}\sum_{k=1}^{K}\left\{\frac{\tilde{\pi}_{k}}{2}(\tilde{\theta}_{k} - \theta - b_{k})^\T\tilde{V}_{k}(\tilde{\theta}_{k} - \theta - b_{k}) + \lambda\tilde{w}_{k}\|b_{k}\|\right\},
\end{equation}
where $\tilde{w}_k = \|\tilde{b}_{k}\|^{-\alpha}$, $\tilde{V}_{k}$ is some weighting matrix and $\lambda$ is a tuning parameter with $\alpha > 0$ and $\tilde{b}_{k} = \tilde{\theta}_{k} - \tilde{\theta}$ being an initial estimator of $b_{k}^{*}$. For $k\in \cK_{0}$, $\tilde{w}_{k}$ tends to be large because $\tilde{b}_{k} \to 0$ in probability. Thus $b_{k}$ may be estimated as zero in \eqref{eq: opt problem} for $k \in \cK_{0}$. On the other hand, because $\|\tilde{b}_{k}\| \to \|b_{k}^{*}\| > 0$ for $k \in \cK_{0}^{c}$, a smaller penalty is imposed on $b_{k}$ for $k\in \cK_{0}^{c}$ compared to $k \in \cK_{0}$. The optimization problem in \eqref{eq: opt problem} produces a continuous solution and is computationally attractive due to its convexity \citep{zou2006adaptive}. We propose $\hat{\theta}$ as an estimator for $\theta_{0}$. The form of \eqref{eq: opt problem} is akin to the adaptive Lasso \citep{zou2006adaptive} and the group Lasso \citep{yuan2006model}. The optimization problem in \eqref{eq: opt problem} can be rewritten as an adaptive group lasso problem and solved efficiently by the R package \emph{ggLasso} (\texttt{https://cran.r-project.org/web/packages/gglasso/index.html}). It is noted that $\tilde{\theta}$ makes contribution to $\hat{\theta}$ through  $\tilde{w}_k$.
This may help to select the estimates from unbias sources and control the bias of $\hat{\theta}$. We show in Section \ref{sec: main theory} that this $\hat{\theta}$ performs as well as the oracle estimator  $\hat{\theta}_{\rm IVW}$. 

\subsection{Implementation in examples}
Equations \eqref{eq: estimation tilde} and \eqref{eq: opt problem} provide two general estimating procedures in the presence of biased data sources and can be applied to many specific problems. Only estimates $\tilde{\theta}_{k}$ from different data sources are required to conduct the proposed procedure. Specifically, in Example \ref{ex: meta simpson}, we can take $\tilde{\theta}_{k}$ to be the estimate from the $k$th study for $k=1,\dots,5$. In Example \ref{ex: invalid iv}, we let $\tilde{\theta}_{k} = \tilde{\gamma_{k}}/\tilde{\beta_{k}}$ and use the proposed procedure to deal with the invalid instrument problem. In Example \ref{ex: byzantine failure}, we use the output of each local machine as $\tilde{\theta}_{k}$'s and apply our method to mitigate the effects of Byzantine failures. We investigate the theoretical properties of the proposed fusion-extraction procedure in the next section since they are of wide application.
%

\section{Theoretical properties}\label{sec: main theory}
\subsection{Consistency and oracle properties}\label{subsec: c&o}
In this subsection, we provide asymptotic results for the estimators proposed in Section \ref{subsec: estimation}. In our theoretical development, both the dimension of parameter $d$ and the number of sources $K$ are allowed to diverge as $n \to \infty$. Let
\begin{align*}
	\delta = \sum_{k\in \cK_0}\tilde{\pi}_{k} - \Big\|\sum_{k\in \cK_0^{c}} \tilde{\pi}_{k}\frac{b^{*}_{k}}{\|b^{*}_{k}\|}\Big\|.
\end{align*}
Then \eqref{eq: id cond} is equivalent to $\delta > 0$ and we have the following theorem.
\begin{theorem}\label{thm: consistent}
	If $\delta > 0$ and $\delta^{-1}\max_{k}\|\tilde{\theta}_{k} - \theta^{*}_{k}\| \to 0$ in probability, then $\|\tilde{\theta} - \theta_{0}\| \to 0$ in probability.
\end{theorem}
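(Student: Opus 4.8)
The plan is to exploit the fact that $\tilde{\theta}$ minimizes the convex objective $M_n(\theta) := \sum_{k=1}^K \tilde{\pi}_k \|\tilde{\theta}_k - \theta\|$, together with the population identification result of Proposition~\ref{prop: identification}, which asserts that $\theta_0$ is the unique minimizer of the population objective $M(\theta) := \sum_{k=1}^K \tilde{\pi}_k \|\theta^*_k - \theta\|$ whenever $\delta > 0$. The strategy is an argmin/$M$-estimation argument: first control the uniform deviation between $M_n$ and $M$ using the triangle inequality, then translate closeness of objective values into closeness of minimizers by quantifying how sharply $M$ grows away from $\theta_0$.

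First I would show that $M_n$ and $M$ are uniformly close, in a way that makes the bound explicit in $\delta$. By the reverse triangle inequality, for every $\theta$,
\[
\bigl| M_n(\theta) - M(\theta) \bigr| \le \sum_{k=1}^K \tilde{\pi}_k \bigl| \|\tilde{\theta}_k - \theta\| - \|\theta^*_k - \theta\| \bigr| \le \sum_{k=1}^K \tilde{\pi}_k \|\tilde{\theta}_k - \theta^*_k\| \le \max_k \|\tilde{\theta}_k - \theta^*_k\| =: \varepsilon_n,
\]
since $\sum_k \tilde{\pi}_k = 1$. By hypothesis $\delta^{-1}\varepsilon_n \to 0$ in probability. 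Next I would establish a lower bound on the excess risk of the population objective, showing that $M(\theta) - M(\theta_0) \ge \delta \|\theta - \theta_0\|$ for all $\theta$. This is the key deterministic inequality: writing $\theta = \theta_0 + t u$ with $\|u\| = 1$ and $t > 0$, one bounds $\|\theta^*_k - \theta\| \ge \|\theta^*_k - \theta_0\| - $ (a term handled via Cauchy--Schwarz for the unbiased sources, where $\theta^*_k = \theta_0$, giving exactly $t$) and uses convexity in $t$ together with the one-sided directional derivative of $M$ at $\theta_0$, which by the definition of $\delta$ is at least $\delta$ in every direction. Concretely, the subgradient of $M$ at $\theta_0$ contains $-\sum_{k \in \cK_0^c} \tilde{\pi}_k b^*_k/\|b^*_k\| + \sum_{k \in \cK_0} \tilde{\pi}_k g_k$ for arbitrary unit-or-smaller $g_k$; the directional derivative in direction $u$ is $\sum_{k\in\cK_0}\tilde\pi_k - \langle \sum_{k\in\cK_0^c}\tilde\pi_k b_k^*/\|b_k^*\|, u\rangle \ge \delta$, and convexity extends this linear lower bound globally.

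Finally I would combine the two pieces in the standard way. Let $r_n = 3\varepsilon_n/\delta$. On the event $\{2\varepsilon_n < \delta r_n\}$ (which holds eventually with probability one), for any $\theta$ with $\|\theta - \theta_0\| = r_n$ we have $M_n(\theta) \ge M(\theta) - \varepsilon_n \ge M(\theta_0) + \delta r_n - \varepsilon_n > M(\theta_0) + \varepsilon_n \ge M_n(\theta_0)$, so by convexity of $M_n$ the minimizer $\tilde\theta$ must lie in the ball $\|\theta - \theta_0\| < r_n$. Since $r_n = 3\delta^{-1}\varepsilon_n \to 0$ in probability, we conclude $\|\tilde\theta - \theta_0\| \to 0$ in probability.

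The main obstacle is the sharp-growth inequality $M(\theta) - M(\theta_0) \ge \delta\|\theta - \theta_0\|$: one must be careful that the directional-derivative lower bound $\delta$ is uniform over all unit directions $u$ (it is, because $\langle \sum_{k\in\cK_0^c}\tilde\pi_k b_k^*/\|b_k^*\|, u\rangle \le \|\sum_{k\in\cK_0^c}\tilde\pi_k b_k^*/\|b_k^*\|\|$ by Cauchy--Schwarz), and that the argument survives the diverging-dimension regime — which it does, since no step incurs a dimension-dependent constant. A minor point to handle cleanly is that $\tilde\theta$ is only guaranteed to be \emph{a} minimizer, but convexity of $M_n$ makes the ``trapping'' argument above valid for every minimizer simultaneously.
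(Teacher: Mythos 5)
Your proposal is correct and follows essentially the same route as the paper: the paper's proof (via its Lemma~\ref{lem: consistent}) likewise bounds $|\tilde G(\theta)-G(\theta)|$ by $\sum_k\tilde\pi_k\|\tilde\theta_k-\theta^*_k\|$ via the triangle inequality, derives the growth bound $G(\theta)-G(\theta_0)\ge\delta\|\theta-\theta_0\|$ from the directional-derivative computation in Proposition~\ref{prop: identification} together with convexity, and concludes that the minimizer lies within radius $2\delta^{-1}\sum_k\tilde\pi_k\|\tilde\theta_k-\theta^*_k\|\le 2\delta^{-1}\max_k\|\tilde\theta_k-\theta^*_k\|$ of $\theta_0$. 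Your only (immaterial) deviations are passing to $\max_k$ immediately and using a sphere-plus-convexity trapping step where the paper establishes the strict inequality for all points outside the ball directly.
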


Proof of this theorem is relegated to the supplementary material.
Theorem \ref{thm: consistent} establishes the consistency of $\tilde{\theta}$ under the condition that $\delta$ is not too small and that $\tilde{\theta}_{k}$ converges uniformly for $k=1,\dots,K$. If $K$ is fixed, the condition $\delta^{-1}\max_{k}\|\tilde{\theta}_{k} - \theta^{*}_{k}\| \to 0$ in probability can be satisfied as long as $\delta$ is positive and bounded away from zero. Having established the theoretical property of the initial estimator $\tilde{\theta}$, next we investigate the theoretical properties of $\hat{\theta}$ defined in \eqref{eq: opt problem}. As pointed out previously, $\hat{\theta}_{\rm IVW}$ is an oracle estimate that uses summary data from unbiased sources only by combining them in an efficient way \citep{lin2010relative, xie2011confidence}. 
It is of interest to investigate how far away the proposed estimator $\hat{\theta}$ departs from the oracle estimator $\hat{\theta}_{\rm IVW}$ is.
To establish the convergence rate of $\|\hat{\theta} - \hat{\theta}_{\rm IVW}\|$, the following conditions are required. For any symmetric matrix $A$, let  $\lambda_{\rm min}(A)$ and $\lambda_{\rm max}(A)$ be the minimum and maximum eigenvalue of $A$, respectively. We use $\|\cdot\|$ to denote the Euclid and spectral norm when applying to a vector and a matrix, respectively.
\begin{condition}\label{cond: existence of nonbias}
	$\sum_{k \in \cK_{0}}\tilde{\pi}_{k}$ is bounded away from zero.
\end{condition}

\begin{condition}\label{cond: matrix rate}
	There are some deterministic matrices $V_{k}^{*}$ ($k\in \cK_{0}$), such that $\max_{k\in \cK_{0}}\|\tilde{V}_{k} - V_{k}^{*}\| = o_{P}(1)$ where $\tilde{V}_{k}$ is the weighting matrix appearing in \eqref{eq: opt problem}. Moreover, the eigenvalues of $V_{k}^{*}$ are bounded away from zero and infinity for $k \in \cK_{0}$.
\end{condition}

\begin{condition}\label{cond: estimate rate}
	$K = O(n^{\nu_{1}})$, $\delta > 0$ and $\delta^{-1}\max_{k}\|\tilde{\theta}_{k} - \theta_{k}^{*}\| = O_{P}(n^{-\nu_{2}})$ for some $\nu_{1} \in [0,1)$ and $\nu_{2} \in (0,1)$. 
\end{condition}

Condition \ref{cond: existence of nonbias} assumes that the proportion of data from unbiased data sources is bounded away from zero, which is a reasonable requirement.
The weighting matrix $\tilde{V}_{k}$ may affect the performance of the resulting estimator. In many cases, the optimal choice of $\tilde{V}_{k}$ is shown to be the inverse of $\tilde{\theta}_{k}$'s asymptotic variance matrix \citep{lin2010relative,liu2015multivariate}. Condition \ref{cond: matrix rate} can be easily satisfied if the inverses of the estimated asymptotic variance matrices are used and $d$, $K$ are not too large \citep{vershynin2018high,wainwright2019high}. There are some difficulties in estimating the asymptotic variance matrix and its inverse when the dimension is high \citep{wainwright2019high}. 
In addition, sometimes the estimated asymptotic variance matrix is not available from the summary statistics \citep{liu2015multivariate}. However, Condition \ref{cond: matrix rate} just requires $\tilde{V}_{k}$ to converge to some nonsingular matrix. In these cases, we can simply take $\tilde{V}_{k}$ to be the identity matrix for $k=1,\dots,K$ and Condition \ref{cond: matrix rate} can always be satisfied. Condition \ref{cond: estimate rate} assumes the number of data sources $K$ is not too large. The convergence rate in Condition \ref{cond: estimate rate} can be satisfied by many commonly-used estimators, e.g., the maximum likelihood estimator and lasso-type estimators, under certain regularity conditions \citep{spokoiny2012parametric,battey2018distributed}. Then we are ready to state the theorem.

\begin{theorem}\label{thm: equivalence}
	Under Conditions \ref{cond: existence of nonbias}, \ref{cond: matrix rate} and \ref{cond: estimate rate}, if $\lambda \asymp 1/n$ and $\alpha > \max\{\nu_{1}\nu_{2}^{-1}, \nu_{2}^{-1} - 1\}$, we have
	\[\|\hat{\theta} - \hat{\theta}_{\rm IVW}\| = O_{P}\left(\frac{K}{n}\right).\]
\end{theorem}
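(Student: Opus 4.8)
The plan is to analyze the optimization problem \eqref{eq: opt problem} via its Karush–Kuhn–Tucker (KKT) conditions, exploiting convexity so that these conditions are both necessary and sufficient, and then to show that the pair $(\hat\theta_{\rm IVW}, \hat b)$ with $\hat b_k = 0$ for $k \in \cK_0$ and $\hat b_k = \tilde\theta_k - \hat\theta_{\rm IVW}$ for $k \in \cK_0^c$ is, up to an error of order $K/n$, a solution. First I would profile out the bias parameters: for fixed $\theta$, minimizing over $b_k$ is a per-source group-Lasso problem whose solution is an explicit soft-thresholding of $\tilde\theta_k - \theta$ in the $\tilde V_k$-metric. Substituting back gives a reduced objective $G(\theta)$ in $\theta$ alone; the key observation is that for sources in $\cK_0^c$ the penalty weight $\tilde w_k = \|\tilde b_k\|^{-\alpha}$ stays bounded (since $\|\tilde b_k\| \to \|b_k^*\| > 0$), so the soft-threshold is inactive there and those terms contribute a vanishing amount to the $\theta$-gradient, while for $k \in \cK_0$ the weight $\tilde w_k = \|\tilde b_k\|^{-\alpha} = \|\tilde\theta_k - \tilde\theta\|^{-\alpha}$ diverges fast enough (by Theorem \ref{thm: consistent} and Condition \ref{cond: estimate rate}, $\|\tilde\theta_k - \tilde\theta\| = O_P(n^{-\nu_2})$ for $k\in\cK_0$, so $\lambda \tilde w_k \gtrsim n^{-1}n^{\alpha\nu_2}$) to force $\hat b_k = 0$.

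The core of the argument is thus a two-part event: (i) with probability tending to one, the minimizer has $\hat b_k = 0$ for every $k \in \cK_0$, and (ii) on that event the problem reduces exactly to $\sum_{k\in\cK_0}\tfrac{\tilde\pi_k}{2}(\tilde\theta_k-\theta)^\T\tilde V_k(\tilde\theta_k-\theta)$ plus a smooth perturbation coming from the $\cK_0^c$ terms whose gradient is $O_P(K/n)$ in norm. For (i) I would check the subgradient optimality condition at $b_k = 0$: it holds iff $\|\tilde\pi_k \tilde V_k(\tilde\theta_k - \hat\theta)\| \le \lambda \tilde w_k$; using $\tilde\theta_k - \theta_0 = O_P(n^{-\nu_2})$, $\|\hat\theta - \theta_0\|$ small, boundedness of eigenvalues of $\tilde V_k$ (Condition \ref{cond: matrix rate}), $\lambda \asymp n^{-1}$, and $\tilde w_k \asymp n^{\alpha\nu_2}$, the right side dominates precisely when $\alpha\nu_2 > 1 - \nu_2$ and $\alpha\nu_2 > \nu_1$ — exactly the stated condition $\alpha > \max\{\nu_1\nu_2^{-1}, \nu_2^{-1}-1\}$, where the $\nu_1$ term accounts for taking a union bound over up to $K = O(n^{\nu_1})$ sources. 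For (ii), strong convexity of the reduced quadratic (its Hessian is $\sum_{k\in\cK_0}\tilde\pi_k\tilde V_k$, with eigenvalues bounded below by Conditions \ref{cond: existence of nonbias} and \ref{cond: matrix rate}) converts the gradient bound at $\hat\theta_{\rm IVW}$ — which is $\|\sum_{k\in\cK_0^c}\tilde\pi_k\tilde V_k \,\mathrm{sth}(\cdot)\| = O_P(\sum_{k\in\cK_0^c}\tilde\pi_k\cdot\text{(one coordinate's worth)})$; more carefully one shows each profiled-out $\cK_0^c$ term contributes a $\theta$-gradient of size $O_P(\lambda\tilde w_k) = O_P(n^{-1})$, summing to $O_P(K/n)$ — into the distance bound $\|\hat\theta - \hat\theta_{\rm IVW}\| = O_P(K/n)$.

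I expect the main obstacle to be handling the profiled-out contribution of the biased sources carefully and uniformly in $k$: after soft-thresholding, the $\cK_0^c$ block of the reduced objective is not quadratic, and one must show its gradient in $\theta$, evaluated near $\hat\theta_{\rm IVW}$, is $O_P(K/n)$ with the right constants — this requires a uniform-in-$k$ control of $\tilde w_k$, $\tilde V_k$, and $\tilde\theta_k - \theta_0$ via union bounds that consume exactly the slack $\nu_1$ in the exponent condition, and a Lipschitz/non-expansiveness property of the soft-thresholding operator in the $\tilde V_k$-metric. A secondary technical point is that \eqref{eq: opt problem} may have a non-unique minimizer in $b$ (the ``$\in$'' in its definition), so the argument must be phrased in terms of \emph{any} minimizer; convexity plus the strict inequality in the subgradient condition for $k\in\cK_0$ (which holds with high probability) pins down $\hat b_k = 0$ on $\cK_0$ uniquely, and then $\hat\theta$ is the unique minimizer of a strongly convex reduced problem, so this is not a genuine difficulty once stated correctly.
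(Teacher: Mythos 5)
Your proposal is correct and follows essentially the same route as the paper's proof: the paper likewise works through the KKT system of an oracle problem with $b_k=0$ imposed on $\cK_{0}$, verifies the strict subgradient (dead-zone) condition for $k\in\cK_{0}$ on a high-probability event, and obtains the identity $\hat{\theta}-\hat{\theta}_{\rm IVW}=\big(\sum_{k\in\cK_{0}}\tilde{\pi}_{k}\tilde{V}_{k}\big)^{-1}\sum_{k\in\cK_{0}^{c}}\lambda\tilde{w}_{k}\bar{z}_{k}$ with $\|\bar{z}_{k}\|\le 1$, which is precisely your strong-convexity-plus-gradient bound yielding $\sum_{k\in\cK_{0}^{c}}\lambda\tilde{w}_{k}=O_{P}(\lambda K)=O_{P}(K/n)$. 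The only cosmetic differences are that the paper does not profile out $b$, and the condition $\alpha\nu_{2}>\nu_{1}$ enters not through a union bound over sources (Condition \ref{cond: estimate rate} already gives the uniform rate) but because the dead-zone threshold $\lambda\tilde{w}_{k}\gtrsim n^{\alpha\nu_{2}-1}$ for $k\in\cK_{0}$ must dominate the $O(K/n)=O(n^{\nu_{1}-1})$ displacement of $\hat{\theta}$ from $\hat{\theta}_{\rm IVW}$ so that the subgradient condition survives that perturbation.
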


Proof of this theorem is in the supplementary material.
Theorem \ref{thm: equivalence} establishes the convergence rate of $\|\hat{\theta} - \hat{\theta}_{\rm IVW}\|$, which indicates that  the proposed estimator is close to the oracle estimator. If $K = o(n^{1/2})$, then $\|\hat{\theta} - \hat{\theta}_{\rm IVW}\| = o_{P}(n^{-1/2})$ and our proposal is asymptotically equivalent to the oracle estimator up to an error term of order $o_{P}(n^{-1/2})$. The estimator proposed in \cite{shen2020fusion} possesses the similar asymptotic equivalence property. However, theoretical results in \cite{shen2020fusion} require $d$ and $K$ to be fixed, which is not required by Theorem \ref{thm: equivalence}. Moreover,  implementation of the estimator proposed in \cite{shen2020fusion} requires at least one known unbiased data source. In contrast, we do not need any information about $\cK_{0}$ to calculate $\hat{\theta}$. 

Theorem \ref{thm: equivalence} is generic in the sense that it only relies on some convergence rate conditions and does not impose restrictions on the form of $\tilde{\theta}_{k}$. In practice, $\tilde{\theta}_{k}$ may be calculated based on complex data, such as survey sampling or time-series data, via some complex procedure, such as deep learning or Lasso-type penalization procedure. In these cases, Theorem \ref{thm: equivalence} holds consistently as long as Conditions \ref{cond: matrix rate} and \ref{cond: estimate rate} are satisfied.  Moreover, Theorem \ref{thm: equivalence} does not require $\tilde{\theta}_{k}$'s to be independent of each other, which ensures validity of the theorem in meta-analysis with overlapping subjects \citep{lin2009meta} or one sample Mendelian randomization \citep{minelli2021use}. 

When solving \eqref{eq: opt problem}, we also get an estimator $\hat{b}_{k}$ of the bias. A question is whether $\{\hat{b}_{k}\}_{k=1}^{K}$ selects the unbiased sources consistently, that is, whether $\hat{\cK}_{0} = \cK_{0}$ with probability approaching one, where $\hat{\cK}_{0} = \{k:\hat{b}_{k} = 0\}$. To assure this selection consistency, a stronger version of Condition \ref{cond: matrix rate} is required.
\begin{condition}\label{cond: matrix rate 2}
	For some deterministic matrices $V_{k}^{*}$ ($k=1,\dots,K$), such that $\max_{k}\|\tilde{V}_{k} - V_{k}^{*}\| = o_{P}(1)$ where $\tilde{V}_{k}$ is the weighting matrix appears in \eqref{eq: opt problem}. Moreover, the eigenvalues of $V_{k}^{*}$ are bounded away from zero and infinity for $k = 1,\dots,K$.
\end{condition}
This condition requires that Condition \ref{cond: matrix rate} holds not only for $k\in \cK_{0}$ but also for $k \in \cK_{0}^{c}$, which is still a mild requirement. Then we are ready to establish the selection consistency.
\begin{theorem}\label{thm: selection consistency}
	Under Conditions \ref{cond: existence of nonbias}, \ref{cond: estimate rate} and \ref{cond: matrix rate 2}, if $\lambda \asymp 1/n$ and $\alpha > \max\{\nu_{1}\nu_{2}^{-1}, \nu_{2}^{-1} - 1\}$, we have
	\[P(\hat{\cK}_{0} = \cK_{0}) \to 1\]
	provided $\min_{k\in \cK_{0}^{c}}\tilde{\pi}_{k} > C_{\pi} /K$ and $K\log n /n \to 0$ where $C_{\pi}$ is some positive constant.
\end{theorem}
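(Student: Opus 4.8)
The plan is to read off $\hat\cK_0$ from the Karush--Kuhn--Tucker (KKT) conditions of the convex program \eqref{eq: opt problem}. For any minimizer $(\hat\theta^\T,\hat b_1^\T,\dots,\hat b_K^\T)^\T$, the component $\hat b_k$ minimizes the convex map $b\mapsto\tfrac{\tilde\pi_k}{2}(\tilde\theta_k-\hat\theta-b)^\T\tilde V_k(\tilde\theta_k-\hat\theta-b)+\lambda\tilde w_k\|b\|$, so a routine subgradient computation gives that $\hat b_k=0$ implies $\|\tilde\pi_k\tilde V_k(\tilde\theta_k-\hat\theta)\|\le\lambda\tilde w_k$, while conversely $\|\tilde\pi_k\tilde V_k(\tilde\theta_k-\hat\theta)\|<\lambda\tilde w_k$ implies $\hat b_k=0$. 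Both conditions depend on the minimizer only through $\hat\theta$, whose rate is already under control, so it suffices to show that, with probability tending to one and simultaneously over $k$, $\|\tilde\pi_k\tilde V_k(\tilde\theta_k-\hat\theta)\|<\lambda\tilde w_k$ for $k\in\cK_0$ and $\|\tilde\pi_k\tilde V_k(\tilde\theta_k-\hat\theta)\|>\lambda\tilde w_k$ for $k\in\cK_0^c$; this forces $\hat\cK_0=\cK_0$ for every minimizer.

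First I would assemble the ingredients. A quantitative reading of the argument behind Theorem \ref{thm: consistent} gives $\|\tilde\theta-\theta_0\|=O_P\!\big(\delta^{-1}\max_k\|\tilde\theta_k-\theta_k^*\|\big)=O_P(n^{-\nu_2})$ under Condition \ref{cond: estimate rate}. Writing $\tilde b_k=\tilde\theta_k-\tilde\theta$, it follows that $\max_{k\in\cK_0}\|\tilde b_k\|\le\max_k\|\tilde\theta_k-\theta_k^*\|+\|\tilde\theta-\theta_0\|=O_P(n^{-\nu_2})$, hence $\min_{k\in\cK_0}\tilde w_k=\big(\max_{k\in\cK_0}\|\tilde b_k\|\big)^{-\alpha}$ is of order $n^{\nu_2\alpha}$ in probability; while for $k\in\cK_0^c$, $\|\tilde b_k\|\ge\inf_{j\in\cK_0^c}\|b_j^*\|-o_P(1)$ uniformly with $\inf_{j\in\cK_0^c}\|b_j^*\|>0$, so $\max_{k\in\cK_0^c}\tilde w_k=O_P(1)$. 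Condition \ref{cond: matrix rate 2} gives $\max_k\|\tilde V_k\|=O_P(1)$ and $\min_k\lambda_{\rm min}(\tilde V_k)$ bounded away from zero with probability tending to one. Finally, Theorem \ref{thm: equivalence} applies (Condition \ref{cond: matrix rate 2} implies Condition \ref{cond: matrix rate}), and combining $\|\hat\theta-\hat\theta_{\rm IVW}\|=O_P(K/n)$ with the elementary bound $\|\hat\theta_{\rm IVW}-\theta_0\|=O_P(\max_{k\in\cK_0}\|\tilde\theta_k-\theta_0\|)=O_P(n^{-\nu_2})$ (since $\hat\theta_{\rm IVW}$ is a well-conditioned weighted average of $\{\tilde\theta_k\}_{k\in\cK_0}$ by Conditions \ref{cond: existence of nonbias} and \ref{cond: matrix rate 2}) yields $\|\hat\theta-\theta_0\|=O_P(n^{-\nu_2}+n^{\nu_1-1})$.

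For $k\in\cK_0$, since $\tilde\theta_k-\theta_0=\tilde\theta_k-\theta_k^*$ and $\tilde\pi_k\le1$, one gets uniformly $\|\tilde\pi_k\tilde V_k(\tilde\theta_k-\hat\theta)\|\le\|\tilde V_k\|\big(\|\tilde\theta_k-\theta_0\|+\|\hat\theta-\theta_0\|\big)=O_P(n^{-\nu_2}+n^{\nu_1-1})$, whereas $\lambda\tilde w_k$ is of order $n^{-1}\tilde w_k$ and hence at least of order $n^{\nu_2\alpha-1}$ in probability, uniformly over $k\in\cK_0$. The hypothesis $\alpha>\max\{\nu_1\nu_2^{-1},\nu_2^{-1}-1\}$ is precisely equivalent to $n^{-\nu_2}=o(n^{\nu_2\alpha-1})$ together with $n^{\nu_1-1}=o(n^{\nu_2\alpha-1})$, so the strict inequality $\|\tilde\pi_k\tilde V_k(\tilde\theta_k-\hat\theta)\|<\lambda\tilde w_k$ holds for all $k\in\cK_0$ and thus $\hat b_k=0$; i.e.\ $\cK_0\subseteq\hat\cK_0$ with probability tending to one. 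For $k\in\cK_0^c$, $\|\tilde\theta_k-\hat\theta\|\ge\|b_k^*\|-\|\tilde\theta_k-\theta_k^*\|-\|\hat\theta-\theta_0\|\ge\inf_{j\in\cK_0^c}\|b_j^*\|-o_P(1)$ uniformly, so $\|\tilde\pi_k\tilde V_k(\tilde\theta_k-\hat\theta)\|\ge\tilde\pi_k\,\lambda_{\rm min}(\tilde V_k)\,\|\tilde\theta_k-\hat\theta\|$ is, with probability tending to one, bounded below by a positive constant times $C_\pi/K$, whereas $\lambda\tilde w_k=O_P(1/n)$ uniformly over $k\in\cK_0^c$; since $K/n\to0$ (implied by $K\log n/n\to0$) the left side eventually exceeds $\lambda\tilde w_k$, forcing $\hat b_k\ne0$, i.e.\ $\hat\cK_0\subseteq\cK_0$ with probability tending to one. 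Combining the two inclusions gives $P(\hat\cK_0=\cK_0)\to1$.

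The hard part is the inclusion $\cK_0\subseteq\hat\cK_0$: it is the only step where the rate of the adaptive weights $\tilde w_k$---equivalently the rate of the initial estimator $\tilde\theta$ and the exponent $\alpha$---genuinely matters, through the tight balance $n^{-\nu_2}\vee n^{\nu_1-1}=o(\lambda\tilde w_k)$. Making this work forces one to extract a quantitative convergence rate for $\|\tilde\theta-\theta_0\|$ from the proof of Theorem \ref{thm: consistent} and for $\|\hat\theta-\theta_0\|$ from Theorem \ref{thm: equivalence}, and then to render every $O_P$ bound uniform over the possibly diverging number of sources---which is where $K=O(n^{\nu_1})$ and $K\log n/n\to0$ enter, via union bounds over the $K$ sources. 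The reverse inclusion is comparatively soft, relying only on the biases $\|b_k^*\|$ being bounded away from zero on $\cK_0^c$ and on $K/n\to0$.
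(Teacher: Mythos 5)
Your proposal is correct, and it proves the theorem by the same underlying mechanism as the paper --- comparing, source by source, the ``residual'' $\tilde{\pi}_{k}\|\tilde{V}_{k}(\tilde{\theta}_{k}-\hat{\theta})\|$ against the adaptive penalty level $\lambda\tilde{w}_{k}$ via the block subgradient condition at $b_{k}=0$ --- but it organizes the two inclusions differently. For $\hat{\cK}_{0}\subseteq\cK_{0}$ (i.e.\ $\hat{b}_{k}\neq 0$ on $\cK_{0}^{c}$) you and the paper argue essentially identically: the KKT condition plus $\tilde{\pi}_{k}\gtrsim 1/K$, $\lambda\tilde{w}_{k}=O_{P}(1/n)$ and $\|b_{k}^{*}\|$ bounded away from zero. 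For $\cK_{0}\subseteq\hat{\cK}_{0}$, the paper does not redo any rate comparison: it invokes its Lemma on the equivalence $\cM=\bar{\cM}$ between the full problem and the oracle problem (in which $b_{k}=0$ is imposed a priori on $\cK_{0}$), so that $\hat{b}_{k}=0$ for $k\in\cK_{0}$ is automatic on the good event $\cS$ already constructed for Theorem~\ref{thm: equivalence}. You instead treat Theorem~\ref{thm: equivalence} as a black box, combine $\|\hat{\theta}-\hat{\theta}_{\rm IVW}\|=O_{P}(K/n)$ with $\|\hat{\theta}_{\rm IVW}-\theta_{0}\|=O_{P}(n^{-\nu_{2}})$, and verify directly that $\|\tilde{\pi}_{k}\tilde{V}_{k}(\tilde{\theta}_{k}-\hat{\theta})\|=O_{P}(n^{-\nu_{2}}+n^{\nu_{1}-1})=o_{P}(\lambda\tilde{w}_{k})$ uniformly on $\cK_{0}$, which is exactly where $\alpha>\max\{\nu_{1}\nu_{2}^{-1},\nu_{2}^{-1}-1\}$ is used; your observation that this hypothesis is precisely the conjunction of the two needed rate inequalities is correct and is a nice way to see why the exponent condition is what it is. The only point to be slightly careful about is non-uniqueness of the minimizer: your ``for every minimizer'' claim needs the bound on $\|\hat{\theta}-\hat{\theta}_{\rm IVW}\|$ to hold uniformly over the solution set, which the paper's Lemma~\ref{lem: close set} does supply (it bounds every element of $\cM$ on $\cS$), so this is a presentational gap rather than a mathematical one. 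Your route is marginally more self-contained for this theorem; the paper's route gets the $\cK_{0}$ half for free from machinery it must build anyway.
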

Proof of this theorem is relegated to the supplementary material.
\subsection{Asymptotic normality}\label{subsec: AN}
In this subsection, we establish the asymptotic normality of the proposed estimator $\hat{\theta}$. 
Under Conditions \ref{cond: existence of nonbias}, \ref{cond: matrix rate} and \ref{cond: estimate rate}, if $K = o(n^{1/2})$, then $\|\hat{\theta} - \hat{\theta}_{\rm IVW}\| = o_{P}(n^{-1/2})$. If $\hat{\theta}_{\rm IVW}$ is $n^{1/2}$-asymptotically normal, then $\hat{\theta}$ is $n^{1/2}$-asymptotically normal and has the same asymptotic variance as $\hat{\theta}_{\rm IVW}$. There exist some results on asymptotic normality of $\hat{\theta}_{\rm IVW}$ in the literature. However, these results either focus on the fixed dimension case \citep{lin2010relative,zhu2021least} or are only suited to some specific estimators under sparse linear or generalized liner model \citep{battey2018distributed}. Here, we establish the asymptotic normality of $\hat{\theta}_{\rm IVW}$, and hence of $\hat{\theta}$ in a general setting where $d$ and $K$ can diverge and $\tilde{\theta}_{k}$ ($k\in \cK_{0}$) can be any estimator that admits uniformly an asymptotically linear representation defined in the following. Suppose the original data from the $k$th source $Z_{1}^{(k)},\dots,Z_{n_{k}}^{(k)}$ are i.i.d. copies of $Z^{(k)}$ and the data from different data sources are independent from each other. Then we are ready to state the condition.
\begin{condition}[Uniformly asymptotically linear representation]\label{cond: uniform AL}
	For $k\in \cK_{0}$, there is some function $\Psi_{k}(\cdot)$ such that $E[\Psi_{k}(Z^{(k)})] = 0$ and
	\begin{equation}\label{eq: asymptotically linear representation}
		\tilde{\theta}_{k} - \theta_{0} = \frac{1}{n_{k}}\sum_{i=1}^{n_{k}} \Psi_{k}(Z_{i}^{(k)}) + R_{k},
	\end{equation}
	where $R_{k}$ satisfies $\max_{k}\|R_{k}\| = o_{P}(n^{-1/2})$.
\end{condition}

Some examples satisfying Condition 5 will be discussed later. With the assistance of the uniformly asymptotically linear representation condition (Condition \ref{cond: uniform AL}), we can establish the asymptotic normality of $\hat{\theta}_{\rm IVW}$ and hence of $\hat{\theta}$.
\begin{theorem}\label{thm: AN}
	Suppose Conditions \ref{cond: existence of nonbias}, \ref{cond: estimate rate} and \ref{cond: uniform AL} hold. If (i) $\nu_{1} < 1/2$; (ii) there are some deterministic matrices $V_{k}^{*}$, $k\in \cK_{0}$, such that $\max_{k\in \cK_{0}}\|\tilde{V}_{k} - V_{k}^{*}\| = o_{P}(n^{- 1 / 2 + \nu_{2}})$; (iii) for $k\in \cK_{0}$, the eigenvalues of $V_{k}^{*}$ and  $\var\left[\Psi_{k}(Z^{(k)})\right]$ is bounded away from zero and infinity; (iv) for $k\in \cK_{0}$, $u\in \bbR^{d}$, $\|u\| = 1$ and some $\tau > 0$, $E[|u^{\T}\Psi_{k}(Z^{(k)})|^{1 + \tau}]$ are bounded; (v) $\lambda \asymp 1/n$ and $\alpha > \max\{\nu_{1}\nu_{2}^{-1}, \nu_{2}^{-1} - 1\}$, then for any fixed $q$ and $q\times d$ matrix $W_{n}$ such that the eigenvalues of $W_{n}W_{n}^{\T}$ are bounded away from zero and infinity, we have
	\[n^{1/2}\cI_{n}^{-1/2}W_{n}(\hat{\theta} - \theta_{0})\to N(0, I_{q})\]
	in distribution, where $I_{q}$ is the identity matrix of order $q$, \[\cI_{n} = \sum_{k \in \cK_{0}}\tilde{\pi}_{k}H_{n,k}\var\left[ \Psi_{k}(Z^{(k)})\right]H_{n,k}^{\T}\]
	with $H_{n,k} = W_{n}V_{0}^{*-1}V_{k}^{*}$ and $V_{0}^{*} = \sum_{k\in \cK_{0}} \tilde{\pi}_{k}V_{k}^{*}$.
\end{theorem}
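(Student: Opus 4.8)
\emph{Proof plan.} The strategy is to reduce the claim, in two steps, to a Lindeberg--Feller central limit theorem for a triangular array of independent summands. First I would check that all the hypotheses of Theorem \ref{thm: equivalence} hold under the present assumptions: Conditions \ref{cond: existence of nonbias} and \ref{cond: estimate rate} are assumed directly, Condition \ref{cond: matrix rate} follows from the eigenvalue bounds in (iii) and the rate in (ii), and (v) supplies exactly the tuning conditions $\lambda \asymp 1/n$ and $\alpha > \max\{\nu_{1}\nu_{2}^{-1}, \nu_{2}^{-1}-1\}$. Theorem \ref{thm: equivalence} then gives $\|\hat{\theta} - \hat{\theta}_{\rm IVW}\| = O_{P}(K/n)$, and since $\nu_{1} < 1/2$ we have $K/n = O(n^{\nu_{1}-1}) = o(n^{-1/2})$. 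Once we show (below) that $\cI_{n}$ has eigenvalues bounded away from $0$ and $\infty$, so that $\|\cI_{n}^{-1/2}\|$ is bounded, and using that $q$ is fixed and $W_{n}W_{n}^{\T}$ has bounded eigenvalues, it follows that $n^{1/2}\cI_{n}^{-1/2}W_{n}(\hat{\theta} - \hat{\theta}_{\rm IVW}) = O_{P}(n^{\nu_{1}-1/2}) = o_{P}(1)$. Hence it suffices to prove $n^{1/2}\cI_{n}^{-1/2}W_{n}(\hat{\theta}_{\rm IVW} - \theta_{0}) \to N(0, I_{q})$ in distribution.

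Next I would linearize $\hat{\theta}_{\rm IVW}$. The first-order condition for \eqref{eq: oracle IVW} yields the closed form $\hat{\theta}_{\rm IVW} - \theta_{0} = \tilde{V}_{0}^{-1}\sum_{k\in\cK_{0}}\tilde{\pi}_{k}\tilde{V}_{k}(\tilde{\theta}_{k} - \theta_{0})$ with $\tilde{V}_{0} = \sum_{k\in\cK_{0}}\tilde{\pi}_{k}\tilde{V}_{k}$. I would then replace $\tilde{V}_{k}$ by $V_{k}^{*}$ and $\tilde{V}_{0}$ by $V_{0}^{*} = \sum_{k\in\cK_{0}}\tilde{\pi}_{k}V_{k}^{*}$. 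This is the delicate bookkeeping step: Condition \ref{cond: estimate rate} together with $\delta \le 1$ gives $\max_{k\in\cK_{0}}\|\tilde{\theta}_{k} - \theta_{0}\| = O_{P}(n^{-\nu_{2}})$, so by (ii) we get $\|(\tilde{V}_{k} - V_{k}^{*})(\tilde{\theta}_{k} - \theta_{0})\| = o_{P}(n^{-1/2+\nu_{2}})O_{P}(n^{-\nu_{2}}) = o_{P}(n^{-1/2})$ uniformly in $k\in\cK_{0}$, and summing against the weights $\tilde{\pi}_{k}$ (total mass at most one) preserves this. The discrepancy between $\tilde{V}_{0}^{-1}$ and $V_{0}^{*-1}$ is treated the same way, using that the eigenvalues of $V_{0}^{*}$ are bounded away from $0$ and $\infty$ (from (iii) and Condition \ref{cond: existence of nonbias}) and that $\sum_{k\in\cK_{0}}\tilde{\pi}_{k}V_{k}^{*}(\tilde{\theta}_{k}-\theta_{0}) = O_{P}(n^{-\nu_{2}})$. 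Finally I would substitute the uniformly asymptotically linear representation of Condition \ref{cond: uniform AL}, whose remainders satisfy $\|\sum_{k\in\cK_{0}}\tilde{\pi}_{k}V_{k}^{*}R_{k}\| = o_{P}(n^{-1/2})$, and use the identity $\tilde{\pi}_{k}/n_{k} = 1/n$, to arrive at
\[
\hat{\theta}_{\rm IVW} - \theta_{0} = \frac{1}{n}V_{0}^{*-1}\sum_{k\in\cK_{0}}\sum_{i=1}^{n_{k}}V_{k}^{*}\Psi_{k}(Z_{i}^{(k)}) + o_{P}(n^{-1/2}).
\]

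The final step is the CLT. Set $T_{k,i} = \cI_{n}^{-1/2}H_{n,k}\Psi_{k}(Z_{i}^{(k)})$ with $H_{n,k} = W_{n}V_{0}^{*-1}V_{k}^{*}$, so that $n^{1/2}\cI_{n}^{-1/2}W_{n}(\hat{\theta}_{\rm IVW} - \theta_{0}) = n^{-1/2}\sum_{k\in\cK_{0}}\sum_{i=1}^{n_{k}}T_{k,i} + o_{P}(1)$. The $T_{k,i}$ are independent across $(k,i)$ (i.i.d.\ within a source, independent across sources), have mean zero since $E[\Psi_{k}(Z^{(k)})]=0$, and $n^{-1}\sum_{k\in\cK_{0}}\sum_{i=1}^{n_{k}}\var[T_{k,i}] = \cI_{n}^{-1/2}\{\sum_{k\in\cK_{0}}\tilde{\pi}_{k}H_{n,k}\var[\Psi_{k}(Z^{(k)})]H_{n,k}^{\T}\}\cI_{n}^{-1/2} = I_{q}$ by the definition of $\cI_{n}$. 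The eigenvalue bounds on $\cI_{n}$ that I used above follow from here: boundedness of $\lambda_{\max}(\cI_{n})$ uses the eigenvalue bounds in (iii), Condition \ref{cond: existence of nonbias} and the bound on $W_{n}W_{n}^{\T}$; for the lower bound, $\cI_{n} \succeq \min_{k\in\cK_{0}}\lambda_{\min}(\var[\Psi_{k}(Z^{(k)})])\sum_{k\in\cK_{0}}\tilde{\pi}_{k}H_{n,k}H_{n,k}^{\T} \succeq c\,(\sum_{k\in\cK_{0}}\tilde{\pi}_{k})\,W_{n}V_{0}^{*-2}W_{n}^{\T} \succeq c'I_{q}$. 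It then remains to apply the multivariate Lindeberg--Feller CLT for triangular arrays via the Cram\'er--Wold device: for a fixed unit vector $a\in\bbR^{q}$, verify $n^{-1}\sum_{k\in\cK_{0}}\sum_{i=1}^{n_{k}}E[(a^{\T}T_{k,i})^{2}\bbI(|a^{\T}T_{k,i}| > \epsilon n^{1/2})] \to 0$. Here $a^{\T}T_{k,i} = (H_{n,k}^{\T}\cI_{n}^{-1/2}a)^{\T}\Psi_{k}(Z_{i}^{(k)})$, where the loading vector $H_{n,k}^{\T}\cI_{n}^{-1/2}a$ has norm bounded uniformly in $k$ and $n$; so the moment bound in (iv), applied to the corresponding unit vector, together with the variance bounds in (iii), controls the one-dimensional moments of $a^{\T}T_{k,i}$ uniformly and free of $d$, whence the Lindeberg sum vanishes (each summand contributes only an $O(1/n)$ fraction of the total variance $I_{q}$, so uniform asymptotic negligibility is also immediate). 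Concluding $n^{-1/2}\sum_{k\in\cK_{0}}\sum_{i=1}^{n_{k}}a^{\T}T_{k,i} \to N(0,1)$ for every unit $a$ gives the claim.

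I expect the main obstacle to be twofold. First, keeping \emph{every} remainder at the $o_{P}(n^{-1/2})$ scale while $K$ and $d$ diverge: the rates in (ii) and in Condition \ref{cond: estimate rate} are calibrated precisely so that the cross terms $(\tilde{V}_{k}-V_{k}^{*})(\tilde{\theta}_{k}-\theta_{0})$ are negligible, and this is what pins down the admissible range of $\alpha,\nu_{1},\nu_{2}$. Second, and more delicate, verifying the Lindeberg condition for a triangular array whose width $K$ and ambient dimension $d$ both grow; the resolution is that after the reduction above only \emph{scalar} projections $u^{\T}\Psi_{k}(Z^{(k)})$ with uniformly bounded loadings enter, and Condition \ref{cond: uniform AL} together with (iii)--(iv) controls exactly these, uniformly in $k$ and independently of $d$.
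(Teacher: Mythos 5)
Your proposal follows essentially the same route as the paper's proof: reduce to $\hat{\theta}_{\rm IVW}$ via Theorem \ref{thm: equivalence}, swap $\tilde{V}_{k}$ for $V_{k}^{*}$ using the rate in (ii) against $\max_{k}\|\tilde{\theta}_{k}-\theta_{k}^{*}\|=O_{P}(n^{-\nu_{2}})$, substitute the asymptotically linear representation of Condition \ref{cond: uniform AL}, and finish with a Lindeberg--Feller CLT for the triangular array $\eta_{k,i}=\cI_{n}^{-1/2}W_{n}V_{0}^{*-1}V_{k}^{*}\Psi_{k}(Z_{i}^{(k)})/\sqrt{n}$. Your treatment is, if anything, slightly more explicit than the paper's on the eigenvalue bounds for $\cI_{n}$ and the Cram\'er--Wold reduction of the Lindeberg condition, but the decomposition and all key estimates coincide.
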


Proof of this theorem is in the supplementary material.
Many estimators have the asymptotically linear representation \eqref{eq: asymptotically linear representation} with $R_{k} = o_{P}(n_{k}^{-1/2})$, see for instance \cite{bickel1993efficient,spokoiny2013bernstein,zhou2018new} and \cite{chen2020robust}. For these estimators, Condition \ref{cond: uniform AL} is satisfied if remainder terms $R_{k}$'s are uniformly small for $k\in \cK_{0}$ in the sense that $\max_{k\in \cK_{0}}\|R_{k}\| = o_{P}(n^{-1/2})$. If $K$ is fixed, then $\max_{k\in \cK_{0}}\|R_{k}\| = o_{P}(n^{-1/2})$ as long as $\tilde{\pi}_{k}$'s are bounded away from zero. For the case where $K\to \infty$, in the supplementary material, we show that Condition \ref{cond: uniform AL} holds under some regularity conditions if $\tilde{\theta}_{k}$'s are M-estimators, i.e.
\[\tilde{\theta}_{k} = \mathop{\arg\min}_{\theta}\frac{1}{n_{k}}\sum_{i=1}^{n_{k}}L_{k}(Z_{i}^{(k)}, \theta),\]
for $k=1,\dots,K$, where $L_{k}(\cdot,\cdot)$ is some loss function that may differ from source to source. The result on M-estimator covers many commonly used estimators, e.g., the least squares estimator and maximum likelihood estimator. 

Theorem \ref{thm: AN} establishes the asymptotic normality of $\hat{\theta}$. Unlike the existing work that can deal with biased sources in the meta-analysis literature \citep{singh2005combining,shen2020fusion,zhai2022data}, both the proposed estimator $\hat{\theta}$ and $\tilde{\theta}$ can be obtained without any knowledge on $\cK_{0}$.  Compared to existing estimators in the literature of Mendelian randomization that focus on a one-dimensional parameter \citep{kang2016instrumental,bowden2016consistent,windmeijer2019use,hartwig2017robust,guo2018confidence, ye2021debiased}, the proposed $\hat{\theta}$ is applicable to the case where a multidimensional parameter is of interest. Moreover, the corresponding theoretical results
are more general in the sense that they allow for the divergence of both $d$ and $K$ as the sample size increases. Thus besides univariable Mendelian randomization, our method can also be applied to multivariable Mendelian randomization \citep{burgess2015multivariable,rees2017extending,sanderson2019examination} in the presence of invalid instruments. Recently, \cite{tu2021variance} developed a method that can deal with biased sources and also allows $d$ and $K$ to diverge. In contrast to their work, the estimator obtained by our method achieves the $n^{1/2}$-asymptotic normality without requiring the proportion of biased sources among all sources to converge to zero. According to the discussion after Proposition \ref{prop: identification}, the asymptotic normality of $\hat{\theta}$ is guaranteed even if more than half of the data come from biased sources. Therefore, our method is quite robust against biased sources and this is confirmed by our simulation results in the next section.

\section{Simulation}\label{sec: simulation}
In this section, we conducted three simulation studies to evaluate the empirical performance of the proposed methods. We consider different combinations of $d$ and $K$, with $d = 3, 18$ and $K =10, 30$. 

\subsection{Least squares regression}\label{subsec: sim-LS}
First, we consider the case where $\tilde{\theta}_{k}$'s are obtained via least squares. Let $1_{s}$ be the $s$ dimensional vector consisting of $1$'s and $\otimes$ be the Kronecker product. In this simulation,
the data from the $k$th source are generated from the following data generation process:\\
$X_{k} \sim N_{d}(0, 3I_{d})$,  $Y_{k}\mid X_{k} \sim N_{d}(X_{k}^{\T}(\theta_{0} + b_{k}^{*}), 1)$,
where $I_{d}$ is the identity matrix of order $d$, $\theta_{0} = 1_{d/3}\otimes(2, 1, -1)^{\T}$,  $(b_1^{*}, \dots, b_{K}^{*}) = 1_{K/10}^{\T}\otimes 1_{d/3}\otimes B$ and 
\begin{equation}\label{eq: bias matrix}
B = \left( \begin{array}{rrrrrrrrrr}
0 &0 &5 &-1&1 &1 &-2&-2&5 &-1\\
0 &0 &0 &0 &0 &-1&0 &2 &5 &-1\\
0 &0 &0 &0 &-1&1 &2 &-2&5 &1
\end{array}\right).
\end{equation}
From each data source, an i.i.d sample of size  $n_{*}$ is generated. We consider three different values of $n_{*}$, namely, $n_{*} = 100, 200$ or $500$.
In this simulation setting, only $20\%$ of the data come from unbiased sources. However, the biases do not lie in the same direction and it can be verified that $\delta > 0$ by straightforward calculations. Let $\tilde{\theta}_{k}$ be the least squares estimator from the $k$th data source. In the simulation, we simply take $\tilde{V}_{k}$ to be identity matrix to ensure that all the conditions on $\tilde{V}_{k}$ in this paper are satisfied.
We compute the naive estimator $\sum_{k = 1}^{K}\tilde{\theta}_{k} / K$, the oracle estimator $\hat{\theta}_{\rm IVW}$, the \emph{iFusion} estimator proposed by \cite{shen2020fusion}, the initial estimator $\tilde{\theta}$ and the proposed estimator $\hat{\theta}$. Note that the \emph{iFusion} estimator is infeasible unless at least one data source is known to be unbiased. In this section, we always assume that the first data source is known to be unbiased when computing the \emph{iFusion} estimator. This information is not required by $\tilde{\theta}$ and $\hat{\theta}$. The following table presents the norm of the bias vector (NB) and summation of the component-wise standard error (SSE) of these estimators calculated from 200 simulated data sets for all the four combinations of $d$ and $K$ with $d=3,18$ and $K = 10, 30$.

\centerline{[Insert Table~\ref{table: sim1} about here.]}

The naive estimator has a large bias which renders its small standard error meaningless. The bias of all the other estimators decreases as $n_{*}$ increases. The \emph{iFusion} estimator performs similarly to the oracle estimator when $d$ and $K$ are small. However, if $d = 18$ and $K = 30$, it has a much larger standard error compared to the oracle estimator and $\hat{\theta}$. The initial estimator $\tilde{\theta}$ performs well in terms of standard error. Nevertheless, it has a far larger bias compared to the oracle estimator and $\hat{\theta}$, especially when $d$ and $K$ are large. The reason may be that it is not $\sqrt{n}$-consistent. The performance of $\hat{\theta}$ is similar to the oracle estimator. This confirms the asymptotic equivalence between $\hat{\theta}$ and $\hat{\theta}_{\rm IVW}$ established in Section \ref{subsec: c&o}. 

Next, we evaluate the performance of our methods when all the data sources are unbiased. We set $b_{k}^{*}$ to be a zero vector for $k = 1,\dots, K$ while keeping other parameters unchanged. In this scenario, the naive estimator reduces to the oracle estimator. NB and SSE of the estimators calculated from 200 simulated data sets for all the combinations of $d$ and $K$ are summarized in the following table.

\centerline{[Insert Table~\ref{table: sim2} about here.]}

Table \ref{table: sim2} shows that the \emph{iFusion} estimator has a slightly larger bias and a much larger standard error compared to other estimators when $d = 18$. All other estimators have similar performance when there are no biased sources. This implies that there is little loss of efficiency to apply our methods when all the sources are unbiased.

\subsection{Logistic regression}
In this subsection, we conducted a simulation study under the scenario where the responses are binary and $\tilde{\theta}_{k}$'s are obtained via logistic regression. 
All simulation settings are the same as in Section \ref{subsec: sim-LS} except for that
$Y_{k}\mid X_{k} \sim {\rm Bernoulli}(t(X_{k}^{\T}(\theta_{0} + b_{k}^{*})))$ and $\tilde{\theta}_{k}$ is the maximum likelihood estimator of logistic regression model from the $k$th data source,
where $t(x) = \exp(x)/(1+\exp(x))$ is the logistic function, $\theta_{0} = 0.1 \times 1_{d/3}\otimes(2, 1, -1)^{\T}$,  $(b_1^{*}, \dots, b_{K}^{*}) = 0.5\times 1_{K/10}^{\T}\otimes 1_{d/3}\otimes B$ and $B$ is defined in \eqref{eq: bias matrix}. We add a small ridge penalty when solving $\tilde{\theta}_{k}$ to avoid the problem that the maximum likelihood estimator may not be uniquely determined in finite sample \citep{silvapulle1981existence}. 
NB and SSE of these estimators calculated from 200 simulated data sets are summarized in the following table.

\centerline{[Insert Table~\ref{table: sim3} about here.]}

The naive estimator has a large bias and the bias does not decrease as $n_{*}$ increases. The bias of all the other estimators decreases as $n_{*}$ increases. The \emph{iFusion} estimator has a bias similar to that of the oracle estimator. Its standard error is much larger than the oracle estimator and $\hat{\theta}$ especially when $d$, $K$ are large and $n_{*}$ is small.  The initial estimator $\tilde{\theta}$ has a much larger bias compared to the oracle estimator and the proposed estimator $\hat{\theta}$, which are consistent with the simulation results under least squares regression. The performance of $\hat{\theta}$ is similar to the oracle estimator. 

Next, we evaluate the performance of our methods when all the data sources are unbiased. We set $b_{k}^{*}$ to be the zero vector for $k = 1,\dots, K$ while keeping other parameters unchanged. The naive estimator reduces to the oracle estimator in this scenario. NB and SSE of the estimators calculated from 200 simulated data sets are summarized in the following table.

\centerline{[Insert Table~\ref{table: sim4} about here.]}

All the estimators have similar performance in Table \ref{table: sim4} except for that \emph{iFusion} estimator has a much larger standard error compared to other estimators, and there is little loss of efficiency to apply our methods when all the sources are unbiased.

\subsection{Mendelian randomization with invalid instruments}
We consider Mendelian randomization with invalid instruments in this subsection. To closely mimic what we will encounter in practice, we generate data based on a real-world data set, the BMI-SBP data set in the R package \emph{mr.raps} (version 0.4) of \cite{zhao2020statistical}. The data set contains estimates of the effects of $160$ different SNPs on Body Mass Index (BMI) $\{\bar{\beta}_{k}\}_{k=1}^{160}$ and the corresponding standard error $\{\bar{\sigma}_{1,k}\}_{k=1}^{160}$ from a study by the Genetic Investigation of ANthropometric Traits consortium \citep{locke2015genetic} (sample size: 152893), and estimates of the effects on Systolic Blood Pressure (SPB) $\{\bar{\gamma}_{k}\}_{k=1}^{160}$ and the corresponding standard error $\{\bar{\sigma}_{2,k}\}_{k=1}^{160}$ from  the UK BioBank (sample size: 317754). The goal is to estimate the causal effect of BMI on SPB. 

In this simulation, we generate data via the following process:
\[\tilde{\beta}_{k} \sim N(\bar{\beta}_{k}, \bar{\sigma}_{1,k}^{2}), \quad k= 1,\dots, 160,\] 
\[\tilde{\gamma}_{k} \sim N(\bar{\beta}_{k}\theta_{0} + 0.15 + 3\bar{\beta}_{k}, \bar{\sigma}_{2, k}^{2}),\quad k=1,\dots, 100\]
and 
\[\tilde{\gamma}_{k} \sim N(\bar{\beta}_{k}\theta_{0}, \bar{\sigma}_{2, k}^{2}), \quad k=101,\dots, 160\]
where $\theta_{0} = 1$. Then $\tilde{\theta}_{k}$ is given by $\tilde{\gamma}_{k}/\tilde{\beta}_{k}$. Under this data generation process, $100$ out of $160$ instruments are invalid instruments. We apply the proposed methods to estimate $\theta_{0}$ based on $\tilde{\theta}_{k}$'s.
For comparison,  we apply five standard methods in Mendelian randomization, namely the MR-Egger regression \citep{bowden2015mendelian}, the weighted median method \citep{bowden2016consistent}, the IVW method, the weighted mode method \citep{hartwig2017robust} and the robust adjusted
profile score method \citep[RAPS,][]{zhao2020statistical}. Results of these five methods are calculated by the R package \emph{TwoSampleMR} (\texttt{https://github.com/MRCIEU/ TwoSampleMR}). Bias and standard error (SE) of the estimators based on $200$ simulations are summarized in the following table.

\centerline{[Insert Table~\ref{table: sim5} about here.]}

Table \ref{table: sim5} shows that $\hat{\theta}$ has the smallest bias among all the estimators. The standard error of the proposed $\hat{\theta}$ is smaller than other estimators except for the weighted median. However, the weighted median estimator has a much larger bias compared to $\hat{\theta}$. 

\section{Real Data Analysis}\label{sec: real data analysis}
\subsection{Effects of surgical procedure for the treatment of moderate periodontal disease}
In this subsection, we apply our methods to the data set provided in \cite{berkey1998meta}. Data used in this subsection are available from the R package \emph{mvmeta} (\texttt{https://cran.r-project.org/web/ packages/mvmeta/index.html}). The data set contains results of five randomized controlled trials comparing the effect of surgical and non-surgical treatments for moderate periodontal disease. In all these studies, different segments of each patients' mouth were randomly allocated to different treatment procedures. The two outcomes, probing depth (PD) and attachment level (AL), were
assessed from each patient. The goal of treatment is to decrease PD and to increase
AL around the teeth. The data set provides the estimated benefit of surgical treatment over non-surgical treatment in PD and AL (positive values mean that
surgery results in a better outcome). The sample size of each study and estimated covariance matrix of the estimators are also available. The inverse-variance weighted method using all data sources produces an estimator $(0.307, -0.394)$ for the effect on $(\text{PD, AL})$. By applying our methods, we obtain $\tilde{\theta} = (0.260, -0.310)$ and $\hat{\theta} = (0.282, -0.303)$. Next, we assess the robustness of our methods against the bias of the published results. To do this, we add a perturbation $t\times(1, -3)$ to the first published result in the data set. After perturbing, the first published result becomes a biased estimator for the parameter of interest. We plot the resulting estimates with different values of $t$ in the following figure.

\centerline{[Insert Fig~\ref{fig: stable} about here.]}

Figure \ref{fig: stable} shows that $\tilde{\theta}$ and $\hat{\theta}$ provide quite stable estimates under different values of $t$ compared to the IVW estimator. Based on the reported estimated covariance matrices and the asymptotic normality of $\hat{\theta}$, we conduct two hypothesis testings to test whether these effects are significant. The $p$-values of the two-sided tests for effect on PD and AL are $1.660\times 10^{-7}$ and $5.278\times 10^{-15}$, respectively. This suggests that both effects are significant at $0.05$ significance level. The $95\%$ confidence intervals for effects of the surgical treatment on PD and AL based on $\hat{\theta}$ are $[0.176, 0.387]$ and $[-0.379,-0.227]$, respectively.  In summary, our result suggests that the surgical treatment has a positive effect on PD and a negative effect on AL and our result is robust against potential bias in the results of the five published trials.

\subsection{Effects of smoking and alcohol in head and neck cancer}
Head and neck cancer is the sixth most common cancer in the world. Established risk factor of this cancer includes smoking and alcohol. However, researchers only have a limited understanding of the causal effect of these risk factors due to the unmeasured confounding \citep{gormley2020multivariable}. Thanks to the recent developments in the genome-wide association study (GWAS), Mendelian randomization \citep{katan2004commentary} has become a powerful tool to tackle the unmeasured confounding problem \citep{kang2016instrumental,bowden2016consistent,windmeijer2019use,hartwig2017robust,guo2018confidence,zhao2020statistical,ye2021debiased}. In Mendelian randomization analysis, as discussed in Example \ref{ex: invalid iv}, each SNP is used as instrumental variables to estimate the causal effect and the estimator is consistent if the SNP is a valid instrument. The final estimator is obtained via combining the estimators produced by each SNP to improve the efficiency. However, if some SNPs are invalid instruments due to pleiotropy, linkage disequilibrium, and population stratification, the final estimator may be biased. 

In this subsection, we use the comprehensive smoking index (CSI) and alcoholic drinks per week (DPW) as quantitative measures of smoking and alcohol intake and conduct the analysis using the genetic data provided by \cite{gormley2020multivariable}. A copy of the data used in this subsection is available at \texttt{https://
github.com/rcrichmond/smoking\_alcohol\_headandneckcancer}. The data set contains the estimated effect of $168$ independent SNPs on the head and neck cancer, CSI and DPW and the corresponding standard error. Summary-level data for the effect on head and neck cancer is from a GWAS with sample size of $12,619$ conducted by the Genetic Associations and Mechanisms in Oncology Network \citep{lesseur2016genome}. Summary-level data for the effect on CSI is derived by \cite{wootton2020evidence} from the UK BioBank (sample size $462,690$), and the DPW data is obtained from a GWAS with sample size $226,223$ in the GWAS \& Sequencing Consortium of Alcohol and Nicotine use. See \cite{gormley2020multivariable} for further details of the data.
Following \cite{gormley2020multivariable}, we conduct the univariable Mendelian randomization to analyze the causal effect of CSI and DPW separately.
Estimators of the causal effect of  CSI on head and neck cancer are constructed based on $108$ SNPs used in \cite{gormley2020multivariable}, which produce $108$ estimates. The IVW method that uses all these $108$ estimates gives the estimate of $1.791$.
To mitigate the invalid instrument problem, we combine these $108$ estimators by the procedures proposed in this paper, which gives $\tilde{\theta} = 1.956$ and $\hat{\theta} = 1.856$, respectively. 
The results are close to that produced by IVW method. This is in conformity with the fact that, among the $108$ SNPs, no invalid instrument is identified by \cite{gormley2020multivariable}. The analysis result suggests a positive causal effect of CSI on head and neck cancer with confidence interval $[0.969, 2.744]$ (based on $\hat{\theta}$).

Then the causal effect of  DPW is estimated similarly based on $60$ SNPs used in \cite{gormley2020multivariable}. The result of the IVW method is $2.111$. The results of the proposed methods are $\tilde{\theta} = 1.622$ and $\hat{\theta} = 1.598$. When analysing the causal effect of DPW, \cite{gormley2020multivariable} identify an invalid instrument \emph{rs1229984}. When \emph{rs1229984} is not included, the IVW method based on the remaining $59$ SNPs gives the estimate $1.381$, which is quite different from the case where \emph{rs1229984}  is included. The proposed estimators based on the remaining $59$ SNPs are $\tilde{\theta} = 1.619$ and $\hat{\theta} = 1.590$, which are close to the case where \emph{rs1229984}  is included. This demonstrates the robustness of the proposed methods against the invalid instrument \emph{rs1229984}. The analysis result suggests that DPW has a positive causal effect on the head and neck cancer with confidence interval $[0.414,2.774]$  (based on $\hat{\theta}$ without \emph{rs1229984}). We then compare these results with four standard methods in Mendelian randomization problem with invalid instruments, the MR-Egger regression \citep{bowden2015mendelian}, the weighted median method \citep{bowden2016consistent}, the weighted mode method \citep{hartwig2017robust} and the RAPS method \citep{zhao2020statistical}. Results of these four methods are calculated by the R package \emph{TwoSampleMR} (\texttt{https://github.com/MRCIEU/TwoSampleMR}). In the presence of \emph{rs1229984}, the MR-Egger regression, the weighted median, the weighted mode and the RAPS method produce the estimate $2.797$, $2.968$, $2.837$ and $2.165$ for the causal effect of DPW, respectively. Without \emph{rs1229984}, results of these four methods becomes $1.072$, $1.397$, $1.264$ and $1.637$, respectively. All these four standard methods appears to be much more sensitive to the invalid instrument \emph{rs1229984} compared to the proposed methods.

\section{Discussion}
In this paper, we present a fusion-extraction procedure to combine summary statistics from different data sources in the presence of biased sources. 
The idea of the proposed method is quite general and is applicable to many estimation problems. However, several questions are left open by this paper. First, the results in this paper do not apply to the conventional random-effect model in meta-analysis. It warrant further investigation to extend results in this paper to random-effect models. Second, we assume in this paper that different data sources share the same true parameter but some data sources fail to provide a consistent estimator. In practice, the true parameters in different data sources might be heterogeneous and the estimator from a data sources may converge to the true parameter of the data source \citep{claggett2014meta}. In this case, \eqref{eq: identification} defines the ``least false parameter" that minimizes the weighted average distance to true parameters of each data source. It is of interest to investigate the theoretical properties of $\tilde{\theta}$ and $\hat{\theta}$ in this case.

\newpage
	\appendix

	Throughout this proof, we use $B_{L}$ ($B_{\rm U}$) to denote the lower (upper) bound of a positive sequence that is bounded away from zero (infinity).
	\section{Details of the counter example in Section \ref{subsec: estimation}}\label{sec: counter example}
	In this subsection, we prove \eqref{eq: counter ex} in Section \ref{subsec: estimation}.
	\begin{proof}
		Let $n_{*} = n/K$ and $Z_{k} = \tilde{\theta}_{k} - \theta_{0}$. Then it is easy to verify that $\tilde{\theta}={\rm med}(\tilde{\theta}_{1},\dots,\tilde{\theta}_{K})$ and hence $\tilde{\theta} - \theta_{0} = {\rm med}(Z_{1}, \dots, Z_{K})$ where ${\rm med}(\cdot)$ is the univariate median. Let
		\[F_{K}(z) =\frac{1}{K}\sum_{k=1}^{K}1\{Z_{k}\leq z\}. \] 
		Recall that $h_{*} = \Phi^{-1}((3/8 + \tau/4)/(1/2 + \tau))$.
		By the definition of median, to prove \eqref{eq: counter ex}, it suffices to show $F_{K}(h_{*}/\sqrt{n_{*}}) < 1/2$ with probability approaching one. Note that $E[1\{Z_{k}\leq z\}] = \Phi(\sqrt{n_{*}}(z-b_{k}^{*}))$ for any $z$.
		Letting $z = h_{*}/\sqrt{n_{*}}$, according to the Hoeffding inequality \citep[Proposition 2.5]{wainwright2019high}, we have
		\begin{equation}\label{eq: distribution concentration}
		F_{K}(h_{*}/\sqrt{n_{*}}) - \frac{1}{K}\sum_{k=1}^{K}\Phi(h_{*}-b_{k}^{*}\sqrt{n_{*}}) \leq K^{-\frac{1}{4}}
		\end{equation}
		with probability at least $1-2\exp(-2\sqrt{K})$. Clearly, as $K\to \infty$, $1-2\exp(-2\sqrt{K})\to 1$. 
		Notice that
		\begin{equation}\label{eq: expectation bound}
		\begin{aligned}
		\frac{1}{K}\sum_{k=1}^{K}\Phi(h_{*}-b_{k}^{*}\sqrt{n_{*}})
		&= \frac{1}{K}\sum_{k\in \cK_{0}} \Phi(h_{*}) + o(1) \\
		&= \frac{1}{K}\sum_{i=1}^{\lfloor(\frac{1}{2}+\tau)K\rfloor} \frac{3/8 + \tau/4}{1/2 + \tau} + o(1) \\
		&\leq \left(\frac{1}{2} + \tau\right)\frac{3/8 + \tau/4}{1/2 + \tau} + o(1)\\
		&= \frac{3}{8} + \frac{\tau}{4} + o(1).
		\end{aligned}
		\end{equation} Because $3/8 + \tau/4 < 1/2$,
		combining \eqref{eq: distribution concentration} and \eqref{eq: expectation bound}, we have $F_{K}(h_{*}/\sqrt{n_{*}}) < 1/2$ with probability approaching one, which implies \eqref{eq: counter ex}.
	\end{proof}
	\section{Proof of Proposition \ref{prop: identification}}\label{subsec: proof of proposition}
	\begin{proof}
		Let $G(\theta) = \sum_{k=1}^K \tilde{\pi}_{k}\|\theta^{*}_{k} - \theta\| = \sum_{k \in \cK_{0}}\tilde{\pi}_{k}\|\theta_{0} - \theta\| + \sum_{k \in \cK_{0}^{c}}\tilde{\pi}_{k}\|\theta_{0} + b^{*}_{k} - \theta\|$. For any $\theta^{\prime}\not= \theta_{0}$, the directional derivative of $G(\theta)$ at the point $\theta_{0}$ in the direction $\theta^{\prime} - \theta_{0}$ is 
		\begin{align*}
		&\|\theta^{\prime} - \theta_{0}\|\left(\sum_{k \in \cK_{0}}\tilde{\pi}_{k} + \sum_{k \in \cK_{0}^{c}}\tilde{\pi}_{k}\frac{b^{*\T}_{k}(\theta^{\prime} - \theta_{0})}{\|b^{*}_{k}\|\|\theta^{\prime} - \theta_{0}\|}\right) \\
		& = \|\theta^{\prime} - \theta_{0}\|\left\{\sum_{k \in \cK_{0}}\tilde{\pi}_{k} + \left(\sum_{k \in \cK_{0}^{c}}\tilde{\pi}_{k}\frac{b^{*}_{k}}{\|b^{*}_{k}\|}\right)^\T\frac{\theta^{\prime} - \theta_{0}}{\|\theta^{\prime} - \theta_{0}\|}\right\}\\
		& \geq \|\theta^{\prime} - \theta_{0}\|\left(\sum_{k \in \cK_{0}}\tilde{\pi}_{k} - \left\|\sum_{k \in \cK_{0}^{c}}\tilde{\pi}_{k}\frac{b^{*}_{k}}{\|b^{*}_{k}\|}\right\|\right) > 0.
		\end{align*}
		Hence Proposition \ref{prop: identification} follows from the fact that $G(\theta)$ is convex. 
		
	\end{proof}
	\section{Proof of Theorem \ref{thm: consistent}}\label{subsec: proof of thm consistent}
	Theorem \ref{thm: consistent} is a straightforward corollary of the following Lemma.
	\begin{lemma}\label{lem: consistent}
		If $\delta > 0$, then
		\[
		\|\tilde{\theta} - \theta_{0}\| \leq 2\delta^{-1}\sum_{k=1}^{K}\tilde{\pi}_{k}\|\tilde{\theta}_{k} - \theta^{*}_{k}\|. 
		\]
	\end{lemma}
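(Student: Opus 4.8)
The plan is to control $\tilde\theta$ by comparing the empirical objective $\tilde G(\theta) := \sum_{k=1}^{K}\tilde\pi_{k}\|\tilde\theta_{k}-\theta\|$ with the population objective $G(\theta) := \sum_{k=1}^{K}\tilde\pi_{k}\|\theta^{*}_{k}-\theta\|$ already analysed in the proof of Proposition~\ref{prop: identification}. First, by the triangle inequality applied term by term, $|\tilde G(\theta)-G(\theta)| \le \sum_{k=1}^{K}\tilde\pi_{k}\|\tilde\theta_{k}-\theta^{*}_{k}\| =: \varepsilon$ for every $\theta$, so the two objectives differ by at most $\varepsilon$, uniformly in $\theta$.

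Second, I would prove the linear growth bound $G(\theta)-G(\theta_{0}) \ge \delta\,\|\theta-\theta_{0}\|$ for all $\theta$. Writing
\[
G(\theta)-G(\theta_{0}) = \|\theta-\theta_{0}\|\sum_{k\in\cK_{0}}\tilde\pi_{k} + \sum_{k\in\cK_{0}^{c}}\tilde\pi_{k}\big(\|b^{*}_{k}-(\theta-\theta_{0})\|-\|b^{*}_{k}\|\big),
\]
I would lower bound each summand in the second sum by the subgradient inequality $\|b^{*}_{k}-u\| \ge \|b^{*}_{k}\| - (b^{*}_{k})^{\T}u/\|b^{*}_{k}\|$, which is valid because $\|b^{*}_{k}\|$ is bounded away from zero (hence nonzero) for $k\in\cK_{0}^{c}$. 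Applying Cauchy--Schwarz to $\big(\sum_{k\in\cK_{0}^{c}}\tilde\pi_{k}b^{*}_{k}/\|b^{*}_{k}\|\big)^{\T}(\theta-\theta_{0})$ then reduces the right-hand side precisely to $\delta\,\|\theta-\theta_{0}\|$ by the definition of $\delta$. Equivalently, this is the directional-derivative computation in the proof of Proposition~\ref{prop: identification} combined with the convexity of $G$ (for convex $G$, $G(\theta)-G(\theta_{0})$ dominates the directional derivative of $G$ at $\theta_{0}$ toward $\theta$).

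Finally I would chain the two facts with the optimality $\tilde G(\tilde\theta)\le\tilde G(\theta_{0})$:
\[
G(\theta_{0}) + \delta\|\tilde\theta-\theta_{0}\| \;\le\; G(\tilde\theta) \;\le\; \tilde G(\tilde\theta)+\varepsilon \;\le\; \tilde G(\theta_{0})+\varepsilon \;\le\; G(\theta_{0})+2\varepsilon ,
\]
so that, cancelling $G(\theta_{0})$ and dividing by $\delta>0$, $\|\tilde\theta-\theta_{0}\|\le 2\delta^{-1}\varepsilon = 2\delta^{-1}\sum_{k=1}^{K}\tilde\pi_{k}\|\tilde\theta_{k}-\theta^{*}_{k}\|$, which is Lemma~\ref{lem: consistent}; Theorem~\ref{thm: consistent} then follows because $\sum_{k=1}^{K}\tilde\pi_{k}=1$ gives $\sum_{k}\tilde\pi_{k}\|\tilde\theta_{k}-\theta^{*}_{k}\|\le\max_{k}\|\tilde\theta_{k}-\theta^{*}_{k}\|$.

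The only substantive step is the linear growth bound $G(\theta)-G(\theta_{0})\ge\delta\,\|\theta-\theta_{0}\|$; everything else is bookkeeping. The single point requiring care is that the subgradient inequality is invoked only at $b^{*}_{k}\neq 0$, which is exactly what the standing assumption that $\|b^{*}_{k}\|$ is bounded away from zero for $k\in\cK_{0}^{c}$ provides.
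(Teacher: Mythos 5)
Your proposal is correct and follows essentially the same route as the paper's own proof: the uniform bound $|\tilde G(\theta)-G(\theta)|\le\sum_{k}\tilde\pi_{k}\|\tilde\theta_{k}-\theta^{*}_{k}\|$, the linear growth bound $G(\theta)-G(\theta_{0})\ge\delta\|\theta-\theta_{0}\|$ obtained from the directional-derivative computation in Proposition~\ref{prop: identification} together with convexity, and the optimality of $\tilde\theta$ are exactly the three ingredients the paper chains together. Your direct chain of inequalities is just a rephrasing of the paper's contrapositive formulation, so nothing further is needed.
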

	\begin{proof}
		By the convexity of $G(\theta)$ and the directional derivative given in the proof of Proposition \ref{prop: identification}, we have
		\begin{equation}\label{eq: th1-1}
		G(\theta) - G(\theta_{0}) \geq \delta\|\theta - \theta_{0}\|.
		\end{equation}
		Let $\tilde{G}(\theta) = \sum_{k=1}^K \tilde{\pi}_{k}\|\tilde{\theta}_{k} - \theta\|$. Then by the triangle inequality of the Euclid norm,
		\[|\tilde{G}(\theta) - G(\theta)| \leq \sum_{k=1}^{K}\tilde{\pi}_{k}\|\tilde{\theta}_{k} - \theta^{*}_{k}\|\]
		for any $\theta$. Thus
		\[\tilde{G}(\theta) - \tilde{G}(\theta_{0}) \geq G(\theta) - G(\theta_{0}) - 2\sum_{k=1}^{K}\tilde{\pi}_{k}\|\tilde{\theta}_{k} - \theta^{*}_{k}\|.\]
		This together with \eqref{eq: th1-1} proves
		\[\tilde{G}(\theta) - \tilde{G}(\theta_{0}) > 0\]
		for all $\theta$ satisfying $\|\theta - \theta_{0}\| > 2\delta^{-1}\sum_{k=1}^{K}\tilde{\pi}_{k}\|\tilde{\theta}_{k} - \theta^{*}_{k}\|$. Recalling the definition of $\tilde{\theta}$ in Section \ref{subsec: estimation}, we have $\tilde{G}(\tilde{\theta}) \leq \tilde{G}(\theta_{0})$ and hence $\|\tilde{\theta} - \theta_{0}\| \leq 2\delta^{-1}\sum_{k=1}^{K}\tilde{\pi}_{k}\|\tilde{\theta}_{k} - \theta^{*}_{k}\|$. 
		
	\end{proof}
	
	\section{Proof of Theorem \ref{thm: equivalence}}\label{subsec: proof of thm equivalence}
	To prove Theorem \ref{thm: equivalence}, we first establish two useful lemmas. Here we just state the lemmas and the key ideas. See the next Section for the formal proof of the two lemmas are relegated.
	
	A key step of the proof is to construct a ``good event" that happens with high probability and on the good event $\hat{\theta}$ has some desirable properties.
	
	For any positive numbers $\epsilon_{n}$ and constants $C_{\rm L}>0$ and $C_{\rm U}>1$ such that $C_{\rm L} < B_{\rm L}  \leq B_{\rm U} < C_{\rm U}$, let $\Delta_{\rm M} = \min\{B_{\rm L} - C_{\rm L}, C_{\rm U} - B_{\rm U}\}$. We first construct three event as follows,
	\begin{equation}\label{def: proper set}
	\begin{aligned}
	&\cS_{1} = \left\{\|\tilde{V}_{k} - V_{k}^{*}\| \leq \Delta_{\rm M}, \ k\in \cK_{0}\right\}, \\
	&\cS_{2} = \left\{\|\hat{\theta}_{\rm IVW} - \tilde{\theta}_{k}\| < (2\tilde{\pi}_{k}C_{\rm U})^{-1}\lambda\tilde{w}_{k}, \ k\in \cK_{0}\right\},\\ 
	& \cS_{3} = \left\{\min_{k\in \cK_{0}}\frac{\lambda\tilde{w}_{k}}{\tilde{\pi}_{k}} \geq 2\epsilon_{n},\ C_{\rm U}\sum_{k\in \cK_0^{c}}\lambda\tilde{w}_{k}\leq  B_{\rm L}C_{\rm L}\epsilon_{n}\right\}.
	\end{aligned}
	\end{equation}
	On $\cS_{1}$, for $k\in\cK_{0}$, $\tilde{V}_{k}$ is close to $V_{k}^{*}$. On $\cS_{2}$, for $k\in \cK_{0}$, the penalty coefficient of $b_{k}$ in problem \eqref{eq: opt problem} dominates the difference between $\hat{\theta}_{\rm IVW}$ and $\tilde{\theta}_{k}$. Hence $b_{k}$'s are likely to be penalized to zero for $k\in \cK_{0}$ on $\cS_{2}$. On $\cS_{3}$, the penalty coefficient of $b_{k}$ is not too small for $k\in\cK_{0}$ and not too large for $k\in \cK_{0}^{c}$. Intuitively, these three events are all good events on which $\hat{\theta}$ would perform well. Let $\cS = \cS_{1}\cap\cS_{2}\cap\cS_{3}$. Next, we show that $\hat{\theta}$ is close to $\hat{\theta}_{\rm IVW}$ on the event
	$\cS$. The formal result is stated in the following lemma.
	\begin{lemma}\label{lem: close set}
		On the event $\cS$, we have
		\[\|\hat{\theta} - \hat{\theta}_{\rm IVW}\| \leq \epsilon_{n}.\]
	\end{lemma}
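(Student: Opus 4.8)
The plan is to show that on the event $\cS$, the convex objective in \eqref{eq: opt problem}, evaluated with $\theta = \hat\theta_{\rm IVW}$ and $b_k = 0$ for $k\in\cK_0$, is a competitive candidate, and that any minimizer must therefore lie within $\epsilon_n$ of $\hat\theta_{\rm IVW}$. Concretely, write the objective as $Q(\theta, b_1,\dots,b_K)$ and split the sum over $\cK_0$ and $\cK_0^c$. The first step is to observe that for $k\in\cK_0^c$ we can always choose $b_k = \tilde\theta_k - \theta$, which zeroes out the quadratic term for those indices at no cost beyond the penalty $\lambda\tilde w_k\|\tilde\theta_k-\theta\|$; this reduces the problem to a ``profiled'' objective in $\theta$ and $\{b_k\}_{k\in\cK_0}$ only. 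So it suffices to analyze
\[
\widetilde Q(\theta,\{b_k\}_{k\in\cK_0}) = \sum_{k\in\cK_0}\Big\{\frac{\tilde\pi_k}{2}(\tilde\theta_k - \theta - b_k)^\T\tilde V_k(\tilde\theta_k-\theta-b_k) + \lambda\tilde w_k\|b_k\|\Big\} + \sum_{k\in\cK_0^c}\lambda\tilde w_k\|\tilde\theta_k - \theta\|.
\]

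Next I would argue that on $\cS_1\cap\cS_2$, for the indices $k\in\cK_0$ the penalty dominates: since $\|\hat\theta_{\rm IVW}-\tilde\theta_k\| < (2\tilde\pi_k C_{\rm U})^{-1}\lambda\tilde w_k$ and $\|\tilde V_k\|\le C_{\rm U}$, perturbing $b_k$ away from $0$ near $\theta=\hat\theta_{\rm IVW}$ strictly increases $\widetilde Q$ — this is the standard KKT/subgradient argument for adaptive-lasso-type penalties, using that the quadratic term has gradient bounded by $\tilde\pi_k C_{\rm U}\|\hat\theta_{\rm IVW}-\tilde\theta_k\|$ in norm, which is strictly less than the subgradient radius $\lambda\tilde w_k/2$ of the penalty. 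Hence at the minimizer, restricted to a suitable neighborhood, $\hat b_k = 0$ for $k\in\cK_0$. Plugging $\hat b_k=0$ and using that $\hat\theta_{\rm IVW}$ is exactly the minimizer of $\sum_{k\in\cK_0}\frac{\tilde\pi_k}{2}(\tilde\theta_k-\theta)^\T\tilde V_k(\tilde\theta_k-\theta)$, I get a lower bound of the form
\[
\widetilde Q(\theta,\{0\}) - \widetilde Q(\hat\theta_{\rm IVW},\{0\}) \;\ge\; \frac{B_{\rm L}C_{\rm L}}{2}\Big(\sum_{k\in\cK_0}\tilde\pi_k\Big)\|\theta-\hat\theta_{\rm IVW}\|^2 - \sum_{k\in\cK_0^c}\lambda\tilde w_k\|\hat\theta_{\rm IVW}-\theta\|,
\]
using strong convexity of the $\cK_0$-quadratic (eigenvalues bounded below by $B_{\rm L}C_{\rm L}$ via Weyl's inequality on $\cS_1$ and Condition \ref{cond: existence of nonbias}) and the triangle inequality on the $\cK_0^c$ penalty terms. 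Comparing the quadratic growth against the linear slack term, the right-hand side is positive whenever $\|\theta-\hat\theta_{\rm IVW}\|$ exceeds something of order $C_{\rm U}\sum_{k\in\cK_0^c}\lambda\tilde w_k/(B_{\rm L}C_{\rm L})$; but on $\cS_3$ this quantity is at most $\epsilon_n$, so the minimizer satisfies $\|\hat\theta - \hat\theta_{\rm IVW}\|\le\epsilon_n$.

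The main obstacle I anticipate is making the localization rigorous: the subgradient argument that forces $\hat b_k=0$ for $k\in\cK_0$ only works a priori in a neighborhood of $(\hat\theta_{\rm IVW},0)$, so I need a separate global argument — either a convexity/continuation argument showing the candidate with $\hat b_k = 0$ stays optimal globally, or a direct comparison of objective values exploiting that $\widetilde Q$ is convex so a local minimizer with the stated structure is global. The cleanest route is probably to verify the KKT conditions directly: exhibit subgradients certifying that $(\hat\theta_{\rm IVW} + r, \{0\}_{\cK_0}, \{\tilde\theta_k - \hat\theta_{\rm IVW} - r\}_{\cK_0^c})$ for the appropriate small correction $r$ (with $\|r\|\le\epsilon_n$) is a minimizer, which by convexity is then the statement. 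Care is also needed because $\cK_0^c$ penalty terms are non-smooth at $\theta = \tilde\theta_k$, but since $b_k$ for $k\in\cK_0^c$ is free this is handled by the profiling step above. The role of $\cS_3$'s first inequality, $\min_{k\in\cK_0}\lambda\tilde w_k/\tilde\pi_k\ge 2\epsilon_n$, is to ensure the ``$\hat b_k=0$'' conclusion is robust to a correction of size $\epsilon_n$ in $\theta$; I would fold that in when checking the subgradient radius bound.
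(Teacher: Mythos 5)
Your high-level mechanism is the right one and matches the paper's: on $\cS_2$ the penalty subgradient dominates the quadratic gradient for $k\in\cK_0$, forcing $\hat b_k=0$ there, and the $\cK_0^c$ blocks then perturb the IVW first-order condition by at most $\sum_{k\in\cK_0^c}\lambda\tilde w_k$ in norm, which against the strong convexity of the $\cK_0$ quadratic displaces the minimizer by at most $B_{\rm L}^{-1}C_{\rm L}^{-1}\sum_{k\in\cK_0^c}\lambda\tilde w_k\le C_{\rm U}^{-1}\epsilon_n$. You also correctly identify the role of the first inequality in $\cS_3$. However, the concrete argument you sketch has a genuine gap, in two parts. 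The smaller one: setting $b_k=\tilde\theta_k-\theta$ for $k\in\cK_0^c$ is not the partial minimizer of that block (the optimal $b_k$ shrinks $\tilde\theta_k-\theta$ toward zero), so your $\widetilde Q$ is not the profiled objective but only an upper bound for it; the true profile $g_k(\tilde\theta_k-\theta)=\min_{b_k}\{\cdots\}$ is a Huber-type function, quadratic near zero and linear with slope $\lambda\tilde w_k$ far away, not the purely linear term you substitute.

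The more serious problem is that your displayed inequality lower-bounds $\widetilde Q(\theta,\{0\})-\widetilde Q(\hat\theta_{\rm IVW},\{0\})$, i.e., it compares the objective of \eqref{eq: opt problem} along a fixed, suboptimal section $b=b(\theta)$. Positivity of that difference does not constrain the joint minimizer: for $\theta$ far from $\hat\theta_{\rm IVW}$ the objective minimized over all $b$ can be far below $\widetilde Q(\theta,\{0\})$ (in particular $b_k=0$ need no longer be the optimal choice for $k\in\cK_0$ at such $\theta$), so you have only shown that one upper bound exceeds another upper bound, which says nothing about where the $\arg\min$ lies. Closing this gap requires exactly the step you defer to at the end: a global KKT verification. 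That is what the paper does. Its Lemma \ref{lem: equivalent} shows that if a minimizer $\bar\gamma$ of the oracle problem (with $b_k=0$ imposed for $k\in\cK_0$) satisfies the strict dual-feasibility condition $\tilde\pi_k\|\tilde V_k(\tilde\theta_k-\bar\theta)\|<\lambda\tilde w_k$ for all $k\in\cK_0$, then the oracle and full minimizer sets coincide globally; the oracle KKT equations then yield the explicit identity $\bar\theta=\hat\theta_{\rm IVW}+(\sum_{k\in\cK_0}\tilde\pi_k\tilde V_k)^{-1}\sum_{k\in\cK_0^c}\lambda\tilde w_k\bar z_k$ with $\|\bar z_k\|\le1$, from which the strict inequality (using $\cS_1$, $\cS_2$, and both halves of $\cS_3$) and the bound $\|\hat\theta-\hat\theta_{\rm IVW}\|\le C_{\rm U}^{-1}\epsilon_n\le\epsilon_n$ follow simultaneously. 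Your objective-value comparison, as written, does not deliver the lemma without this ingredient.
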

    Under Conditions \ref{cond: existence of nonbias} and \ref{cond: matrix rate} and some conditions on the convergence rate of $\|\tilde{\theta}_{k} - \theta_{k}^{*}\|$, we have $P(\cS) \to 1$ and hence $P(\hat{\theta} - \hat{\theta}_{\rm IVW}\| \leq \epsilon_{n}) \to 1$ according to Lemma \ref{lem: close set}. The formal result is summarized in the following lemma. See Appendix \ref{app: proofs on lemmas} for the proof of Lemma \ref{lem: close set} and \ref{lem: main}.
	\begin{lemma}\label{lem: main}
		Under Conditions \ref{cond: existence of nonbias} and \ref{cond: matrix rate}, if the tuning parameter $\lambda$ satisfies 
	    \[
		\lambda^{-1}\delta^{-\alpha}\max_{k}\{\|\tilde{\theta}_{k} - \theta_{k}^{*}\|^{\alpha + 1}\} = o_{P}(1),
		\]
		then for any sequence $\epsilon_{n}$ such that $\lambda K/\epsilon_{n} \to 0$ and $\epsilon_{n}\lambda^{-1}\delta^{-\alpha}\max_{k}\{\|\tilde{\theta}_{k} - \theta_{k}^{*}\|^{\alpha}\} = o_{P}(1)$, 
		\[P(\|\hat{\theta} - \hat{\theta}_{\rm IVW}\| \leq \epsilon_{n}) \to 1.\]
	\end{lemma}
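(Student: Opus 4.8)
\textbf{Proof proposal for Lemma \ref{lem: main}.}

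The plan is to deduce Lemma \ref{lem: main} from Lemma \ref{lem: close set}: since on the event $\cS = \cS_{1}\cap\cS_{2}\cap\cS_{3}$ we already have $\|\hat\theta - \hat\theta_{\rm IVW}\| \le \epsilon_{n}$, it suffices to show $P(\cS)\to 1$ under the stated rate conditions, for a suitable fixed choice of the constants $C_{\rm L},C_{\rm U}$ relative to the eigenvalue bounds $B_{\rm L},B_{\rm U}$. I would fix such constants once and for all, so that $\Delta_{\rm M}$ in \eqref{def: proper set} is a fixed positive number; then $P(\cS_{1})\to1$ is immediate from Condition \ref{cond: matrix rate}, since $\max_{k\in\cK_{0}}\|\tilde V_{k}-V_{k}^{*}\|=o_{P}(1)$. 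It remains to control $\cS_{2}$ and $\cS_{3}$, both of which hinge on uniform two-sided bounds on the adaptive weights $\tilde w_{k}=\|\tilde\theta_{k}-\tilde\theta\|^{-\alpha}$.

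The central step is exactly this control of the weights. From Lemma \ref{lem: consistent} we have $\|\tilde\theta-\theta_{0}\|\le 2\delta^{-1}\max_{k}\|\tilde\theta_{k}-\theta_{k}^{*}\|$, and on $\cS_{1}$ the matrix-weighted-average representation of $\hat\theta_{\rm IVW}$ (from its first-order condition) gives $\|\hat\theta_{\rm IVW}-\theta_{0}\|\le (C_{\rm U}/C_{\rm L})\max_{k\in\cK_{0}}\|\tilde\theta_{k}-\theta_{k}^{*}\|$. Combining these with the triangle inequality and $\theta_{k}^{*}=\theta_{0}$ for $k\in\cK_{0}$ yields, uniformly over $k\in\cK_{0}$, both $\|\hat\theta_{\rm IVW}-\tilde\theta_{k}\|\lesssim\max_{k}\|\tilde\theta_{k}-\theta_{k}^{*}\|$ and $\|\tilde\theta_{k}-\tilde\theta\|\lesssim\delta^{-1}\max_{k}\|\tilde\theta_{k}-\theta_{k}^{*}\|$, hence $\tilde w_{k}\gtrsim\delta^{\alpha}\bigl(\max_{k}\|\tilde\theta_{k}-\theta_{k}^{*}\|\bigr)^{-\alpha}$. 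For $k\in\cK_{0}^{c}$ I would instead use that $\|b_{k}^{*}\|\ge B_{\rm L}$ is bounded away from zero: $\|\tilde\theta_{k}-\tilde\theta\|\ge B_{\rm L}-(1+2\delta^{-1})\max_{k}\|\tilde\theta_{k}-\theta_{k}^{*}\|$. Here one observes that the three rate hypotheses of the lemma are only jointly consistent when $\delta^{-1}\max_{k}\|\tilde\theta_{k}-\theta_{k}^{*}\|=o_{P}(1)$ — indeed $\lambda K=o(\epsilon_{n})$ and $\epsilon_{n}=o_{P}\bigl(\lambda\delta^{\alpha}/\max_{k}\|\tilde\theta_{k}-\theta_{k}^{*}\|^{\alpha}\bigr)$ force $K\max_{k}\|\tilde\theta_{k}-\theta_{k}^{*}\|^{\alpha}=o_{P}(\delta^{\alpha})$ — so that the lower bound is eventually $\ge B_{\rm L}/2$ and therefore $\max_{k\in\cK_{0}^{c}}\tilde w_{k}=O_{P}(1)$.

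With these bounds the remaining two events reduce to arithmetic matching the stated hypotheses. On $\cS_{1}$, using $\tilde\pi_{k}\le1$, the event $\cS_{2}$ holds once $(1+C_{\rm U}/C_{\rm L})\max_{k}\|\tilde\theta_{k}-\theta_{k}^{*}\|<(2C_{\rm U})^{-1}\lambda\tilde w_{k}$ for all $k\in\cK_{0}$, which by the weight lower bound is implied by $\lambda^{-1}\delta^{-\alpha}\bigl(\max_{k}\|\tilde\theta_{k}-\theta_{k}^{*}\|\bigr)^{\alpha+1}$ being below a fixed constant — precisely the first hypothesis — so $P(\cS_{1}\cap\cS_{2})\to1$. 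For $\cS_{3}$, the inequality $\min_{k\in\cK_{0}}\lambda\tilde w_{k}/\tilde\pi_{k}\ge2\epsilon_{n}$ follows from the weight lower bound as soon as $\epsilon_{n}\lambda^{-1}\delta^{-\alpha}\bigl(\max_{k}\|\tilde\theta_{k}-\theta_{k}^{*}\|\bigr)^{\alpha}$ is below a fixed constant (the second $\epsilon_{n}$-hypothesis), while $C_{\rm U}\sum_{k\in\cK_{0}^{c}}\lambda\tilde w_{k}\le B_{\rm L}C_{\rm L}\epsilon_{n}$ follows from $\max_{k\in\cK_{0}^{c}}\tilde w_{k}=O_{P}(1)$ and $|\cK_{0}^{c}|\le K$ once $\lambda K/\epsilon_{n}$ is small (the remaining hypothesis). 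Hence $P(\cS)\to1$, and Lemma \ref{lem: close set} completes the proof.

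I expect the main obstacle to be the bookkeeping around the adaptive weights: obtaining genuinely uniform-in-$k$ upper and lower bounds on $\|\tilde\theta_{k}-\tilde\theta\|$ that cleanly separate the unbiased indices (where $\tilde w_{k}$ must blow up at rate $\delta^{\alpha}\max_{k}\|\tilde\theta_{k}-\theta_{k}^{*}\|^{-\alpha}$) from the biased ones (where $\tilde w_{k}$ must stay bounded), and in particular extracting the consequence $\delta^{-1}\max_{k}\|\tilde\theta_{k}-\theta_{k}^{*}\|=o_{P}(1)$ from the otherwise opaque joint rate conditions; once that groundwork is laid, each of $\cS_{1},\cS_{2},\cS_{3}$ is a short estimate matching one of the hypotheses.
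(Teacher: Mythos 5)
Your proposal is correct and follows essentially the same route as the paper: reduce to $P(\cS)\to 1$, derive uniform upper bounds on $\|\tilde b_k\|$ for $k\in\cK_0$ and lower bounds for $k\in\cK_0^c$ from Lemma \ref{lem: consistent}, and match each of $\cS_1,\cS_2,\cS_3$ to one of the rate hypotheses. The only (immaterial) differences are that you bound $\|\hat\theta_{\rm IVW}-\tilde\theta_k\|$ by passing through $\theta_0$ rather than via the weighted average of $\tilde\theta_j-\tilde\theta_k$, and you extract $\delta^{-1}\max_k\|\tilde\theta_k-\theta_k^*\|=o_P(1)$ by multiplying the two $\epsilon_n$-hypotheses where the paper uses $\lambda/\epsilon_n\to 0$ directly.
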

	
	With the assistance of Lemma \ref{lem: main}, we are able to prove Theorem \ref{thm: equivalence}.
	\begin{proof}
		Condition \ref{cond: estimate rate} and the fact that $\alpha > \max\{\nu_{1}\nu_{2}^{-1}, \nu_{2}^{-1} - 1\}$ together imply
		\begin{equation}\label{eq: estimate rate}
		\begin{aligned}
		&\delta^{-(\alpha + 1)}\max_{k}\{\|\tilde{\theta}_{k} - \theta_{k}^{*}\|^{\alpha + 1}\} = o_{P}(n^{-1}),\\ &\delta^{-\alpha}\max_{k}\{\|\tilde{\theta}_{k} - \theta_{k}^{*}\|^{\alpha}\} = o_{P}(n^{-\alpha \nu_{2}}).
		\end{aligned}
		\end{equation}
		Because  $\lambda \asymp 1/n$, the conditions Lemma \ref{lem: main} is satisfied with 	$\epsilon_{n} = a_{n}K/n$ where $a_{n}$ is an arbitrary sequence of positive numbers such that $a_{n}\to\infty$ and $a_{n}n^{\nu_{1} - \alpha\nu_{2}}\to0$. Note that $\nu_{1} - \alpha\nu_{2} < 0$. Then we have 
		\begin{equation}\label{eq: contradiction}
			P(\|\hat{\theta} - \hat{\theta}_{\rm IVW}\|\leq a_{n}K/n) \to 1
		\end{equation} for arbitrary $a_n$ that diverges to infinity at a sufficiently slow rate. This indicates that $\|\hat{\theta} - \hat{\theta}_{\rm IVW}\| = O_{P}(K/n)$. To see this, assuming that $n\|\hat{\theta} - \hat{\theta}_{\rm IVW}\|/K$ is not bounded in probability, then for some $\epsilon > 0$ there is some $m_{1} \geq e$ such that $P(n\|\hat{\theta} - \hat{\theta}_{\rm IVW}\|/K > 1) \geq \epsilon$ when $n = m_{1}$. For $s=2,3,\dots$, there is some $m_{s} > \max\{m_{s-1}, e^{s}\}$ such that $P(n\|\hat{\theta} - \hat{\theta}_{\rm IVW}\|/K > s) \geq \epsilon$ when $n = m_{s}$. Let $a_{n} = s$ for $m_{s} \leq n < m_{s + 1}$. Then for this sequence, we have $a_{n}\to \infty$ and $a_{n} \leq \log n$. Hence $a_{n}$ satisfies $a_{n}n^{\nu_{1} - \alpha\nu_{2}}\to0$. Moreover, for any positive integer $s$, $P(\|\hat{\theta} - \hat{\theta}_{\rm IVW}\| \leq a_{n}K/n) \leq 1-\epsilon$ when $n=m_{s}$. Thus, $\liminf_{n} P(\|\hat{\theta} - \hat{\theta}_{\rm IVW}\| \leq a_{n}K/n) \leq 1 - \epsilon$, which contradicts to \eqref{eq: contradiction}.
	\end{proof}
\section{Proof of Lemmas \ref{lem: close set} and \ref{lem: main}}\label{app: proofs on lemmas}
To prove the two lemmas, we first analyse the optimization problem \eqref{eq: opt problem} in the main text.
We denote $\gamma = (\theta^\T, b^\T_{1},\dots,b^\T_{K})^\T$ as a grand parameter vector.
Let  
\[\Gamma_{0}= \{\gamma:\gamma = (\theta, b^\T_{1},\dots,b^\T_{K})^\T,\ b_{k} = 0\ \text{for} \ k\in \cK_{0}\ \text{and} \ b_{k} \not= 0 \ \text{for} \ k \in \cK_{0}^{c}\},\]
and 
\[L(\gamma) = \sum_{k=1}^{K}\frac{\tilde{\pi}_{k}}{2}(\tilde{\theta}_{k} - \theta - b_{k})^\T\tilde{V}_{k}(\tilde{\theta}_{k} - \theta - b_{k}).\]
Consider the following oracle problem that sets the term $b_k$ to be zero in prior for $k \in \cK_{0}$
\begin{equation}\label{eq: oracle problem}
\min_{\gamma \in \Gamma_{0}}\{L(\gamma) + \sum_{k\in\cK_{0}^{c}}\lambda\tilde{w}_{k}\|b_{k}\|\}.
\end{equation}
The following lemma establishes the relationship between the minimum point of this problem and the problem \eqref{eq: opt problem}.
\begin{lemma}\label{lem: equivalent}
	Let $\cM$ be the set of minimum points of problem \eqref{eq: opt problem} and $\bar{\cM}$ be the set of minimum points of problem \eqref{eq: oracle problem}.
	If there exists a minimum point $\bar{\gamma} = (\bar{\theta}^\T,\bar{b}^\T_{1},\dots,\bar{b}^\T_{K})^\T$ of problem \eqref{eq: oracle problem} such that $\tilde{\pi}_{k}\|\tilde{V}_{k}(\tilde{\theta}_{k} - \bar{\theta})\| < \lambda\tilde{w}_{k}$ for $k\in \cK_{0}$, then $\cM = \bar{\cM}$.
\end{lemma}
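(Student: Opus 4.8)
The plan is to show the two inclusions $\bar{\cM} \subseteq \cM$ and $\cM \subseteq \bar{\cM}$ by exploiting convexity and the separability of the penalized objective in the bias parameters. Write $F(\gamma) = L(\gamma) + \sum_{k=1}^{K}\lambda\tilde{w}_{k}\|b_{k}\|$ for the objective in \eqref{eq: opt problem}, and note that the oracle problem \eqref{eq: oracle problem} is just the restriction of $F$ to the affine subspace $\Gamma_{0}' = \{\gamma : b_{k} = 0,\ k\in\cK_{0}\}$ (on this subspace the constraint ``$b_{k}\neq 0$ for $k\in\cK_0^c$'' is irrelevant for computing minimizers of a convex function and can be dropped). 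Both $F$ and its restriction are convex, so a point is a global minimizer iff $0$ lies in the (appropriate) subdifferential.

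First I would prove $\bar{\cM}\subseteq\cM$. Take the distinguished minimizer $\bar{\gamma}$ of the oracle problem with $\tilde{\pi}_{k}\|\tilde{V}_{k}(\tilde{\theta}_{k}-\bar{\theta})\| < \lambda\tilde{w}_{k}$ for $k\in\cK_{0}$. The key is to verify the KKT/subgradient conditions for the full problem \eqref{eq: opt problem} at $\bar{\gamma}$. The stationarity conditions in $\theta$ and in $b_{k}$ for $k\in\cK_{0}^{c}$ hold automatically because $\bar{\gamma}$ is optimal for the oracle problem (these blocks are unconstrained there, so the relevant partial subgradients already contain $0$). The only new conditions are those for the blocks $b_{k}$, $k\in\cK_{0}$: since $\bar{b}_{k}=0$ there, the subdifferential of $\lambda\tilde{w}_{k}\|\cdot\|$ at $0$ is the ball $\{v:\|v\|\le\lambda\tilde{w}_{k}\}$, and the gradient of the quadratic part with respect to $b_{k}$ at $\bar{\gamma}$ is $-\tilde{\pi}_{k}\tilde{V}_{k}(\tilde{\theta}_{k}-\bar{\theta})$. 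The strict inequality $\tilde{\pi}_{k}\|\tilde{V}_{k}(\tilde{\theta}_{k}-\bar{\theta})\| < \lambda\tilde{w}_{k}$ is exactly what is needed to place this gradient inside that ball, so $0$ is in the subdifferential of $F$ at $\bar{\gamma}$ and hence $\bar{\gamma}\in\cM$; in particular $\cM\neq\emptyset$ and $\min F = \min$ (oracle objective).

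Next, $\cM\subseteq\bar{\cM}$. Let $\hat{\gamma}\in\cM$. Since $\min F$ equals the oracle minimum (from the first part), it suffices to show $\hat{b}_{k}=0$ for all $k\in\cK_{0}$, for then $\hat{\gamma}\in\Gamma_{0}'$ and attains the oracle minimum. Here I would use strict convexity of the quadratic penalty in the relevant direction together with the strict inequality: suppose some $\hat{b}_{k}\neq 0$ for $k\in\cK_{0}$; compare $\hat\gamma$ with the point obtained by setting that block to $0$ and re-optimizing, or argue directly via the first-order optimality at $\hat\gamma$ that $\tilde{\pi}_{k}\tilde{V}_{k}(\tilde\theta_k - \hat\theta - \hat b_k)$ must have norm exactly $\lambda\tilde w_k$ and be aligned with $\hat b_k$, and show this forces $\hat\theta$ and $\bar\theta$ to differ in a way incompatible with the strict slack at $\bar\gamma$ — more cleanly, use that any two minimizers of a convex function give the same value and the same gradient of the smooth part $L$ at their images, so $\tilde V_k(\tilde\theta_k - \hat\theta - \hat b_k)$ and $\tilde V_k(\tilde\theta_k-\bar\theta)$ coincide for the matching quantities, and the strict inequality at $\bar\gamma$ then contradicts the equality $\|\cdot\|=\lambda\tilde w_k/\tilde\pi_k$ required when $\hat b_k\neq0$. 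I expect this direction — pinning down that no minimizer can have a nonzero bias on $\cK_0$ — to be the main obstacle, since $L$ is only convex (not strictly convex in $\gamma$, because $\theta$ and the $b_k$'s are confounded), so one must argue through invariance of $\nabla L$ across the solution set rather than through uniqueness. Once both inclusions are in hand, $\cM=\bar{\cM}$ follows.
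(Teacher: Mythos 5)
Your proposal is correct, and its first half (verifying the subgradient/KKT conditions of \eqref{eq: opt problem} at $\bar{\gamma}$, noting that the strict inequality places $-\tilde{\pi}_{k}\tilde{V}_{k}(\tilde{\theta}_{k}-\bar{\theta})$ in the interior of the ball $\{v:\|v\|\le\lambda\tilde{w}_{k}\}$, and then transferring optimality to all of $\bar{\cM}$ via equality of objective values) is exactly the paper's argument. Where you diverge is the inclusion $\cM\subseteq\bar{\cM}$. The paper proves it by a single bare-hands computation: it writes $\nabla L(\bar{\gamma})=-(0^{\T},\lambda\tilde{w}_{1}\bar{z}_{1}^{\T},\dots,\lambda\tilde{w}_{K}\bar{z}_{K}^{\T})^{\T}$ with $\|\bar{z}_{k}\|<1$ for $k\in\cK_{0}$, applies the first-order convexity inequality $0\ge L(\bar{\gamma})-L(\gamma')+\nabla L(\bar{\gamma})^{\T}(\gamma'-\bar{\gamma})$ together with the equality of penalized objective values, and arrives at $0\ge\sum_{k\in\cK_{0}}\lambda\tilde{w}_{k}\|b'_{k}\|(1-\|\bar{z}_{k}\|)$, which forces $b'_{k}=0$ on $\cK_{0}$. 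Your route instead invokes the invariance of $\nabla L$ (equivalently, of the quantities $\tilde{V}_{k}(\tilde{\theta}_{k}-\theta-b_{k})$) over the solution set of the convex problem, and then derives a contradiction between the exact norm condition $\tilde{\pi}_{k}\|\tilde{V}_{k}(\tilde{\theta}_{k}-\hat{\theta}-\hat{b}_{k})\|=\lambda\tilde{w}_{k}$ forced by the KKT condition when $\hat{b}_{k}\neq0$ and the strict slack at $\bar{\gamma}$. This is valid and arguably more conceptual, but it leans on a fact you state without proof: that two minimizers of $L+P$ ($L$ smooth convex, $P$ convex) share the same $\nabla L$. That fact is true and standard (e.g., by the midpoint argument: the segment between two minimizers consists of minimizers, so $L$ and $P$ are each affine on it, and a differentiable convex function whose supporting hyperplane at $x$ touches it again at $y$ has $\nabla L(y)=\nabla L(x)$; or, since $L$ here is strictly convex in the residuals $\tilde{\theta}_{k}-\theta-b_{k}$ up to $\ker\tilde{V}_{k}$, the products $\tilde{V}_{k}(\tilde{\theta}_{k}-\theta-b_{k})$ are constant on $\arg\min$), but a complete writeup would need to include that short argument, which is essentially the same convexity computation the paper performs directly. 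So the two proofs buy roughly the same thing; the paper's is self-contained, yours modularizes the convexity step into a reusable lemma.
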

\begin{proof}
	Because $\bar{\gamma}$ is a minimum point of problem \eqref{eq: oracle problem}, it follows from the Karush-Kunh-Tucker condition that
	\begin{equation}\label{eq: oracle KKT}
	\left\{\begin{array}{l}
	\sum_{k\in \cK_{0}}\tilde{\pi}_{k} \tilde{V}_{k}(\tilde{\theta}_{k} - \bar{\theta}) + \sum_{k\in \cK_{0}^{c}}\tilde{\pi}_{k} \tilde{V}_{k}(\tilde{\theta}_{k} - \bar{\theta} - \bar{b}_{k}) = 0, \\
	\tilde{\pi}_{k}\tilde{V}_{k}(\tilde{\theta}_{k} - \bar{\theta} - \bar{b}_{k}) = \lambda\tilde{w}_{k}\bar{z}_{k}, \ k \in \cK_{0}^{c}
	\end{array}\right.
	\end{equation}
	where $\bar{z}_{k} = \bar{b}_{k}/\|\bar{b}_{k}\|$ if $\bar{b}_{k} \not = 0$ and $\|\bar{z}_{k}\| \leq  1$ if $\bar{b}_{k} = 0$. Because $\tilde{\pi}_{k}\|\tilde{V}_{k}(\tilde{\theta}_{k} - \bar{\theta})\| < \lambda\tilde{w}_{k}$ for $k \in \mathcal{K}_{0}$, $\bar{\gamma}$ also satisfies the Karush-Kunh-Tucker condition of problem \eqref{eq: opt problem} and hence $\bar{\gamma}\in \cM$ by the convexity of problem \eqref{eq: opt problem}. 
	
	One the one hand, for any $\bar{\gamma}^{\prime}\in \bar{\cM}$, because both $\bar{\gamma}^{\prime}$ and $\bar{\gamma}$ belongs to $\bar{\cM}$, we have $L(\bar{\gamma}) + \sum_{k=1}^{K}\lambda\tilde{w}_{k}\|\bar{b}_{k}\| = L(\bar{\gamma}^{\prime}) + \sum_{k=1}^{K}\lambda\tilde{w}_{k}\|\bar{b}^{\prime}_{k}\|$. In addition, according to the above discussion, we have $\bar{\gamma} \in \cM$. Then we have $L(\bar{\gamma}) + \sum_{k=1}^{K}\lambda\tilde{w}_{k}\|\bar{b}_{k}\| = \min_{\gamma}\{L(\gamma) + \sum_{k=1}^{K}\lambda\tilde{w}_{k}\|b_{k}\|\}$ and hence $L(\bar{\gamma}^{\prime}) + \sum_{k=1}^{K}\lambda\tilde{w}_{k}\|\bar{b}^{\prime}_{k}\| = \min_{\gamma}\{L(\gamma) + \sum_{k=1}^{K}\lambda\tilde{w}_{k}\|b_{k}\|\}$. This implies $\bar{\gamma}^{\prime}\in \cM$ and proves $\bar{\cM}\subset\cM$. 
	
	On the other hand, for any $\gamma^{\prime}\in \cM$,  because both $\gamma^{\prime}$ and $\bar{\gamma}$ belongs to $\bar{\cM}$, we have $L(\bar{\gamma}) + \sum_{k=1}^{K}\lambda\tilde{w}_{k}\|\bar{b}_{k}\| = L(\gamma^{\prime}) + \sum_{k=1}^{K}\lambda\tilde{w}_{k}\|b^{\prime}_{k}\|$, this implies $L(\bar{\gamma}) - L(\gamma^{\prime}) = \sum_{k=1}^{K}\lambda\tilde{w}_{k}(\|b^{\prime}_{k}\| - \|\bar{b}_{k}\|)$. Recall that, for $k \in \mathcal{K}_{0}^{c}$,  $\bar{z}_{k} = \bar{b}_{k}/\|\bar{b}_{k}\|$ if $\bar{b}_{k} \not = 0$ and $\|\bar{z}_{k}\| \leq  1$ if $\bar{b}_{k} = 0$. In addition, let $\bar{z}_{k} = (\lambda\tilde{w}_{k})^{-1}\tilde{\pi}_{k}\tilde{V}_{k}(\tilde{\theta}_{k} - \bar{\theta})$
	for $k\in \cK_{0}$.  Then it is easy to verify that $\|\bar{z}_{k}\| \leq 1$ for $k \in \cK_{0}^{c}$, $\|\bar{z}_{k}\| < 1$ for $k\in \cK_{0}$, $\bar{z}^\T_{k}\bar{b}_{k} = \|\bar{b}_{k}\|$ for $k = 1,\dots,K$, and $\nabla L(\bar{\gamma}) = - (0^\T,  \lambda\tilde{w}_1\bar{z}^\T_1,\dots, \lambda\tilde{w}_K\bar{z}^\T_K)^\T$. By the convexity of $L(\gamma)$, we have
	\begin{align*}
	0 & \geq L(\bar{\gamma}) - L(\gamma^{\prime}) + \nabla L(\bar{\gamma})^\T(\gamma^{\prime} - \bar{\gamma})\\
	& = \sum_{k=1}^{K}\lambda\tilde{w}_{k}(\|b^{\prime}_{k}\| - \|\bar{b}_{k}\|) - \sum_{k=1}^{K}\lambda\tilde{w}_{k}\bar{z}^{\T}_{k}(b^{\prime}_{k} - \bar{b}_{k})\\
	& = \sum_{k=1}^{K}\lambda\tilde{w}_{k}(\|b^{\prime}_{k}\| - \bar{z}_{k}^\T b^{\prime}_{k}) \\
	& \geq \sum_{k=1}^{K}\lambda\tilde{w}_{k}(\|b^{\prime}_{k}\| - \|\bar{z}_{k}\|\|b^{\prime}_{k}\|) \\
	& \geq \sum_{k \in \cK_{0}}\lambda\tilde{w}_{k}\|b^{\prime}_{k}\|(1 - \|\bar{z}_{k}\|).
	\end{align*}
	Because $\lambda\tilde{w}_{k}(1 - \|\bar{z}_{k}\|) = \lambda\tilde{w}_{k} - \tilde{\pi}_{k}\|\tilde{V}_{k}(\tilde{\theta}_{k} - \bar{\theta})\| > 0$ for $k \in \cK_{0}$, we have $\|b^{\prime}_{k}\| = 0$ for $k\in \cK_{0}$. Combining this with the fact that $L(\bar{\gamma}) + \sum_{k=1}^{K}\lambda\tilde{w}_{k}\|\bar{b}_{k}\| = L(\gamma^{\prime}) + \sum_{k=1}^{K}\lambda\tilde{w}_{k}\|b^{\prime}_{k}\|$, we have $\gamma^{\prime}\in \bar{\cM}$. This completes the proof of the lemma.
	
\end{proof}

Recall that throughout the proofs, we always use $B_{L}$ ($B_{\rm U}$) to denote the lower (upper) bound of a sequence that is bounded away from zero (infinity).	For any positive numbers $\epsilon_{n}$ and constants $C_{\rm L} >0$ and $C_{\rm U} > 1$ such that $C_{\rm L} < B_{\rm L}  \leq B_{\rm U} < C_{\rm U}$, let $\Delta_{\rm M} = \min\{B_{\rm L} - C_{\rm L}, C_{\rm U} - B_{\rm U}\}$,
\begin{equation*}
\begin{aligned}
&\cS_{1} = \{\|\tilde{V}_{k} - V_{k}^{*}\| \leq \Delta_{\rm M}, \ k\in \cK_{0}\}, \\
&\cS_{2} = \{\|\hat{\theta}_{\rm IVW} - \tilde{\theta}_{k}\| < (2\tilde{\pi}_{k}C_{\rm U})^{-1}\lambda\tilde{w}_{k}, \ k\in \cK_{0}\},\\ 
& \cS_{3} = \{\min_{k\in \cK_{0}}\lambda\tilde{w}_{k}/\tilde{\pi}_{k} \geq 2\epsilon_{n},\ C_{\rm U}\sum_{k\in \cK_0^{c}}\lambda\tilde{w}_{k}\leq  B_{\rm L}C_{\rm L}\epsilon_{n}\},\\
&\text{and}\\
& \cS = \cS_{1}\cap\cS_{2}\cap\cS_{3}.
\end{aligned}
\end{equation*}
Then we are ready to give the proof of Lemma \ref{lem: close set}.
\paragraph{\bf Restate of Lemma \ref{lem: close set}.} 
\emph{On the event $\cS$, we have
	\[\|\hat{\theta} - \hat{\theta}_{\rm IVW}\| \leq \epsilon_{n}.\]
}
\begin{proof}
	Let $\bar{\gamma}\in \bar{\cM}$ be any minimum point of problem \eqref{eq: oracle problem}. Then by the KKT condition \eqref{eq: oracle KKT}, we have
	\begin{equation}\label{eq: oracle expansion}
	\begin{aligned}
	\bar{\theta}& = (\sum_{k \in \cK_{0}}\tilde{\pi}_{k}\tilde{V}_{k})^{-1}(\sum_{k \in \cK_{0}}\tilde{\pi}_{k}\tilde{V}_{k}\tilde{\theta}_{k}) + 
	(\sum_{k \in \cK_{0}}\tilde{\pi}_{k}\tilde{V}_{k})^{-1}(\sum_{k\in \cK_{0}^{c}}\lambda\tilde{w}_{k}\bar{z}_{k})\\
	&= \hat{\theta}_{\rm IVW} + \left(\sum_{k \in \cK_{0}}\tilde{\pi}_{k}\tilde{V}_{k}\right)^{-1}\left(\sum_{k\in \cK_{0}^{c}}\lambda\tilde{w}_{k}\bar{z}_{k}\right).
	\end{aligned}
	\end{equation}
	By Weyl's Theorem, we have $\max_{k\in \cK_{0}}\{\max\{|\lambda_{\rm min}(\tilde{V}_{k}) - \lambda_{\rm min}(V_{k}^{*})|, |\lambda_{\rm max}(\tilde{V}_{k}) - \lambda_{\rm max}(V_{k}^{*})|\}\} \leq \max_{k\in \cK_{0}}\|\tilde{V}_{k} - V_{k}^{*}\| \leq \Delta_{\rm M}$ on $\cS_{1}$.
	Then by Condition \ref{cond: matrix rate}, it follows 
	\[C_{\rm L} \leq \lambda_{\rm min}(\tilde{V}_{k}) \leq \lambda_{\rm max}(\tilde{V}_{k}) \leq C_{\rm U}\]
	for $k \in \cK_{0}$ on $\cS_{1}$.
	Then  by Conditions \ref{cond: existence of nonbias} and \ref{cond: matrix rate}, we have
	\begin{align*}
	\tilde{\pi}_{k}\|\tilde{V}_{k}(\tilde{\theta}_{k} - \bar{\theta})\|
	& \leq \tilde{\pi}_{k}\|\tilde{V}_{k}\|\|\tilde{\theta}_{k} - \hat{\theta}_{\rm IVW}\| + \tilde{\pi}_{k}\|\tilde{V}_{k}\|\|(\sum_{k \in \cK_{0}}\tilde{\pi}_{k}\tilde{V}_{k})^{-1}\|\|\sum_{k\in \cK_{0}^{c}}\lambda\tilde{w}_{k}\bar{z}_{k}\| \\
	& \leq \tilde{\pi}_{k}C_{\rm U}\|\tilde{\theta}_{k} - \hat{\theta}_{\rm IVW}\| + \tilde{\pi}_{k}C_{\rm U}B_{L}^{-1}C_{\rm L}^{-1}\sum_{k\in \cK_0^{c}}\lambda\tilde{w}_{k} \\
	& <  \frac{\lambda\tilde{w}_{k}}{2} + \tilde{\pi}_{k}\epsilon_{n}\\
	& < \lambda\tilde{w}_{k}
	\end{align*}
	on the event $\cS$.
	According to Lemma \ref{lem: equivalent}, we have $\bar{\cM} = \cM$ on $\cS$. By equation \eqref{eq: oracle expansion}, for any $(\hat{\theta}^{\T},\hat{b}_{1}^{\T},\dots,\hat{b}_{K}^{\T})^{\T} \in \cM = \bar{\cM}$,
	we have
	\begin{equation*}
	\begin{aligned}
	\hat{\theta} - \hat{\theta}_{\rm IVW} = \left(\sum_{k \in \cK_{0}}\tilde{\pi}_{k}\tilde{V}_{k}\right)^{-1}\left(\sum_{k\in \cK_{0}^{c}}\lambda\tilde{w}_{k}\bar{z}_{k}\right).
	\end{aligned}
	\end{equation*}
	This implies
	\[\|\hat{\theta} - \hat{\theta}_{\rm IVW}\| \leq B_{\rm L}^{-1}C_{\rm L}^{-1}\sum_{k\in \cK_0^{c}}\lambda\tilde{w}_{k} \leq C_{\rm U}^{-1}\epsilon_{n}\]
	on $\cS$. Note that $C_{\rm U}> 1$ and this completes the proof of the lemma.
	
\end{proof}

Next, we move on to the proof of Lemma \ref{lem: main}.
\paragraph{\bf Restate of Lemma \ref{lem: main}.} 
\emph{	Under Conditions \ref{cond: existence of nonbias}, and \ref{cond: matrix rate}, if the tuning parameter $\lambda$ satisfies \[\lambda^{-1}\delta^{-\alpha}\max_{k}\{\|\tilde{\theta}_{k} - \theta_{k}^{*}\|^{\alpha + 1}\} = o_{P}(1),\]
	then for any sequence $\epsilon_{n}$ such that $\lambda K/\epsilon_{n} \to 0$ and $\epsilon_{n}\lambda^{-1}\delta^{-\alpha}\max_{k}\{\|\tilde{\theta}_{k} - \theta_{k}^{*}\|^{\alpha}\} = o_{P}(1)$, 
	\[P(\|\hat{\theta} - \hat{\theta}_{\rm IVW}\| \leq \epsilon_{n}) \to 1.\]
}
\begin{proof}
	To prove the lemma, according to Lemma \ref{lem: close set}, it suffices to prove $P(\cS) \to 1$ with $C_{\rm L} = 0.9B_{\rm L}$, $C_{\rm U} = 1.1B_{U}$. By Condition \ref{cond: matrix rate}, we have $P(\cS_{1}) \to 1$.
	By the definition of $\hat{\theta}_{\rm IVW}$, we have
	\[\hat{\theta}_{\rm IVW} - \tilde{\theta}_{k} = \left(\sum_{j \in \cK_{0}}\tilde{\pi}_{j}\tilde{V}_{j}\right)^{-1}\left(\sum_{j \in \cK_{0}}\tilde{\pi}_{j}\tilde{V}_{j}(\tilde{\theta}_{j} - \tilde{\theta}_{k})\right).\]
	For $k \in \cK_{0}$, on the event $\cS_{1}$, we have
	\begin{equation}\label{eq: bound difference IVW-k}
	\begin{aligned}
	\|\hat{\theta}_{\rm IVW} - \tilde{\theta}_{k}\|& \leq (B_{\rm L} - \max_{j\in \cK_{0}}\|\tilde{V}_{j} - V_{j}^{*}\|)^{-1}(\max_{j\in \cK_{0}}\|\tilde{V}_{j}(\tilde{\theta}_{j} - \tilde{\theta}_{k})\|)\\
	& \leq (B_{\rm L} - \max_{j\in \cK_{0}}\|\tilde{V}_{j} - V_{j}^{*}\|)^{-1}(B_{\rm U} + \max_{j\in \cK_{0}}\|\tilde{V}_{j} - V_{j}^{*}\|)\max_{j\in \cK_{0}}\|\tilde{\theta}_{j} - \tilde{\theta}_{k}\|\\
	& \leq C_{\rm L}^{-1}C_{\rm U}\max_{j\in \cK_{0}}\|\tilde{\theta}_{j} - \tilde{\theta}_{k}\|
	\end{aligned}
	\end{equation}
	by Condition \ref{cond: matrix rate}.
	Because for $j \in \cK_{0}$, $\theta_{j}^{*} = \theta_{0}$,
	we have
	\begin{equation}\label{eq: bound difference j-k}
	\max_{j\in \cK_{0}}\|\tilde{\theta}_{j} - \tilde{\theta}_{k}\| \leq  \max_{j}\|\tilde{\theta}_{j} - \theta_{0}\| + \max_{k}\|\tilde{\theta}_{k} - \theta_{0}\| = 2\max_{j}\|\tilde{\theta}_{j} - \theta_{j}^{*}\|.
	\end{equation}
	Note that by Lemma \ref{lem: consistent},
	\[\|\tilde{\theta} - \theta_{0}\| \leq 2\delta^{-1} \max_{j}\|\tilde{\theta}_{j} - \theta_{j}^{*}\|.\]
	This together with the definitions of $\tilde{b}_{k}$ and $b_{k}^{*}$ proves
	\begin{equation}\label{eq: bias est}
	\|\tilde{b}_{k} - b_{k}^{*}\| \leq (1 + 2\delta^{-1}) \max_{j}\|\tilde{\theta}_{j} - \theta_{j}^{*}\|.
	\end{equation}
	Recalling that $\tilde{w}_{k} = 1 / \|\tilde{b}_{k}\|^{\alpha}$, according to \eqref{eq: bound difference IVW-k}, \eqref{eq: bound difference j-k} and \eqref{eq: bias est} we have
	\begin{align*}
	\cS_{1}\cap\cS_{2}& = \cS_{1}\cap\{\lambda^{-1}\tilde{\pi}_{k}\|\tilde{b}_{k}\|^{\alpha}\|\hat{\theta}_{\rm IVW} - \tilde{\theta}_{k}\| < (2C_{\rm U})^{-1}, \ k\in \cK_{0}\}\\
	& \supset\cS_{1}\cap\{2C_{\rm L}^{-1}C_{\rm U}\lambda^{-1}(1+2\delta^{-1})^{\alpha}(\max_{j}\|\tilde{\theta}_{j} - \theta_{j}^{*}\|)^{\alpha + 1} 
	< (2C_{\rm U})^{-1}\}\\
	& \equalscolon \cS_{1}\cap\cS_{2}^{*}.
	\end{align*} 
	Because $\lambda^{-1}\delta^{-\alpha}\max_{j}\{\|\tilde{\theta}_{j} - \theta_{j}^{*}\|^{\alpha + 1}\} = o_{P}(1)$, then $P(\cS_{2}^{*}) \to 1$ and hence $P(\cS_{1}\cap\cS_{2}^{*}) \to 1$. This implies $P(\cS_{1}\cap\cS_{2}) \to 1$.
	Notice that $\cS_{3}$ can be rewritten as 
	\[
	\{2\epsilon_{n}\lambda^{-1}\max_{k\in \mathcal{K}_{0}}\{\tilde{\pi}_{k}\|\tilde{b}_{k}\|^{\alpha}\} \leq 1,  \sum_{k\in \cK_0^{c}}\lambda\tilde{w}_{k}/\epsilon_{n}\leq C_{\rm U}^{-1}B_{\rm L}C_{\rm L}\}.
	\]
	According to \eqref{eq: bias est}, because $\|b_{k}^{*}\|$ is bounded away from zero for $k\in \cK_{0}^{c}$,
	\begin{align*}
	\cS_{3} &\supset \{2\epsilon_{n}\lambda^{-1} (1 + 2\delta^{-1})^{\alpha} \max_{k}\{\|\tilde{\theta}_{k} - \theta_{k}^{*}\|^{\alpha}\}\leq 1,\ \sum_{k\in \cK_0^{c}}\lambda\tilde{w}_{k}/\epsilon_{n}\leq C_{\rm U}^{-1}B_{\rm L}C_{\rm L}\} \\
	& \supset \{2\epsilon_{n}\lambda^{-1} (1 + 2\delta^{-1})^{\alpha} \max_{k}\{\|\tilde{\theta}_{k} - \theta_{k}^{*}\|^{\alpha}\}\leq 1,\ \max_{k\in \cK_{0}^{c}}\|\tilde{w}_{k}\|\lambda K/\epsilon_{n}\leq C_{\rm U}^{-1}B_{\rm L}C_{\rm L}\}\\
	& = \{2\epsilon_{n}\lambda^{-1} (1 + 2\delta^{-1})^{\alpha} \max_{k}\{\|\tilde{\theta}_{k} - \theta_{k}^{*}\|^{\alpha}\}\leq 1,\ \lambda^{-1}\epsilon_{n}\min_{k\in \cK_{0}^{c}}\|\tilde{b}_{k}\|^{\alpha}\geq C_{\rm U}B_{L}^{-1}C_{\rm L}^{-1} K \}\\
	& \supset \{2\epsilon_{n}\lambda^{-1} (1 + 2\delta^{-1})^{\alpha} \max_{k}\{\|\tilde{\theta}_{k} - \theta_{k}^{*}\|^{\alpha}\}\leq 1,\\
	& \phantom{\quad \supset } (B_{\rm L} - (1 + 2\delta^{-1}) \max_{k}\|\tilde{\theta}_{k} - \theta_{k}^{*}\|)^{\alpha}\geq C_{\rm U}B_{L}^{-1}C_{\rm L}^{-1} \lambda K / \epsilon_{n}\} \\
	& \equalscolon \cS_{3}^{*}.
	\end{align*}
	Since $\lambda K / \epsilon_{n} \to 0$, we have $\lambda/\epsilon_{n} \to 0$. Because
	$\epsilon_{n}\lambda^{-1}\delta^{-\alpha}\max_{k}\{\|\tilde{\theta}_{k} - \theta_{k}^{*}\|^{\alpha}\} = o_{P}(1)$, then we have
	\[(1 + 2\delta^{-1}) \max_{k}\|\tilde{\theta}_{k} - \theta_{k}^{*}\| =  (\epsilon_{n}^{-1}\lambda \times \epsilon_{n}\lambda^{-1}(1 + 2\delta^{-1})^{\alpha}\max_{j}\{\|\tilde{\theta}_{j} - \theta_{j}^{*}\|^{\alpha}\})^{\frac{1}{\alpha}} = o_{P}(1).\]
	Hence $P(\cS_{3}^{*}) \to 1$  and hence $P(\cS_{3}) \to 1$. This completes the proof.
	
\end{proof}

\section{Proof of Theorem \ref{thm: selection consistency}}
We first establish a lemma that is needed in the proof of Theorem \ref{thm: selection consistency}.
\begin{lemma}\label{lem: sure screening}
	On the event $\cS$, we have $\hat{b}_{k} = 0$ for $k \in \cK_{0}$. 
\end{lemma}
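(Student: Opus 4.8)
The plan is to read off the lemma as a direct consequence of the equivalence between the penalized problem \eqref{eq: opt problem} and the oracle problem \eqref{eq: oracle problem}, which was essentially already proved inside the proof of Lemma \ref{lem: close set}. Recall that in that proof, on the event $\cS$, the chain of inequalities bounding $\tilde\pi_{k}\|\tilde V_{k}(\tilde\theta_{k}-\bar\theta)\|$ terminates in $\tilde\pi_{k}\|\tilde V_{k}(\tilde\theta_{k}-\bar\theta)\|<\lambda\tilde w_{k}$ for every $k\in\cK_{0}$, where $\bar\gamma=(\bar\theta^{\T},\bar b_{1}^{\T},\dots,\bar b_{K}^{\T})^{\T}$ denotes an arbitrary minimizer of the oracle problem \eqref{eq: oracle problem}. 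This is exactly the hypothesis required by Lemma \ref{lem: equivalent}, so Lemma \ref{lem: equivalent} yields $\cM=\bar{\cM}$ on $\cS$; that is, the set of minimizers of \eqref{eq: opt problem} coincides with the set of minimizers of the oracle problem.

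Given this, the remaining step is short. By construction every element of $\bar{\cM}$ lies in the feasible set $\Gamma_{0}$ of the oracle problem, and every $\gamma\in\Gamma_{0}$ has its $b_{k}$-block equal to zero for $k\in\cK_{0}$. Since $\hat\gamma=(\hat\theta^{\T},\hat b_{1}^{\T},\dots,\hat b_{K}^{\T})^{\T}\in\cM=\bar{\cM}\subseteq\Gamma_{0}$ on $\cS$, it follows that $\hat b_{k}=0$ for all $k\in\cK_{0}$ on $\cS$, which is the claim. So the proof is a two-line bookkeeping argument chaining Lemmas \ref{lem: equivalent} and \ref{lem: close set}.

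There is essentially no substantive obstacle here; the only points that deserve a sentence of care are (i) the existence of a minimizer $\bar\gamma$ of \eqref{eq: oracle problem}, which holds because on $\cS$ the objective of the oracle problem is coercive (the block $\sum_{k\in\cK_{0}}\tilde\pi_{k}\tilde V_{k}$ is positive definite by Condition \ref{cond: existence of nonbias} and the eigenvalue bounds on $\cS_{1}$, and the group penalties control the remaining directions), and (ii) the fact that $\Gamma_{0}$ as written also imposes $b_{k}\neq 0$ for $k\in\cK_{0}^{c}$, which is not a closed condition. The latter is harmless for the present statement since we only use the constraints on the $\cK_{0}$-blocks; if one prefers, one may replace $\Gamma_{0}$ by its closure without changing the minimum value or the conclusion that $\hat b_{k}=0$ for $k\in\cK_{0}$.
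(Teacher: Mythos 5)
Your proposal is correct and follows essentially the same route as the paper: the paper's own proof simply notes that $\bar{\cM}=\cM$ on $\cS$ (established inside the proof of Lemma \ref{lem: close set} via Lemma \ref{lem: equivalent}) and then reads off $\hat{b}_{k}=0$ for $k\in\cK_{0}$ from the definition of $\bar{\cM}$. Your additional remarks on existence of an oracle minimizer and on the non-closedness of $\Gamma_{0}$ are sensible housekeeping that the paper leaves implicit, but they do not change the argument.
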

\begin{proof}
	According to the proof of Lemma \ref{lem: close set}, we have $\bar{\cM} = \cM$ on $\cS$. By the definition of $\bar{\cM}$, for any $(\hat{\theta}^{\T},\hat{b}_{1}^{\T},\dots,\hat{b}_{K}^{\T})^{\T} \in \cM = \bar{\cM}$, we have $\hat{b}_{k} = 0$ for $k \in \cK_{0}$. 
\end{proof}

\paragraph{\bf Restate of Theorem \ref{thm: selection consistency}.}
\emph{Under Conditions \ref{cond: existence of nonbias}, \ref{cond: estimate rate} and \ref{cond: matrix rate 2}, if $\lambda \asymp 1/n$ and $\alpha > \max\{\nu_{1}\nu_{2}^{-1}, \nu_{2}^{-1} - 1\}$, we have
	\[P(\hat{\cK}_{0} = \cK_{0}) \to 1\]
	provided $\min_{k\in \cK_{0}^{c}}\tilde{\pi}_{k} > C_{\pi} /K$ and $K\log n /n \to 0$ where $C_{\pi}$ is some positive constant.}
\begin{proof}
	As before, we let $C_{\rm L} = 0.9B_{\rm L}$, $C_{\rm U} = 1.1B_{U}$, $\Delta_{\rm M} = \min\{B_{\rm L} - C_{\rm L}, C_{\rm U} - B_{\rm U}\}$ and $\epsilon_{n} = a_{n}K/n$ where $a_{n}$ is a sequence of positive numbers such that $a_{n}\to\infty$ and $a_{n} / \log n\to 0$. 
	Define
	\begin{equation*}
	\begin{aligned}
	&\cS_{1}^{\prime} = \{\|\tilde{V}_{k} - V_{k}^{*}\| \leq \Delta_{\rm M}, \ k=1,\dots,K\}, \\
	&\cS_{2}^{\prime} = \{\|\hat{\theta}_{\rm IVW} - \tilde{\theta}_{k}\| > (\tilde{\pi}_{k}C_{\rm L})^{-1}\lambda\tilde{w}_{k} + \epsilon_{n}, \ k\in \cK_{0}^{c}\},\\ 
	& \cS^{\prime} = \cS_{1}^{\prime}\cap\cS_{2}^{\prime}.
	\end{aligned}
	\end{equation*}

	Recall the definition of $\cS$ in \eqref{def: proper set}. According to Lemmas \ref{lem: close set} and \ref{lem: sure screening}, we have  $\|\hat{\theta} - \hat{\theta}_{\rm IVW}\|\leq \epsilon_{n}$ and $\hat{b}_{k} = 0$ for $k \in \cK_{0}$ when $\cS$ holds.It is straightforward to verify that the conditions of Lemma \ref{lem: main} is satisfied under the conditions of this theorem. Hence we have $P(\cS) \to 1$ according to Lemma \ref{lem: main}. To prove this theorem, it then suffices to prove $\hat{b}_{k}\not = 0$ for $k\in \cK_{0}^{c}$ on $\cS^{\prime}\cap \cS$ and $P(\cS^{\prime}) \to 1$. 
	
	First, we prove that $\hat{b}_{k}\not = 0$ for $k\in \cK_{0}^{c}$ on the event $\cS^{\prime}\cap \cS$. The arguments in the rest of this paragraph are derived on the event $\cS^{\prime}\cap \cS$.
	By Weyl's Theorem, $\max_{k}\{\max\{|\lambda_{\rm min}(\tilde{V}_{k}) - \lambda_{\rm min}(V_{k}^{*})|, |\lambda_{\rm max}(\tilde{V}_{k}) - \lambda_{\rm max}(V_{k}^{*})|\}\} \leq \max_{k}\|\tilde{V}_{k} - V_{k}^{*}\| \leq \Delta_{\rm M}$.
	Thus, for $k = 1,\dots,K$, 
	\[C_{\rm L} \leq \lambda_{\rm min}(\tilde{V}_{k}) \leq \lambda_{\rm max}(\tilde{V}_{k}) \leq C_{\rm U}.\]
	Because $(\hat{\theta}^{\T},\hat{b}_{1}^{\T},\dots,\hat{b}_{K}^{\T})^{\T}$ is a minimum point of problem \eqref{eq: opt problem} in the main text, we have \begin{equation}\label{eq: KKT}
	\tilde{\pi}_{k}\tilde{V}_{k}(\tilde{\theta}_{k} - \hat{\theta} - \hat{b}_{k}) = \lambda\tilde{w}_{k}\hat{z}_{k}, \ k \in \cK_{0}^{c}
	\end{equation}
	for some $\|\hat{z}_{k}\| \leq  1$ according to the KKT condition. Thus by \eqref{eq: KKT}, the definition of eigenvalue and the triangular inequality, we have
	\[\tilde{\pi}_{k}C_{\rm L}(|\|\tilde{\theta}_{k} - \hat{\theta}\| - \|\hat{b}_{k}\||)\leq
	\tilde{\pi}_{k}\lambda_{\rm min}(\tilde{V}_{k})(|\|\tilde{\theta}_{k} - \hat{\theta}\| - \|\hat{b}_{k}\||)
	\leq \|\tilde{\pi}_{k}\tilde{V}_{k}(\tilde{\theta}_{k} - \hat{\theta} - \hat{b}_{k})\| \leq \lambda\tilde{w}_{k}\]
	for $k\in \cK_{0}^{c}$. Then on $\cS_{2}^{\prime}$
	\begin{align*}
	\|\hat{b}_{k}\| 
	&\geq \|\tilde{\theta}_{k} - \hat{\theta}\| - |\|\tilde{\theta}_{k} - \hat{\theta}\| - \|\hat{b}_{k}\||\\
	& \geq \|\tilde{\theta}_{k} - \hat{\theta}\| - (\tilde{\pi}_{k}C_{\rm L})^{-1}\lambda\tilde{w}_{k}\\
	& \geq \|\tilde{\theta}_{k} - \hat{\theta}_{\rm IVW}\| - \|\hat{\theta} - \hat{\theta}_{\rm IVW}\| - (\tilde{\pi}_{k}C_{\rm L})^{-1}\lambda\tilde{w}_{k}\\
	&\geq \|\tilde{\theta}_{k} - \hat{\theta}_{\rm IVW}\| - \epsilon_{n} - (\tilde{\pi}_{k}C_{\rm L})^{-1}\lambda\tilde{w}_{k}\\
	&> 0.
	\end{align*}
	This indicates that $\hat{b}_{k}\not = 0$ for $k\in \cK_{0}^{c}$ on the event $\cS^{\prime}\cap \cS$.
	
	Next, we prove that $P(\cS^{\prime}) \to 1$. By Condition \ref{cond: matrix rate 2}, we have $P(\cS_{1}^{\prime}) \to 1$.	
	Note that
	\[\hat{\theta}_{\rm IVW} - \tilde{\theta}_{k} = \left(\sum_{j \in \cK_{0}}\tilde{\pi}_{j}\tilde{V}_{j}\right)^{-1}\left(\sum_{j \in \cK_{0}}\tilde{\pi}_{j}\tilde{V}_{j}(\tilde{\theta}_{j} - \tilde{\theta}_{k})\right).\]
	On the event $\cS_{1}^{\prime}$, for $k \in \cK_{0}^{c}$, we have
	\begin{align*}
	\|\hat{\theta}_{\rm IVW} - \tilde{\theta}_{k}\|& \geq (B_{\rm U} + \max_{j\in \cK_{0}}\|\tilde{V}_{j} - V_{j}^{*}\|)^{-1}B_{\rm L}(\min_{j\in \cK_{0}}\|V_{j}^{*}(\tilde{\theta}_{j} - \tilde{\theta}_{k})\|)\\
	& \geq (B_{\rm U} + \max_{j\in cK_{0}}\|\tilde{V}_{j} - V_{j}^{*}\|)^{-1}B_{\rm L}(B_{\rm L} - \max_{j\in \cK_{0}}\|\tilde{V}_{j} - V_{j}^{*}\|)\min_{j\in \cK_{0}}\|\tilde{\theta}_{j} - \tilde{\theta}_{k}\|\\
	&\geq C_{\rm U}^{-1}B_{\rm L}C_{\rm L}\min_{j\in \cK_{0}}\|\tilde{\theta}_{j} - \tilde{\theta}_{k}\|.
	\end{align*}
	Because for $k \in \cK_{0}^{c}$ and $j \in \cK_{0}$,
	\[\|\tilde{\theta}_{j} - \tilde{\theta}_{k}\| \geq \|b_{k}^{*}\| - \|\tilde{\theta}_{j} - \theta_{0}\| - \|\tilde{\theta}_{k} - \theta_{k}^{*}\|,\]
	we have
	\[
	\min_{j\in \cK_{0},k\in \cK_{0}^{c}}\|\tilde{\theta}_{j} - \tilde{\theta}_{k}\| \geq B_{\rm L} - 2 \max_{j}\|\tilde{\theta}_{j} - \theta_{j}^{*}\|.
	\]
	Thus
	\begin{align*}
	\cS_{1}^{\prime} \cap \cS_{2}^{\prime} 
	&= \cS_{1}^{\prime} \cap \{\|\hat{\theta}_{\rm IVW} - \tilde{\theta}_{k}\| > (\tilde{\pi}_{k}C_{\rm L})^{-1}\lambda\tilde{w}_{k} + \epsilon_{n}, \ k\in \cK_{0}^{c}\}\\
	&\supset \cS_{1}^{\prime} \cap \{C_{\rm U}^{-1}B_{\rm L}C_{\rm L}(B_{\rm L} - 2 \max_{j}\|\tilde{\theta}_{j} - \theta_{j}^{*}\|) > (\min_{k\in\cK_{0}^{c}}\tilde{\pi}_{k}C_{\rm L})^{-1}\lambda\max_{k\in \cK_{0}^{c}}\tilde{w}_{k} + \epsilon_{n}\}\\
	&\equalscolon \cS_{1}^{\prime}\cap\cS_{2}^{\prime*}.
	\end{align*}
	According to Condition \ref{cond: estimate rate}, $C_{\rm U}^{-1}B_{\rm L}C_{\rm L}(B_{\rm L}
	- 2 \max_{j}\|\tilde{\theta}_{j} - \theta_{j}^{*}\|) = C_{\rm U}^{-1}C_{\rm L}B_{\rm L}^{2} + o_{P}(1)$.
	Recall that $\tilde{w}_{k} = 1 / \|\tilde{b}_{k}\|^{\alpha}$. Then $\max_{k\in \cK_{0}^{c}}\tilde{w}_{k} = O_{P}(1)$ according to \eqref{eq: bias est} and \eqref{eq: estimate rate}. Since $\min_{k\in \cK_{0}^{c}}\tilde{\pi}_{k} > C_{\pi} /K$ and $\lambda \asymp 1/n$, we have $(\min_{k\in\cK_{0}^{c}}\tilde{\pi}_{k}C_{\rm L})^{-1}\lambda\max_{k\in \cK_{0}^{c}}\tilde{w}_{k} = O_{P}(K/n)$. Moreover, $\epsilon_{n} = o(K\log n / n)$ by definition. $(\min_{k\in\cK_{0}^{c}}\tilde{\pi}_{k}C_{\rm L})^{-1}\lambda\max_{k\in \cK_{0}^{c}}\tilde{w}_{k} + \epsilon_{n} = o_{P}(1)$ because $K\log n / n \to 0$. Thus $P(\cS_{2}^{\prime *}) \to 1$ and hence $P(\cS^{\prime}) \to 1$. This completes the proof.
	
\end{proof}

\section{Proof of Theorem \ref{thm: AN}}
\paragraph{\bf Restate of Theorem \ref{thm: AN}.}
\emph{Suppose Conditions \ref{cond: existence of nonbias}, \ref{cond: estimate rate} and \ref{cond: uniform AL} hold. If (i) $\nu_{1} < 1/2$; (ii) there are some deterministic matrices $V_{k}^{*}$, $k\in \cK_{0}$, such that $\max_{k\in \cK_{0}}\|\tilde{V}_{k} - V_{k}^{*}\| = o_{P}(n^{- 1 / 2 + \nu_{2}})$; (iii) for $k\in \cK_{0}$, the eigenvalues of $V_{k}^{*}$ and  $\var\left[\Psi_{k}(Z^{(k)})\right]$ are bounded away from zero and infinity; (iv) for $k\in \cK_{0}$, $u\in \bbR^{d}$, $\|u\| = 1$ and some $\tau > 0$, $E[|u^{\T}\Psi_{k}(Z^{(k)})|^{1 + \tau}]$ are bounded; (v)$\lambda \asymp 1/n$ and $\alpha > \max\{\nu_{1}\nu_{2}^{-1}, \nu_{2}^{-1} - 1\}$, then for any fixed $q$ and $q\times d$ matrix $W_{n}$ such that the eigenvalues of $W_{n}W_{n}^{\T}$ are bounded away from zero and infinity, we have
	\[\sqrt{n}\cI_{n}^{-1/2}W_{n}(\hat{\theta} - \theta_{0})\stackrel{d}{\to} N(0, I_{q}),\]
	where $I_{q}$ is the identity matrix of order $q$, $\cI_{n} = \sum_{k \in \cK_{0}}\tilde{\pi}_{k}H_{n,k}\var\left[ \Psi_{k}(Z^{(k)})\right]H_{n,k}^{\T}$, $H_{n,k} = W_{n}V_{0}^{*-1}V_{k}^{*}$ and $V_{0}^{*} = \sum_{k\in \cK_{0}} \tilde{\pi}_{k}V_{k}^{*}$.}
\begin{proof}
	According to Theorem \ref{thm: equivalence}, under (i), (ii), (v), Condition \ref{cond: existence of nonbias} and \ref{cond: estimate rate}, we have
	\[\|\hat{\theta} - \hat{\theta}_{\rm IVW}\| = o_{P}\left(\frac{1}{\sqrt{n}}\right).\]
	Then to establish the asymptotic normality result of $\hat{\theta}$, it suffices to establish the asymptotic normality of $\hat{\theta}_{\rm IVW}$.
	According to (ii), (iii) and Condition \ref{cond: existence of nonbias}, we have
	\begin{align*}
	&\cI_{n}^{-1/2}W_{n}(\hat{\theta}_{\rm IVW} - \theta_{0})\\
	&=  \cI_{n}^{-1/2}W_{n}\left(\sum_{k \in \cK_{0}}\tilde{\pi}_{k}\tilde{V}_{k}\right)^{-1}\left(\sum_{k \in \cK_{0}}\tilde{\pi}_{k}\tilde{V}_{k}(\tilde{\theta}_{k} - \theta_{0})\right)\\
	&=  \cI_{n}^{-1/2}W_{n}\left(V_{0}^{*-1} + o_{P}(1)\right)\left(\sum_{k \in \cK_{0}}\tilde{\pi}_{k}V_{k}^{*}(\tilde{\theta}_{k} - \theta_{0})\right)\\
	&\quad+ \cI_{n}^{-1/2}W_{n}\left(V_{0}^{*-1} + o_{P}(1)\right)\left(\sum_{k \in \cK_{0}}\tilde{\pi}_{k}\left(\tilde{V}_{k} - V_{k}^{*}\right)(\tilde{\theta}_{k} - \theta_{0})\right),
	\end{align*}
	where $V_{0}^{*} = \sum_{k\in \cK_{0}} \tilde{\pi}_{k}V_{k}^{*}$. According to (iii) and (iv), we have
	\begin{equation}\label{eq: rate variance}
	\cI_{n}^{-1/2}W_{n} = O(1).
	\end{equation} 
	Because $\max_{k}\|\tilde{\theta}_{k} - \theta_{k}^{*}\| = O_{P}\left(n^{-\nu_{2}}\right)$, $\max_{k\in \cK_{0}}\|\tilde{V}_{k} - V_{k}^{*}\| = o_{P}(n^{- 1 / 2 + \nu_{2}})$ and \eqref{eq: rate variance}, we have
	\begin{align*}
	&\cI_{n}^{-1/2}W_{n}\left(V_{0}^{*-1} + o_{P}(1)\right)\left(\sum_{k \in \cK_{0}}\tilde{\pi}_{k}\left(\tilde{V}_{k} - V_{k}^{*}\right)(\tilde{\theta}_{k} - \theta_{0})\right) \\
	&= O_{P}\left(\max_{k}\|\tilde{\theta}_{k} - \theta_{k}^{*}\|\right)O_{P}\left(\max_{k\in \cK_{0}}\|\tilde{V}_{k} - V_{k}^{*}\|\right) = o_{P}\left(\frac{1}{\sqrt{n}}\right),
	\end{align*}
	and hence
	\begin{align*}
	&\cI_{n}^{-1/2}W_{n}(\hat{\theta}_{\rm IVW} - \theta_{0})\\
	&=\cI_{n}^{-1/2}W_{n}V_{0}^{*-1}\left(\sum_{k \in \cK_{0}}\tilde{\pi}_{k}V_{k}^{*}(\tilde{\theta}_{k} - \theta_{0})\right)+ o_{P}\left(\cI_{n}^{-1/2}W_{n}V_{0}^{*-1}\left(\sum_{k \in \cK_{0}}\tilde{\pi}_{k}V_{k}^{*}(\tilde{\theta}_{k} - \theta_{0})\right)\right)\\
	&\quad+ o_{P}\left(\frac{1}{\sqrt{n}}\right).
	\end{align*}
	Thus Theorem \ref{thm: AN} is proved if we show
	\begin{equation}\label{eq: main term AN}
	\sqrt{n}\cI_{n}^{-1/2}W_{n}V_{0}^{*-1}\left(\sum_{k \in \cK_{0}}\tilde{\pi}_{k}V_{k}^{*}(\tilde{\theta}_{k} - \theta_{0})\right)
	\stackrel{d}{\to} N(0, I_{q}). 
	\end{equation}
	By Condition \ref{cond: uniform AL}, we have
	\begin{align*}
	&\sqrt{n}\cI_{n}^{-1/2}W_{n}V_{0}^{*-1}\left(\sum_{k \in \cK_{0}}\tilde{\pi}_{k}V_{k}^{*}(\tilde{\theta}_{k} - \theta_{0})\right)\\
	&=\sqrt{n}\cI_{n}^{-1/2}W_{n}V_{0}^{*-1}\left\{\sum_{k \in \cK_{0}}\tilde{\pi}_{k}V_{k}^{*}\frac{1}{n_{k}}\sum_{i=1}^{n_{k}}\Psi_{k}(Z_{i}^{(k)})\right\} + o_{P}(1) \\
	& = \sum_{k\in\cK_{0}}\sum_{i=1}^{n_{k}}\eta_{k,i} + o_{P}(1),
	\end{align*}
	where $\eta_{k,i} =  \cI_{n}^{-1/2}W_{n}V_{0}^{*-1}V_{k}^{*}\Psi_{k}(Z_{i}^{(k)})/\sqrt{n}$. Because $E\left[\Psi_{k}(Z^{(k)})\right] = 0$ for $k \in \cK_{0}$,  (iii) and (iv) implies that Lindeberg-Feller condition \citep{van2000asymptotic} is satisfied. Then \eqref{eq: main term AN} follows since 
	\[
	\sum_{k\in \cK_{0}}\sum_{i=1}^{n_{k}}\var[\eta_{k,i}] = \cI_{n}^{-1/2}\cI_{n}\cI_{n}^{-1/2} = I_{q}.
	\]
	
\end{proof}

\section{Uniform asymptotically linear representation of M-estimator}\label{app: M-estimator}
In this section, we establish the uniform asymptotically linear representation in the case where $\tilde{\theta}_{k}$'s are M-estimators, i.e.
\[\tilde{\theta}_{k} = \mathop{\arg\min}_{\theta}\frac{1}{n_{k}}\sum_{i=1}^{n_{k}}L_{k}(Z_{i}^{(k)}, \theta),\]
for $k=1,\dots,K$, where $L_{k}(\cdot,\cdot)$ is some loss function that may differ from source to source.
In this case, the probability limit of $\tilde{\theta}_{k}$ is
the minimum point of $E[L_{k}(Z^{(k)}; \theta)]$ under some regularity conditions. Hence here we use $\theta_{k}^{*}$ to denote the minimum point of $E[L_{k}(Z^{(k)}; \theta)]$. Let $\zeta_{k}(\theta) = L_{k}(Z^{(k)}; \theta) - E[L_{k}(Z^{(k)}; \theta)]$. 
To begin with, we first introduce a commonly used condition in the literature with diverging parameter dimension.
\begin{condition}\label{cond: loss gradient and global}
	There are some constants $\sigma_{1}$, $u_{1}$, $b$ independent of $n$ and some positive definite matrices $\Phi_{k}$ ($k=1,\dots,K$) that may depend on $n$ such that
	\begin{equation*}
	E\left[\exp\left(\lambda\frac{\gamma^{\T}\nabla\zeta_{k}(\theta)}{\|\Phi_{k}\gamma\|}\right)\right] \leq \exp\left(\frac{\sigma_{1}^{2}\lambda^{2}}{2}\right)
	\end{equation*}
	and 
	\begin{equation*}
	E[L_{k}(Z^{(k)},\theta)] - E[L_{k}(Z^{(k)},\theta_{k}^{*})] \geq b\|\Phi_{k}(\theta - \theta_{k}^{*})\|^{2},
	\end{equation*}
	for all $\theta$, $|\lambda| \leq u_{1}$, $\|\gamma\| = 1$ and $k=1,\dots,K$.
\end{condition}
See \cite{spokoiny2012parametric,spokoiny2013bernstein,zhou2018new, chen2020robust} for further explanations and examples of this condition. The following proposition shows that Condition \ref{cond: loss gradient and global} along with some other conditions implies Condition \ref{cond: estimate rate}.


\begin{proposition}\label{prop: uniform consistent}
	Under Condition \ref{cond: loss gradient and global}, if (i) for $k = 1,\dots, K$, the eigenvalues of $\Phi_{k}$ are bounded away from zero; (ii) $d = O(n^{\nu_{0}})$, $K = O(n^{\nu_{1}})$ for some positive constants $\nu_{0}, \nu_{1}$ such that $\nu_{0} + \nu_{1} < 1$ and (iii) $\tilde{\pi}_{k} \geq C^{*}/K$ for some positive constant $C^{*}$, then $\max_{k}\|\tilde{\theta}_{k} - \theta_{k}^{*}\| = O_{P}\left(n^{-(1 - \nu_{0} - \nu_{1})/2}\right)$.
\end{proposition}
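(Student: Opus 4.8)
The plan is to run, for each source $k$ separately, the standard basic-inequality argument for M-estimators — bounding the population excess risk from below by the quadratic growth condition and from above by an empirical-process fluctuation — and then to union bound over the $K$ sources. Write $\mathbb{E}_{n_k}f = n_k^{-1}\sum_{i=1}^{n_k}f(Z_i^{(k)})$ for the empirical average over the $k$th sample, and recall $\zeta_k(\theta) = L_k(Z^{(k)};\theta) - E[L_k(Z^{(k)};\theta)]$, so that $\mathbb{E}_{n_k}L_k(\cdot;\theta) = \mathbb{E}_{n_k}\zeta_k(\theta) + E[L_k(Z^{(k)};\theta)]$ and $\theta\mapsto\mathbb{E}_{n_k}\zeta_k(\theta)$ is a centered empirical process. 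Since $\tilde\theta_k$ minimizes $\theta\mapsto\mathbb{E}_{n_k}L_k(\cdot;\theta)$, we have $\mathbb{E}_{n_k}L_k(\cdot;\tilde\theta_k)\le\mathbb{E}_{n_k}L_k(\cdot;\theta_k^*)$, which rearranges to
\begin{equation*}
E[L_k(Z^{(k)};\tilde\theta_k)] - E[L_k(Z^{(k)};\theta_k^*)] \;\le\; \mathbb{E}_{n_k}\zeta_k(\theta_k^*) - \mathbb{E}_{n_k}\zeta_k(\tilde\theta_k).
\end{equation*}
Combining with the (global) quadratic growth inequality in Condition \ref{cond: loss gradient and global} gives
\begin{equation*}
b\,\|\Phi_k(\tilde\theta_k - \theta_k^*)\|^2 \;\le\; \big|\mathbb{E}_{n_k}\zeta_k(\tilde\theta_k) - \mathbb{E}_{n_k}\zeta_k(\theta_k^*)\big|,
\end{equation*}
so everything reduces to controlling the increment of $\mathbb{E}_{n_k}\zeta_k$ between $\theta_k^*$ and $\tilde\theta_k$, where $\tilde\theta_k$ is a priori only known to lie somewhere in the parameter space.

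Second, I would bound that increment using the sub-Gaussian gradient condition. Writing $\zeta_k(\theta)-\zeta_k(\theta_k^*) = \int_0^1 \nabla\zeta_k(\theta_k^*+t(\theta-\theta_k^*))^\T(\theta-\theta_k^*)\,dt$ and using that $\gamma^\T\nabla\zeta_k(\theta)/\|\Phi_k\gamma\|$ is sub-Gaussian with parameter $\sigma_1$ \emph{uniformly} in $\theta$ and unit $\gamma$, a chaining argument over the ball $\{\theta:\|\Phi_k(\theta-\theta_k^*)\|\le\rho\}$ yields, for a constant $C$ depending only on $(\sigma_1,u_1)$,
\begin{equation*}
\sup_{\|\Phi_k(\theta-\theta_k^*)\|\le\rho}\big|\mathbb{E}_{n_k}\zeta_k(\theta)-\mathbb{E}_{n_k}\zeta_k(\theta_k^*)\big| \;\le\; C\,\rho\,\sqrt{\frac{d+t}{n_k}}
\end{equation*}
on an event of probability at least $1-e^{-t}$; here assumption (i) (eigenvalues of $\Phi_k$ bounded below) is used to pass between $\|\Phi_k\cdot\|$ and $\|\cdot\|$ and to bound covering numbers. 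Because the growth bound is global, I would then run a peeling/slicing argument over the dyadic shells $2^{j-1}\rho_0<\|\Phi_k(\theta-\theta_k^*)\|\le2^j\rho_0$: if $\tilde\theta_k$ lay in shell $j$, the two displays force $b\,2^{2(j-1)}\rho_0^2\le C\,2^j\rho_0\sqrt{(d+t)/n_k}$, which fails as soon as $2^j\rho_0\gg\sqrt{(d+t)/n_k}$. Hence, on the same event, $\|\Phi_k(\tilde\theta_k-\theta_k^*)\|\le C'\sqrt{(d+t)/n_k}$, and so $\|\tilde\theta_k-\theta_k^*\|\le C''\sqrt{(d+t)/n_k}$ by (i).

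Third, I would uniformize over $k$. Applying the bound with $t=\log K+s$ and taking a union bound over the $K=O(n^{\nu_1})$ sources gives, with probability at least $1-e^{-s}$,
\begin{equation*}
\max_k\|\tilde\theta_k-\theta_k^*\| \;\le\; C''\sqrt{\frac{d+\log K+s}{\min_k n_k}}.
\end{equation*}
By assumption (iii), $\min_k n_k = n\min_k\tilde\pi_k \ge C^* n/K$, so the right-hand side is $O\big(\sqrt{(d+\log K)K/n}\big)$. With $d=O(n^{\nu_0})$ and $K=O(n^{\nu_1})$ this is $O\big(n^{(\nu_0+\nu_1-1)/2} + n^{(\nu_1-1)/2}\sqrt{\log n}\big) = O(n^{-(1-\nu_0-\nu_1)/2})$, the logarithmic term being harmless because $\nu_0$ can be taken strictly positive (it is a free exponent, and is so in particular when $d$ is bounded) while $\nu_0+\nu_1<1$. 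Letting $s\to\infty$ slowly converts this into the claimed $O_P$ statement.

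The main obstacle is the second step: establishing the uniform-in-$\theta$ empirical-process bound with the correct \emph{linear} dependence on the radius $\rho$ and only a $\sqrt{(d+t)/n_k}$ factor, together with the peeling that converts the global (rather than merely local) growth condition into a localized rate. This is precisely where a careful finite-sample chaining estimate in the spirit of Spokoiny's M-estimation theory is required, and where assumptions (i) and (ii) enter to keep the covering numbers, and hence the resulting rate, under control.
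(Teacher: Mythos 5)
Your proposal is correct and follows essentially the same route as the paper: a per-source exponential tail bound of the form $P\bigl(\|\Phi_{k}(\tilde{\theta}_{k}-\theta_{k}^{*})\|\geq C\sqrt{(d+t)/n_{k}}\bigr)\leq e^{-t}$, a Bonferroni union bound over the $K$ sources, and the same rate computation from (i)--(iii) (the paper takes $t_{n}=\min\{n^{\nu_{0}},\min_{k}\Delta_{k}\}$ rather than $t=\log K+s$, which is cosmetic). The only difference is that the paper obtains the per-source bound by directly citing Theorem 5.2 of Spokoiny (2012) under Condition \ref{cond: loss gradient and global}, whereas you sketch its derivation from scratch via the basic inequality, chaining, and peeling; that is fine in spirit, though the cited theorem carries a range restriction $t\leq\Delta_{k}$ (of order $n_{k}$, coming from the constraint $|\lambda|\leq u_{1}$ in Condition \ref{cond: loss gradient and global}) that your sketch glosses over but that is harmless at the logarithmic level of $t$ you use.
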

\begin{proof}
	For convenience, in this and the following proofs, we let $C$ be a generic positive constant that
	may be different in different places.
	Under Condition \ref{cond: loss gradient and global}, according to Theorem 5.2 in \cite{spokoiny2012parametric}, we have 
	\[P\left(\|\Phi_{k}(\tilde{\theta}_{k} - \theta_{k}^{*})\| \geq 6\sigma_{1} b^{-1}\sqrt{\frac{3d + t}{n_{k}}}\right) \leq e^{-t},\]
	for $t\leq \Delta_{k}$ with $\Delta_{k} = (3b^{-1}\sigma_{1}^{2}u_{1}n_{k}^{1/2}-1)^{2} - 3d$.
	Thus, according to (i), we have
	\[P\left(\|\tilde{\theta}_{k} - \theta_{k}^{*}\| \geq L\sqrt{\frac{3d + t}{n_{k}}}\right) \leq e^{-t},\]
	for $t\leq \Delta_{k}$  and $k = 1,\dots, K$, where $L = 6\sigma_{1} b^{-1}B_{L}^{-1}$. By Bonferroni inequality, it follows
	\[P\left(\max_{k}\|\tilde{\theta}_{k} - \theta_{k}^{*}\|\geq L\max_{k}\left\{\sqrt{\frac{3d + t}{n_{k}}}\right\}\right) \leq Ke^{-t}.\]
	By (iii) we have $n_{k}\geq C^{*} n / K$.
	Letting $t_{n} = \min\left\{n^{\nu_{0}},\min_{k}\Delta_{k}\right\}$,  according to (ii) and (iii), we have \[\max_{k}\left\{\sqrt{(3d + t_{n})/n_{k}}\right\} \leq \sqrt{K(3d + t_{n})/(C^{*}n)} \leq C\left(n^{-\frac{1 - \nu_{0}- \nu_{1}}{2}}\right).\]
	Thus
	\begin{equation}\label{eq: max rate-diverge K}
	P\left(\max_{k}\|\tilde{\theta}_{k} - \theta_{k}^{*}\|\geq Cn^{-\frac{1 - \nu_{0} - \nu_{1}}{2}}\right) \leq K\exp\left(-t_{n}\right) \to 0.
	\end{equation}
	This indicates that $\max_{k}\|\tilde{\theta}_{k} - \theta_{k}^{*}\| = O_{P}\left(n^{-(1 - \nu_{0} - \nu_{1})/2}\right)$.
	
\end{proof}
To establish the uniform asymptotically linear representation, some further conditions on the Hessian of the expected loss function are required. Let $D_{k}(\theta) = (\nabla^{2}E[L_{k}(Z^{(k)},\theta)])^{1/2}$ be the Hessian of the expected loss function and let $D_{k*} = D_{k}(\theta_{k}^{*})$. 
\begin{condition}\label{cond: loss hessian}
	For $k \in \cK_{0}$, the eigenvalues of $D_{k*}$ are bounded away from zero and infinity, and there is some constant $M_{*}$ such that
	$\|D_{k}^{2}(\theta) - D_{k*}^{2}\| \leq M_{*}\|\theta - \theta_{0}\|$ for all $\theta$. Moreover, for some constants $\sigma_{2}$ and $u_{2}$, 
	\[E\left[\exp\left(\lambda\frac{\gamma_{1}^{\T}\nabla^{2}\zeta_{k}(\theta)\gamma_{2}}{\|D_{k*}\gamma_{1}\|\|D_{k*}\gamma_{2}\|}\right)\right] \leq \exp\left(\frac{\sigma_{2}^{2}\lambda^{2}}{2}\right)\]
	for all $|\lambda| \leq u_{2}$, $\|\gamma_{1}\| = 1$, $\|\gamma_{2}\| = 1$ and $k\in \cK_{0}$.
\end{condition}
Under Condition \ref{cond: loss hessian} and the conditions of Proposition \ref{prop: uniform consistent}, we establish the uniform asymptotically linear representation (Condition \ref{cond: uniform AL}).
\begin{proposition}\label{prop: uniform i.i.d.}
	Under Condition \ref{cond: loss hessian} and the conditions of Proposition \ref{prop: uniform consistent}, if $\nu_{0} + \nu_{1} < 1/2$, then Condition \ref{cond: uniform AL} holds with $\Psi_{k}(Z^{(k)}) = - D_{k*}^{-2}\nabla L(Z^{(k)}, \theta_{0})$
\end{proposition}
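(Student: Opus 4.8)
The plan is to linearise the first-order condition that defines each M-estimator $\tilde{\theta}_{k}$ and then to bound the resulting remainder uniformly over $k\in\cK_{0}$ using the exponential moment bounds of Conditions \ref{cond: loss gradient and global} and \ref{cond: loss hessian}. Since $\tilde{\theta}_{k}$ minimises the differentiable criterion $n_{k}^{-1}\sum_{i=1}^{n_{k}}L_{k}(Z_{i}^{(k)},\cdot)$ over $\bbR^{d}$, it satisfies $n_{k}^{-1}\sum_{i=1}^{n_{k}}\nabla L_{k}(Z_{i}^{(k)},\tilde{\theta}_{k})=0$. Writing $g_{k}=n_{k}^{-1}\sum_{i=1}^{n_{k}}\nabla L_{k}(Z_{i}^{(k)},\theta_{0})$ and $\bar{H}_{k}=\int_{0}^{1}\big(n_{k}^{-1}\sum_{i=1}^{n_{k}}\nabla^{2}L_{k}(Z_{i}^{(k)},\theta_{0}+t(\tilde{\theta}_{k}-\theta_{0}))\big)\,dt$, the fundamental theorem of calculus gives $0=g_{k}+\bar{H}_{k}(\tilde{\theta}_{k}-\theta_{0})$, so $\tilde{\theta}_{k}-\theta_{0}=-\bar{H}_{k}^{-1}g_{k}$ whenever $\bar{H}_{k}$ is nonsingular. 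Because $\theta_{0}=\theta_{k}^{*}$ minimises $E[L_{k}(Z^{(k)},\cdot)]$ for $k\in\cK_{0}$, the gradient has mean zero, so $\Psi_{k}(Z^{(k)})=-D_{k*}^{-2}\nabla L_{k}(Z^{(k)},\theta_{0})$ is centred and the claimed representation holds with
\[
R_{k}=\tilde{\theta}_{k}-\theta_{0}-\frac{1}{n_{k}}\sum_{i=1}^{n_{k}}\Psi_{k}(Z_{i}^{(k)})=(D_{k*}^{-2}-\bar{H}_{k}^{-1})g_{k}=D_{k*}^{-2}(\bar{H}_{k}-D_{k*}^{2})\bar{H}_{k}^{-1}g_{k},
\]
so it remains to show $\max_{k\in\cK_{0}}\|R_{k}\|=o_{P}(n^{-1/2})$.

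Next I would bound the three factors, uniformly in $k\in\cK_{0}$, on the event (of probability tending to one, by Proposition \ref{prop: uniform consistent}) that $\max_{k}\|\tilde{\theta}_{k}-\theta_{k}^{*}\|\le Cr_{n}$ with $r_{n}=n^{-(1-\nu_{0}-\nu_{1})/2}$, which forces the whole segment $\{\theta_{0}+t(\tilde{\theta}_{k}-\theta_{0}):t\in[0,1]\}$ into the ball $B_{n}=\{\theta:\|\theta-\theta_{0}\|\le Cr_{n}\}$. For the gradient term, $\nabla\zeta_{k}(\theta_{0})=\nabla L_{k}(Z^{(k)},\theta_{0})$, so the sub-Gaussian bound of Condition \ref{cond: loss gradient and global}, a covering of the unit sphere of $\bbR^{d}$, and a Bonferroni bound over $k$ yield $\max_{k\in\cK_{0}}\|g_{k}\|=O_{P}\big(\sqrt{(d+\log K)/n_{k}}\big)=O_{P}(r_{n})$ up to logarithmic factors, using $n_{k}\ge C^{*}n/K$, $d=O(n^{\nu_{0}})$ and $K=O(n^{\nu_{1}})$. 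For the Hessian term I would split
\[
\|\bar{H}_{k}-D_{k*}^{2}\|\le\sup_{\theta\in B_{n}}\Big\|\frac{1}{n_{k}}\sum_{i=1}^{n_{k}}\nabla^{2}L_{k}(Z_{i}^{(k)},\theta)-\nabla^{2}E[L_{k}(Z^{(k)},\theta)]\Big\|+\sup_{\theta\in B_{n}}\|D_{k}^{2}(\theta)-D_{k*}^{2}\|,
\]
where the second supremum is at most $M_{*}Cr_{n}$ by the Lipschitz part of Condition \ref{cond: loss hessian}, and the first is controlled by the sub-Gaussian bound on $\nabla^{2}\zeta_{k}$ in Condition \ref{cond: loss hessian} through a covering argument over the directions $\gamma_{1},\gamma_{2}$ on the sphere and over $\theta\in B_{n}$, again giving $O_{P}(r_{n})$ up to logarithmic factors, uniformly in $k$. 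In particular $\max_{k\in\cK_{0}}\|\bar{H}_{k}-D_{k*}^{2}\|=o_{P}(1)$; combined with the lower bound on the eigenvalues of $D_{k*}$ (Condition \ref{cond: loss hessian}) and Weyl's inequality this gives $\max_{k\in\cK_{0}}\|\bar{H}_{k}^{-1}\|=O_{P}(1)$, which also justifies the inversion above, while $\|D_{k*}^{-2}\|=O(1)$.

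Multiplying the three bounds gives $\max_{k\in\cK_{0}}\|R_{k}\|=O_{P}(r_{n}^{2})=O_{P}(n^{-(1-\nu_{0}-\nu_{1})})$ up to logarithmic factors; since $\nu_{0}+\nu_{1}<1/2$ the exponent exceeds $1/2$ strictly, so the logarithmic factors are absorbed and $\max_{k\in\cK_{0}}\|R_{k}\|=o_{P}(n^{-1/2})$, which is exactly Condition \ref{cond: uniform AL} with the stated $\Psi_{k}$. I expect the main obstacle to be the uniform control of the empirical Hessian process $\theta\mapsto n_{k}^{-1}\sum_{i}\nabla^{2}L_{k}(Z_{i}^{(k)},\theta)-\nabla^{2}E[L_{k}(Z^{(k)},\theta)]$ simultaneously over the shrinking ball $B_{n}$ and over $k\in\cK_{0}$: the covering/chaining argument must be carried out so that the dimension factor it produces, once multiplied by the gradient rate, stays $o_{P}(n^{-1/2})$ precisely under $\nu_{0}+\nu_{1}<1/2$, and so that the union bound over $k$ costs only an additive $\log K$ that is swamped by $d$. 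For a fixed $k$ this is essentially the finite-sample M-estimation theory of \cite{spokoiny2012parametric,spokoiny2013bernstein}, which can be invoked off the shelf; the contribution here is to make it uniform in $k$ via Bonferroni and to track the rate carefully.
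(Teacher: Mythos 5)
Your proposal is correct and follows essentially the same route as the paper: localize all $\tilde{\theta}_{k}$ into a ball of radius $r_{n}=O(n^{-(1-\nu_{0}-\nu_{1})/2})$ via Proposition \ref{prop: uniform consistent}, control the local linearization error of the score by the finite-sample theory of \cite{spokoiny2013bernstein} (the paper invokes its Proposition 3.1 directly rather than your explicit factorization $R_{k}=D_{k*}^{-2}(\bar{H}_{k}-D_{k*}^{2})\bar{H}_{k}^{-1}g_{k}$, but the probabilistic inputs and the resulting bound $\xi_{n}\asymp r_{n}^{2}+r_{n}\sqrt{d/n_{k}}=o(n^{-1/2})$ are the same), and make it uniform in $k$ by Bonferroni. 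The one step you flag as the main obstacle — uniform control of the empirical Hessian/score process over the shrinking ball — is exactly what the paper delegates to Spokoiny's result under Conditions \ref{cond: loss gradient and global} and \ref{cond: loss hessian}, so no gap remains.
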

\begin{proof}
	For $k\in \cK_{0}$, according to Condition \ref{cond: loss hessian}, it is not hard to verify that the Condition $ED_{2}$ in \citep{spokoiny2013bernstein} is satisfied with $g=u_{2}\sqrt{n_{k}}$ and $\omega = 1/\sqrt{n_{k}}$. Because $B_{\rm L}\leq \min_{k}\lambda_{\rm min}(D_{k*}) \leq \max_{k}\lambda_{\rm max}(D_{k*}) \leq B_{\rm U}$ and 	$\|D_{k}^{2}(\theta) - D_{k*}^{2}\| \leq M_{*}\|\theta - \theta_{0}\|$, we have
	\[
	\begin{aligned}
	\|D_{k*}^{-1}D_{k}^{2}(\theta)D_{k*}^{-1} - I_{d}\|& \leq \|D_{k*}^{-1}\|^{2} \|D_{k}^{2}(\theta) - D_{k*}^{2}\|
	&\leq M_{*} B_{\rm L}^{-3} \|D_{k*}(\theta - \theta_{0})\|,
	\end{aligned}
	\]
	for $k \in \cK_{0}$, where $I_{d}$ is the $d\times d$ identity matrix.
	
	Thus Condition $\cL$ in \cite{spokoiny2013bernstein} is satisfied with $\delta(r) = M_{*}B_{\rm L}^{-3}r$. For $k \in \cK_{0}$, define the event
	\[
	E_{r,t}^{(k)} = \left\{\sup_{\theta \in \Theta_{*}(r)} \left\|\frac{1}{n_{k}}\sum_{i=1}^{n}D_{k*}^{-1}\left\{\nabla L(Z_{i}^{(k)}, \theta) - \nabla L(Z_{i}^{(k)}, \theta_{0})\right\} - D_{k*}(\theta - \theta_{0})\right\|
	\geq \epsilon_{r, t}^{(k)}\right\},
	\]
	where $\Theta_{*}(r) = \|D_{k*}(\theta - \theta_{0})\| \leq r$ and $\epsilon_{r, t}^{(k)} = M_{*}B_{\rm U}^{3}r^{2} + 6 \sigma_{2}r \sqrt{(4p + 2t) / n_{k}}$.
	According to Proposition 3.1 in \cite{spokoiny2013bernstein}, we have 
	\begin{equation}\label{eq: single expansion bound}
	P\left(E_{r,t}^{(k)}\right)\leq \exp(-t)
	\end{equation}
	for $k\in \cK_{0}$ and $t \leq \Delta_{k}^{\prime}$ with $\Delta_{k}^{\prime} = -2p + u_{2}n_{k} / 2$. By \eqref{eq: max rate-diverge K}, there is some $C$ such that 
	\[P\left(\max_{k}\|\tilde{\theta}_{k} - \theta_{k}^{*}\|\geq Cn^{-(1 - \nu_{0} - \nu_{1})/2}\right) \to 0.\]
	Let $r_{n} = B_{\rm U}Cn^{-(1 - \nu_{0} - \nu_{1})/2}$. Then we have
	\[\bigcup_{k\in \cK_{0}}\left\{\tilde{\theta}_{k} \notin \Theta(r_{n})\right\} \subset \left\{\max_{k}\{\|\tilde{\theta}_{k} - \theta_{k}^{*}\|\}\geq Cn^{-\frac{1 - \nu_{0} - \nu_{1}}{2}}\right\}.\]
	This implies 
	\begin{equation}\label{eq: prob localization}
	P\left(\bigcup_{k\in \cK_{0}}\left\{\tilde{\theta}_{k} \notin \Theta(r_{n})\right\}\right)\leq P\left(\max_{k}\{\|\tilde{\theta}_{k} - \theta_{k}^{*}\|\}\geq Cn^{-\frac{1 - \nu_{0} - \nu_{1}}{2}}\right)
	\to 0.
	\end{equation}
	Letting $t_{n} = \min\left\{n^{\nu_{0}},\min_{k}\Delta_{k}^{\prime}\right\}$, we have 
	\begin{equation}\label{eq: union expansion bound}
	P\left(\bigcup_{k\in \cK_{0}}E_{r_{n}, t_{n}}^{(k)}\right) \leq K\exp(-t_{n}) \to 0
	\end{equation}
	according to \eqref{eq: single expansion bound} and the rate conditions on $K$.
	By the definition of $\tilde{\theta}_{k}$, we have $\sum_{i=1}^{n_{k}}\nabla L(Z_{i}^{(k)}, \tilde{\theta}_{k})/n_{k} = 0$. Combining this with \eqref{eq: prob localization} and \eqref{eq: union expansion bound}, we have
	\begin{equation}\label{eq: asymptotically linear}
	P\left(\max_{k\in \cK_{0}}\left\|\frac{1}{n_{k}}\sum_{i=1}^{n}D_{k*}^{-1}\nabla L(Z_{i}^{(k)}, \theta_{0}) + D_{k*}(\tilde{\theta}_{k} - \theta_{0})\right\| \geq \xi_{n}\right) \to 0
	\end{equation}
	where $\xi_{n} = M_{*}B_{\rm U}^{3}r_{n}^{2} + 6 \sigma_{2}r_{n} \sqrt{(4d + 2t_{n}) / n_{k}} = o(1/\sqrt{n})$ because $\nu_{0} + \nu_{1} < 1/2$. Thus
	\begin{equation}\label{eq: uniform asymptotically linear}
	\max_{k\in \cK_{0}}\left\|\frac{1}{n_{k}}\sum_{i=1}^{n}D_{k*}^{-2}\nabla L(Z_{i}^{(k)}, \theta_{0}) + (\tilde{\theta}_{k} - \theta_{0})\right\| = o_{P}\left(\frac{1}{\sqrt{n}}\right),
	\end{equation}
	and this implies the result of the proposition.
	
\end{proof}

	\bibliographystyle{chicago}
	\bibliography{rm}
	\newpage
    \begin{table}
    	\caption{\label{table: sim1}NB and SSE with least squares regression in the presence of biased sources (results are multiplied by 10)}
    	\centering
    	\begin{tabular}{*{12}{c}}
    		\toprule
    		\multirow{2}{*}{Estimator} & \multirow{2}{*}{$n_{*}$}  & \multicolumn{2}{c}{naive} & \multicolumn{2}{c}{oracle} &\multicolumn{2}{c}{\emph{iFusion}}& \multicolumn{2}{c}{$\tilde{\theta}$} & \multicolumn{2}{c}{$\hat{\theta}$} \\
    		& &\small NB &\small SSE &\small NB &\small SSE &\small NB 
    		&\small SSE&\small NB &\small SSE &\small NB &\small SSE\\
    		\midrule
    		\multirow{3}{*}{$d=3, K=10$}& 100 & 9.87 & 0.55 & 0.10 & 1.31 &0.09 & 1.34 & 0.46 & 1.40 & 0.10 & 1.31\\
    		& 200 & 9.85 & 0.40 & 0.04 & 0.90 & 0.04 & 0.92 & 0.37 & 0.98 & 0.04 & 0.90\\
    		& 500 & 9.85 & 0.24 & 0.02 & 0.54 & 0.02 & 0.54 & 0.21 & 0.62 & 0.02 & 0.54\\
    		\hline
    		\multirow{3}{*}{$d=3, K=30$} & 100 & 9.85 & 0.33 & 0.05 & 0.76 & 0.06 & 0.87 & 0.56 & 0.75 & 0.03 & 0.76\\
    		& 200 & 9.85 & 0.21 & 0.01 & 0.50 & 0.01 & 0.57 & 0.42 & 0.49 & 0.02 & 0.50\\
    		& 500 & 9.84 & 0.14 & 0.01 & 0.32 & 0.01 & 0.33 & 0.26 & 0.35 & 0.01 & 0.32\\
    		\hline
    		\multirow{3}{*}{$d=18, K=10$} & 100 & 24.13 & 3.64 & 0.13 & 8.09 & 0.16 & 11.23 & 1.18 & 7.26 & 0.14 & 8.07 \\
    		& 200 & 24.12 & 2.41 & 0.10 & 5.47 & 0.14 & 7.29 & 0.77 & 5.23 & 0.08 & 5.46 \\
    		& 500 & 24.13 & 1.50 & 0.05 & 3.29 & 0.06 & 3.81 & 0.50 & 3.27 & 0.07 & 3.29\\
    		\hline
    		\multirow{3}{*}{$d=18, K=30$} & 100 & 24.13 & 2.14 & 0.07 & 4.78 & 0.16 & 11.37 & 1.44 & 4.00 & 0.11 & 4.75\\
    		& 200 & 24.11 & 1.38 & 0.05 & 3.16 & 0.15 & 7.71 & 1.00 & 2.81 & 0.06 & 3.15\\
    		& 500 & 24.13 & 0.87 & 0.03 & 1.89 & 0.07 & 4.51 & 0.66 & 1.76 & 0.04 & 1.89\\
    		\bottomrule
    	\end{tabular}
    \end{table}
~\\
    \newpage
\begin{table}
	\caption{\label{table: sim2}NB and SSE with least squares regression and no biased sources (results are multiplied by 10)}
	\centering
	\begin{tabular}{*{10}{c}}
		\toprule
		\multirow{2}{*}{Estimator} & \multirow{2}{*}{$n_{*}$}  & \multicolumn{2}{c}{oracle} &\multicolumn{2}{c}{\emph{iFusion}}& \multicolumn{2}{c}{$\tilde{\theta}$} & \multicolumn{2}{c}{$\hat{\theta}$} \\
		& &\small NB &\small SSE &\small NB &\small SSE &\small NB &\small SSE &\small NB &\small SSE\\
		\midrule 
		\multirow{3}{*}{$d=3, K=10$}& 100 & 0.03 & 0.55 & 0.03 & 0.56 & 0.03 & 0.60 & 0.03 & 0.55\\
		& 200 & 0.01 & 0.40 & 0.01 & 0.40 & 0.01 & 0.44 & 0.01 & 0.40\\
		& 500 & 0.00 & 0.24 & 0.00 & 0.24 & 0.00 & 0.26 & 0.00 & 0.24\\
		\hline
		\multirow{3}{*}{$d=3, K=30$} & 100 & 0.01 & 0.33 & 0.02 & 0.46 & 0.01 & 0.35 & 0.01 & 0.33\\
		& 200 & 0.01 & 0.21 & 0.01 & 0.29 & 0.01 & 0.23 & 0.01 & 0.21\\
		& 500 & 0.01 & 0.14 & 0.00 & 0.15 & 0.01 & 0.15 & 0.01 & 0.14\\
		\hline
		\multirow{3}{*}{$d=18, K=10$} & 100 & 0.05 & 3.64 & 0.16 & 10.13 & 0.05 & 3.66 & 0.05 & 3.64\\
		& 200 & 0.04 & 2.41 & 0.09 & 5.05 & 0.04 & 2.44 & 0.04 & 2.41\\
		& 500 & 0.02 & 1.50 & 0.03 & 1.91 &  0.03 & 1.52 & 0.02 & 1.50\\
		\hline
		\multirow{3}{*}{$d=18, K=30$} & 100 & 0.03 & 2.14 & 0.16 & 11.33 & 0.03 & 2.15 & 0.03 & 2.14\\
		& 200 & 0.03 & 1.38 & 0.14 & 7.49 & 0.03 & 1.39 & 0.03 & 1.38\\
		& 500 & 0.02 & 0.87 & 0.05 & 3.75 & 0.02 & 0.88 & 0.02 & 0.87\\
		\bottomrule
	\end{tabular}
\end{table}
~\\
	\newpage
\begin{table}
	\caption{\label{table: sim3}NB and SSE with logistic regression in the presence of biased sources (results are multiplied by 10)}
	\centering
	\begin{tabular}{*{12}{c}}
		\toprule
		\multirow{2}{*}{Estimator} & \multirow{2}{*}{$n_{*}$}  & \multicolumn{2}{c}{naive} & \multicolumn{2}{c}{oracle} &\multicolumn{2}{c}{\emph{iFusion}} &\multicolumn{2}{c}{$\tilde{\theta}$} & \multicolumn{2}{c}{$\hat{\theta}$} \\
		& &\small NB &\small SSE &\small NB &\small SSE &\small NB &\small SSE
		&\small NB &\small SSE &\small NB &\small SSE\\
		\midrule
		\multirow{3}{*}{$d=3, K=10$}& 100 & 4.79 & 2.32 & 0.24 & 2.73 & 0.24 & 3.80 & 0.84 & 2.48 & 0.15 & 2.78\\
		& 200 & 5.89 & 1.89 & 0.13 & 1.83 & 0.18 & 2.34 & 0.58 & 1.88 & 0.09 & 1.89\\
		& 500 & 7.36 & 1.62 & 0.02 & 1.10 & 0.05 & 1.35  & 0.42 & 1.23 & 0.02 & 1.12\\
		\hline
		\multirow{3}{*}{$d=3, K=30$} & 100 & 4.77 & 1.34 & 0.22 & 1.62 & 0.28 & 3.86 & 0.99 & 1.37 & 0.13 & 1.61\\
		& 200 & 5.88 & 1.11 & 0.10 & 1.08 & 0.16 & 2.21 & 0.73 & 0.97 & 0.09 & 1.08\\
		& 500 & 7.37 & 0.84 & 0.06 & 0.65 & 0.03 & 1.20 & 0.49 & 0.67 & 0.06 & 0.65\\
		\hline
		\multirow{3}{*}{$d=18, K=10$} & 100 & 6.19 & 16.47 & 1.97 & 24.68 & 1.90 & 35.18 & 2.54 & 16.44 & 1.48 & 24.57\\
		& 200 & 7.12 & 11.91 & 0.82 & 13.70 & 0.93 & 19.59 & 1.75 & 11.04 & 0.80 & 13.64\\
		& 500 & 9.11 & 8.13 & 0.28 & 8.04 & 0.31 & 11.34 & 1.07 & 7.13 & 0.27 & 8.01\\
		\hline
		\multirow{3}{*}{$d=18, K=30$} & 100 & 6.24 & 9.63 & 1.86 & 14.24 & 1.90 & 35.18 & 2.85 & 9.25 & 1.43 & 13.54\\
		& 200 & 7.15 & 6.96 & 0.78 & 8.03 &
		0.93 & 19.59 & 2.04 & 6.16 & 0.74 & 7.98\\
		& 500 & 9.12 & 4.62 & 0.29 & 4.59 & 0.31 & 11.34 & 1.33 & 3.91 & 0.27 & 4.57\\
		\bottomrule
	\end{tabular}
\end{table}

	\newpage
\begin{table}
	\caption{\label{table: sim4}NB and SSE with logistic regression and no biased sources (results are multiplied by 10)}
	\centering
	\begin{tabular}{*{10}{c}}
		\toprule
		\multirow{2}{*}{Estimator} & \multirow{2}{*}{$n_{*}$}  & \multicolumn{2}{c}{oracle} & \multicolumn{2}{c}{\emph{iFusion}} & \multicolumn{2}{c}{$\tilde{\theta}$} & \multicolumn{2}{c}{$\hat{\theta}$} \\
		& &\small NB &\small SSE &\small NB &\small SSE &\small NB &\small SSE &\small NB &\small SSE\\
		\midrule
		\multirow{3}{*}{$d=3, K=10$}& 100 & 0.16 & 1.17 & 0.22 & 2.84 & 0.10 & 1.28 & 0.09 & 1.24\\
		& 200 & 0.08 & 0.81 & 0.09 & 1.46 & 0.07 & 0.86 & 0.06 & 0.81\\
		& 500 & 0.04 & 0.50 & 0.03 & 0.71 & 0.03 & 0.55 & 0.04 & 0.50\\
		\hline
		\multirow{3}{*}{$d=3, K=30$} & 100 & 0.14 & 0.70 & 0.19 & 3.28 & 0.10 & 0.76 & 0.11 & 0.71\\
		& 200 & 0.07 & 0.48 & 0.13 & 1.76 & 0.05 & 0.52 & 0.06 & 0.48\\
		& 500 & 0.04 & 0.31 & 0.03 & 0.87 & 0.03 & 0.34 & 0.04 & 0.31\\
		\hline
		\multirow{3}{*}{$d=18, K=10$} & 100 & 1.88 & 11.06 & 1.90 & 35.18 & 1.61 & 10.67 & 1.39 & 11.08\\
		& 200 & 0.76 & 6.10 & 0.93 & 19.59 & 0.69 & 6.07 & 0.75 & 6.09\\
		& 500 & 0.28 & 3.50 & 0.31 & 11.34 & 0.25 & 3.54 & 0.28 & 3.50\\
		\hline
		\multirow{3}{*}{$d=18, K=30$} & 100 & 1.84 & 6.43 & 1.90 & 35.18 & 1.55 & 6.16 & 1.55 & 6.28\\
		& 200 & 0.74 & 3.53 & 0.93 & 19.59 & 0.65 & 3.51 & 0.74 & 3.53\\
		& 500 & 0.27 & 2.02 & 0.31 & 11.34 & 0.24 & 2.03 & 0.27 & 2.02\\
		\bottomrule
	\end{tabular}
\end{table}

\begin{table}
	\caption{\label{table: sim5}Bias and SE in Mendelian randomization with invalid instruments (results are multiplied by 10)}
	\centering
	\begin{tabular}{*{8}{c}}
		\toprule
		Estimator & MR-Egger & Weighted Median & IVW & Weighted Mode & RAPS& $\tilde{\theta}$ & $\hat{\theta}$ \\
		\midrule
		Bias & 6.48 & -2.05 & 14.44 & -1.30 & 16.95 & 0.71& -0.26 \\ 
		SE & 4.79 & 1.09 & 1.73 & 82.14 & 2.34 & 1.79 & 1.17\\
		\bottomrule
	\end{tabular}
\end{table}
~\\
	\newpage
	\begin{figure}[h]
		\subfigure[IVW estimation using all datasources for treatment effect in PD: red solid line; estimation for treatment effect in PD produced by $\tilde{\theta}$: green dashed line; estimation for treatment effect in PD produced by $\hat{\theta}$: blue dotted line.]{
			\includegraphics[scale = 0.45]{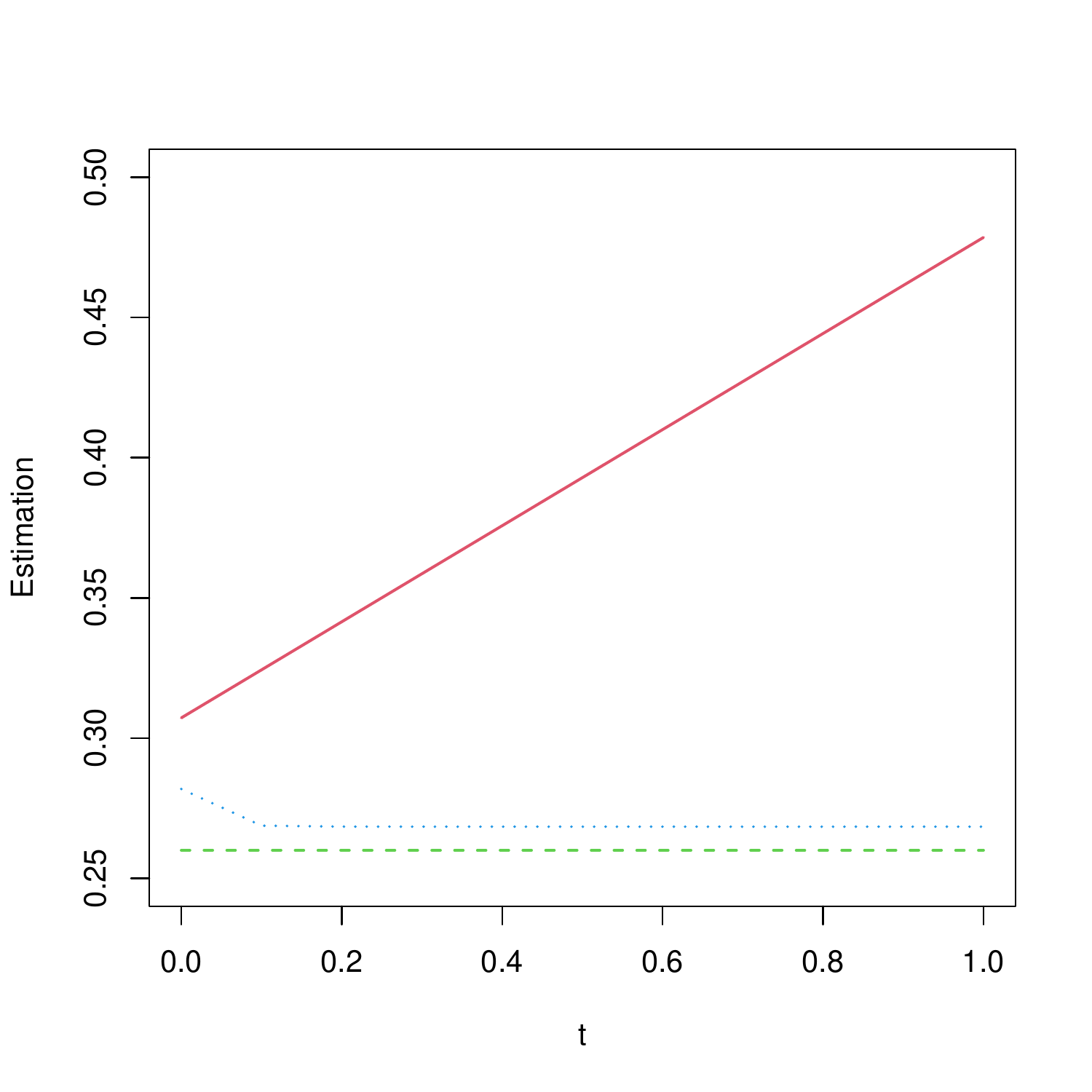}
		}\hspace{10mm}
		\subfigure[IVW estimation using all data sources for treatment effect in AL: red solid line; estimation for treatment effect in AL produced by $\tilde{\theta}$: green dashed line; estimation for treatment effect in AL produced by $\hat{\theta}$: blue dotted line.]{
			\includegraphics[scale = 0.45]{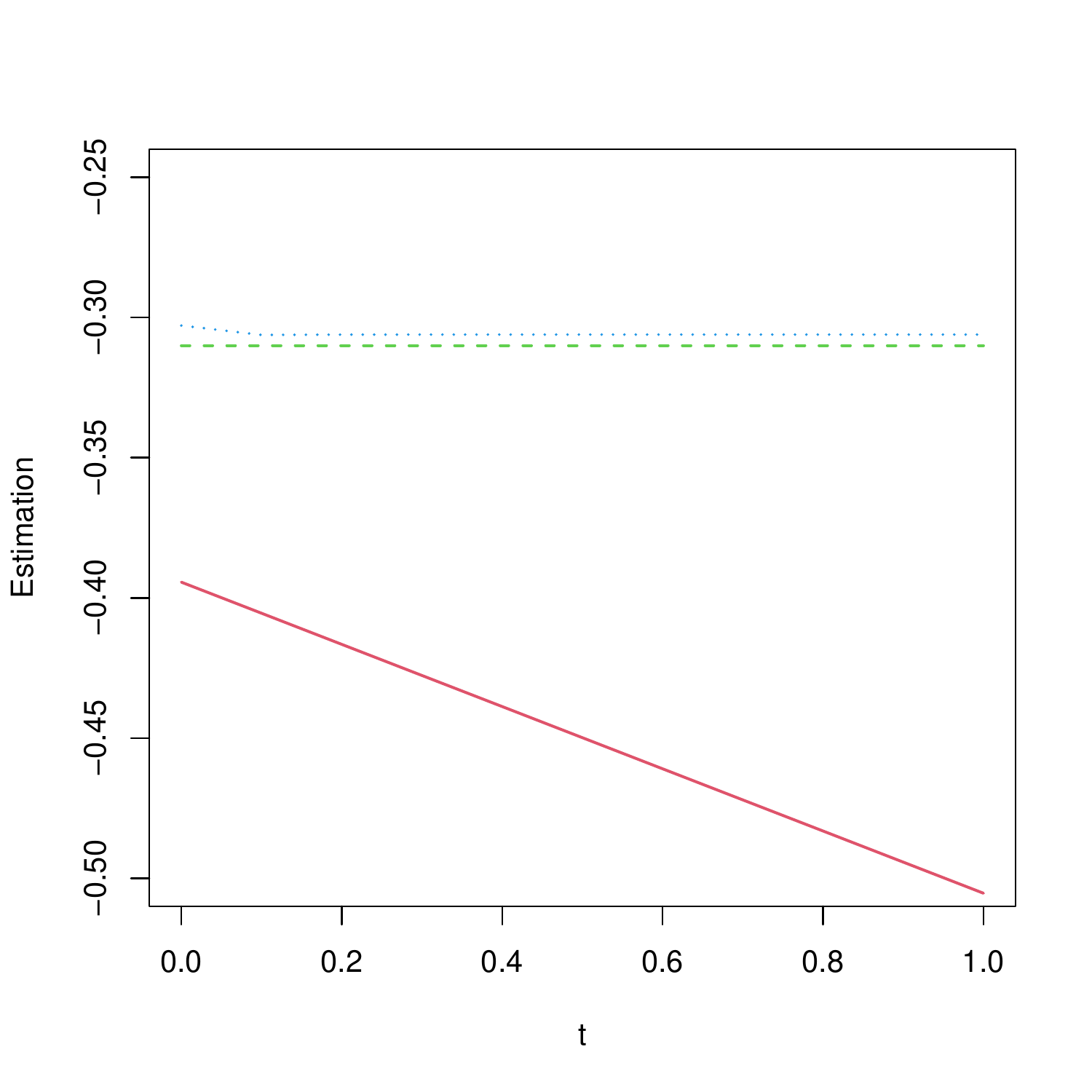}
		}
		\caption{Estimation results under different $t$.}\label{fig: stable}
	\end{figure}
%
\end{document}